\newcolumntype{L}[1]{>{\raggedright\arraybackslash}p{#1}} 
\newcolumntype{C}[1]{>{\centering\arraybackslash}p{#1}} 
\newcolumntype{R}[1]{>{\raggedleft\arraybackslash}p{#1}} 
\long\def\symbolfootnote[#1]#2{\begingroup%
\def\thefootnote{\fnsymbol{footnote}}\footnote[#1]{#2}\endgroup}
\numberwithin{equation}{section}
\newtheorem{Theorem}{Theorem}[section]
\newtheorem{Corollary}[Theorem]{Corollary}
\newtheorem{Proposition}[Theorem]{Proposition}
\newtheorem{Lemma}[Theorem]{Lemma}
\newtheorem{Definition}[Theorem]{Definition}
\newtheorem{example}[Theorem]{Example}
\newtheorem{asu}{}
\newcommand{\CCC}{\mathcal{C}}
\newcommand{\HHH}{\mathcal{H}}
\newcommand{\GGG}{\mathcal{G}}
\newcommand{\III}{\mathcal{I}}
\newcommand{\NNN}{\mathcal{N}}
\newcommand{\SSS}{\mathcal{S}}
\newcommand{\UUU}{\mathcal{U}}
\newcommand{\VVV}{\mathcal{V}}
\newcommand{\coloneqq}{:=}
\newcommand{\eqqcolon}{=:}
\newcommand{\dotcup}{\mathbin{\dot{\cup}}}
\DeclareMathOperator{\Variance}{Var}           
\DeclareMathOperator*{\argmax}{arg\,max}
\DeclareMathOperator{\ancestors}{an}
\DeclareMathOperator{\descendants}{de}
\DeclareMathOperator{\nondescendants}{nd}
\DeclareMathOperator{\parents}{pa}
\newcommand{\indep}{\mathrel{{\perp}\hspace*{-0.6em}{\perp}}}
\newcommand{\given}{\mathrel{|}}
\newcommand{\condindep}[3]{#1 \indep #2 \given #3}
\DeclareMathOperator{\diag}{diag}
\begin{document}

\title{\textbf{Representing sparse Gaussian DAGs as sparse R-vines allowing for non-Gaussian dependence}}
  \author{Dominik M\"uller\thanks{
  		Dominik M\"uller is Ph.D. student at the Department of Mathematics, Technische Universit\"at M\"unchen,
  		Boltzmannstra\ss{}e 3, 85748 Garching, Germany (E-mail: \href{mailto:dominik.mueller@ma.tum.de}{dominik.mueller@ma.tum.de})}\hspace{.2cm}
  	and
  	Claudia Czado\thanks{
  		Claudia Czado is Associate Professor at the Department of Mathematics, Technische Universit\"at M\"unchen,
  		Boltzmannstra\ss{}e 3, 85748 Garching, Germany (E-mail: \href{mailto:cczado@ma.tum.de}{cczado@ma.tum.de})} \\
  }
  \maketitle


\begin{abstract}
	Modeling dependence in high dimensional systems has become an increasingly important topic. Most approaches rely on the assumption of a multivariate Gaussian distribution such as statistical models on directed acyclic graphs (DAGs). They are based on modeling conditional independencies and are scalable to high dimensions. In contrast, vine copula models accommodate more elaborate features like tail dependence and asymmetry, as well as independent modeling of the marginals. This flexibility comes however at the cost of exponentially increasing complexity for model selection and estimation. We show a novel connection between DAGs with limited number of parents and truncated vine copulas under sufficient conditions. This motivates a more general procedure exploiting the fast model selection and estimation of sparse DAGs while allowing for non-Gaussian dependence using vine copulas. We demonstrate in a simulation study and using a high dimensional data application that our approach outperforms standard methods for vine structure estimation.
\end{abstract}

\noindent%
{\it Keywords:}  Graphical Model, Dependence Modeling, Vine Copula, Directed Acyclic Graph


\section{Introduction}\label{sec:introduction}
In many areas of natural and social sciences, high dimensional data are collected for analysis. For all these data sets the dependence between the variables in addition to the marginal behaviour needs to be taken into account. While there exist many easily applicable univariate models, dependence models in $d$ dimensions often come with high complexity. Additionally, they put restrictions on the associated marginal distributions, such as the  multivariate Student-t and Gaussian distribution. The latter is also the backbone of statistical models on directed acyclic graphs (DAGs) or Bayesian Networks (BNs), see \citet{Lauritzen1996} and \citet{KollerFriedman2009}. Based on the Theorem of \citet{Sklar1959}, the pair copula construction (PCC) of \citet{Aasetal2009} allows for more flexible $d$ dimensional models. More precisely, the building blocks are the marginal distributions and (conditional) bivariate copulas which can be chosen independently. The resulting models, called \textit{regular vines} or \textit{R-vines} \citep{kuro:joe:2010} are specified by a sequence of $d-1$ linked trees, the R-vine structure. The edges of the trees are associated with bivariate parametric copulas. When the trees are specified by star structures we speak of C-vines, while line structures give rise to D-vines. However, parameter estimation and model selection for R-vine models can be cumbersome, see \citet{cza:2010} and \citet{jaworski2012}. In particular, the sequential approach of \citet{dissmann-etal} builds the R-vine structure from the first tree to the higher trees. Since choices in lower trees put restrictions on higher trees, the resulting model might not be overall optimal in terms of goodness-of-fit. Thus, \citet{dissmann-etal}
model the stronger (conditional) pairwise dependencies in lower trees compared to weaker ones. To reduce model complexity, pair copulas in the trees $k+1$ to $d-1$ can be set to the independence copula resulting in $k$-\textit{truncated} R-vines \citep{brech-cc-2012a}. Another sequential model selection approach is the Bayesian approach of \citet{GruberCzado20152}, while 
\citet{GruberCzado2015} contains a full Bayesian analysis. Both methods are computationally demanding and thus not scalable to high dimensions.\\
Since DAGs are scalable to high dimensions, attempts were made to relate DAGs to R-vines. For example, \citet{bauer:2011} and \citet{BauerCzado2016} provide a PCC to the density factorization of a DAG. While this approach maintains the structure of the DAG, some of the conditional distribution functions in the PCC can not be calculated recursively and thus require high dimensional integration. This limits the applicability in high dimensions dramatically. \citet{Pircalabelu2015} approximate each term in the DAG density factorization by a quotient of a C-vine and a D-vine. However, this yields in general no consistent joint distribution. Finally, \citet{NIPS2010_3956} uses copulas to generalize the density factorization of a DAG to non-Gaussian dependence by exchanging conditional normal densities with copula densities. Yet, the dimension of these copulas is not bounded, inheriting the drawbacks of higher dimensional copula models, i.\,e.\ lack of flexibility and high computational effort.\\
Our goal is to ultimately use the multitude of fast algorithms for estimating sparse Gaussian DAGs in high dimensions to efficiently calculate sparse R-vines. Thus, once a DAG has been selected, we compute an R-vine which represents a similar decomposition of the density as the DAG. This new decomposition allows us to replace Gaussian copula densities and marginals by non-Gaussian pair copula families and arbitrary marginals. We attain this without the drawback of possible higher-dimensional integration as in the approach of \citet{BauerCzado2016}. However, we still exploit conditional independences described by the DAG facilitating parsimony of the R-vine. To attain this, we first build a theoretically sound bridge between DAG with at most $k$ parents, called $k$-DAGs and $k$-truncated R-vines. Since the class of $k$-truncated R-vines is much smaller than the class of $k$-DAGs, such an appealing exact representation will not exist for most DAGs. Yet, we can prove under sufficient conditions when it does and determine special classes of $k$-DAGs which have expressions as $k$-truncated C- and D-vines. Next, we give strong necessary conditions on arbitrary $k$-DAGs to check whether an exact representation as $k$-truncated R-vine exists. If not, we obtain a smallest possible truncation level $k^\prime > k$. All the previous results motivate a more general procedure to find sparse R-vines based on $k$-DAGs, attaining our final goal, to find a novel approach to estimate high dimensional sparse R-vines. The presented method is also independent of the sequential estimation of pair copula families and parameters as used by \citet{dissmann-etal}. Thus, error propagation in later steps caused by misspecification in early steps is prevented. By allowing the underlying DAG model to have at most $k$ parents, we control for a specific degree of sparsity.
The paper is organized as follows: Sections \ref{sec:modelingvines} and \ref{sec:dags} introduce R-vines and DAGs, respectively. Section \ref{sec:representation} contains the main result where we first demonstrate that each (truncated) R-vine can be represented by a DAG non-uniquely. The converse also holds true for $1$-DAGs, i.\,e.\ \textit{Markov trees}. We prove a representation of DAGs as R-vines under sufficient conditions and propose necessary conditions. Afterwards, we develop a general procedure to compute sparse R-vines representing $k$-DAGs. There, we propose a novel technique combining several DAGs. In Section \ref{sec:simstudy}, a high dimensional simulation study shows the efficiency of our approach. We conclude with a high dimensional data application in Section \ref{sec:application} and summarize our contribution. Additional results are contained in an online supplement.
\section{Dependence Modeling with R-vines}\label{sec:modelingvines}
Consider a random vector $\mathbf{X} = \left(X_1,\ldots,X_d\right)$ with joint density function $f$ and joint distribution function $F$. The famous Theorem of \citet{Sklar1959} allows to separate the univariate marginal distribution functions $F_1,\ldots,F_d$ from the dependency structure such that $F\left(x_1,\ldots,x_d\right) = \CCC\left(F_1\left(x_1\right),\ldots,F_d\left(x_d\right)\right)$,
where $\CCC$ is an appropriate $d$-dimensional copula. For continuous $F_i$, $\CCC$ is unique. The corresponding joint density function $f$ is given as
\begin{equation}\label{eq:copuladensity}
	f\left(x_1,\ldots,x_d\right) = \left[\prod_{i=1}^d~f_i\left(x_i\right)\right] \times c\left(F_1\left(x_1\right),\ldots,F_d\left(x_d\right)\right),
\end{equation}
where $c$ is a $d$-dimensional copula density. This representation relies on an appropriate $d$-dimensional copula, which might be cumbersome and analytically not tractable. As shown by \citet{Aasetal2009}, $d$-dimensional copula densities may be decomposed into $d\left(d-1\right)/2$ bivariate (conditional) copula densities. Its backbone, the \textit{pair copulas} can flexibly represent important features like positive or negative tail dependence or asymmetric dependence. The \textit{pair-copula-construction} (PCC) in $d$ dimensions itself is not unique. However, the different possible decompositions may be organized to represent a valid joint density using \textit{regular vines} (R-vines), see \citet{BedfordCooke2001} and \citet{BedfordCooke2002}. To construct a statistical model, a \textit{vine tree sequence} stores which bivariate (conditional) copula densities are present in the presentation of a $d$-dimensional copula density. 
More precisely, such a sequence in $d$ dimensions is defined by $\VVV = \left(T_1,\ldots,T_{d-1}\right)$ such that
\vspace{-2mm}
\begin{enumerate}[label={(\roman*})]
	\itemsep0em 
	\item $T_1$ is a tree with nodes $V_1=\left\{1,\ldots,d\right\}$ and edges $E_1$,
	\item for $i \ge 2$, $T_i$ is a tree with nodes $V_i = E_{i-1}$ and edges $E_i$,
	\item if two nodes in $T_{i+1}$ are joined by an edge, the corresponding edges in $T_i$ must share a common node (proximity condition).
	\label{eq:proximitycondition}
\end{enumerate}
\vspace{-2mm}
Since edges in a tree $T_{i-1}$ become nodes in $T_i$, denoting edges in higher order trees is complex. For example, edges $\left\{a,c\right\}, \left\{a,b\right\} \in E_1$ are nodes in $T_2$ and an edge in $T_2$ between these nodes is denoted $\left\{\left\{c,a\right\},\left\{a,b\right\}\right\} \in E_2$. To shorten this set formalism, we introduce the following. For a node $f \in V_i$ we call a node $e \in V_{i-1}$ an \textit{m-child} of $f$ if $e$ is an element of $f$. If $e \in V_1$ is reachable via inclusions $e \in e_1 \in \ldots \in f$, we say $e$ is an \textit{m-descendant} of $f$. We define the \textit{complete union} $A_e$ of an edge $e$ by $A_e \coloneqq \left\{j \in V_1|\exists \ e_1 \in E_1,\ldots,e_{i-1}\in E_{i-1}: j \in e_1 \in \ldots \in e_{i-1} \in e\right\}$ where the \textit{conditioning set} of an edge $e=\left\{a,b\right\}$ is defined as $D_e \coloneqq A_a \cap A_b$ and
$C_e \coloneqq C_{e,a} \cup C_{e,b} \mbox{ with } C_{e,a} \coloneqq A_a \setminus D_e \mbox{ and } C_{e,b} \coloneqq A_b \setminus D_e$ is the \textit{conditioned set}. Since $C_{e,a}$ and $C_{e,b}$ are singletons, $C_e$ is a doubleton for each $e,a,b$, see \citet[p.\ 96]{KurowickaCooke2006}. For edges $e \in E_i,\ 1 \le i \le d-1$, we define the set of bivariate copula densities corresponding to $j\left(e\right),\ell\left(e\right)|D\left(e\right)$ by $\mathcal{B}\left(\VVV\right) = \left\{c_{j\left(e\right),\ell\left(e\right);D\left(e\right)}|e \in E_i, 1 \le i \le d-1\right\}$ with the conditioned set $j\left(e\right),\ell\left(e\right)$ and the conditioning set $D\left(e\right)$. Denote sub vectors of $\mathbf{x}=\left(x_1,\ldots,x_d\right)^T$ by $\mathbf{x}_{D\left(e\right)} \coloneqq \left(\mathbf{x}_j\right)_{j \in D\left(e\right)}$. With the PCC, Equation \eqref{eq:copuladensity} becomes
\begin{equation}\label{eq:vinedensity}
	f\left(x_1,\ldots,x_d\right) = \left[\prod_{i=1}^{d}~f_i\left(x_i\right)\right]
	\times \left[\prod_{i=1}^{d-1}~\prod_{e\in E_i}~c_{j\left(e\right),\ell\left(e\right);D\left(e\right)}\left(F\left(x_{j\left(e\right)}|x_{D\left(e\right)}\right),F\left(x_{\ell\left(e\right)}|x_{D\left(e\right)}\right)\right)\right].
\end{equation}
By referring to bivariate \textit{conditional} copulas, we implicitly take into account the \textit{simplifying assumption}, which states that the two-dimensional conditional copula density $c_{13;2}\left(F_{1|2}\left(x_1|x_2\right),F_{3|2}\left(x_3|x_2\right);x_2\right)$
is independent of the conditioning value $X_2 = x_2$, see \citet{stoeber-vines} for a detailed discussion. Henceforth, in our considerations we assume the simplifying assumption. We define the parameters of the bivariate copula densities $\mathcal{B}\left(\VVV\right)$ by $\theta\left(\mathcal{B}\left(V\right)\right)$. This determines the R-vine copula $\left(\VVV,\mathcal{B}\left(\VVV\right),\theta\left(\mathcal{B}\left(\VVV\right)\right)\right)$. 
A convenient way to represent R-vines uses lower triangular $d\times d$ matrices, see \citet{dissmann-etal}. 

\begin{example}[R-vine in 6 dimensions]\label{ex:exvine1}
	The R-vine tree sequence in Figure \ref{fig:exvine1:1} is given by the R-vine matrix $M$ as follows. Edges in $T_1$ are pairs of the main diagonal and the lowest row, e.\,g.\ $\left(2{,}1\right)$, $\left(6{,}2\right)$, $\left(3{,}6\right)$, etc. $T_2$ is described by the main diagonal and the second last row conditioned on the last row, e.\,g.\ $6{,}1|2$; $3{,}2|6$, etc. Higher order trees are characterized similarly. For a column $p$ in $M$, only entries of the main-diagonal right of $p$, i.\,e.\ values in $M_{p+1,p+1},\dots,M_{d,d}$ are allowed and no entry must occur more than once in a column.\\
	\begin{tabular}{cc}
		\begin{minipage}[l]{0.25\textwidth}
			\begin{equation*}
				\left(
				\begin{array}{cccccc}
					4&&&&&\\
					1&5&&&&\\
					3&1&3&&&\\
					6&3&1&6&&\\
					2&6&2&1&2&\\
					5&2&6&2&1&1
				\end{array}
				\right)
			\end{equation*}
			\begin{center}
				R-vine matrix $M$
			\end{center}
		\end{minipage}
		&
		\begin{minipage}[r]{0.75\textwidth}
			\centering
			\includegraphics[width=0.35\textwidth]{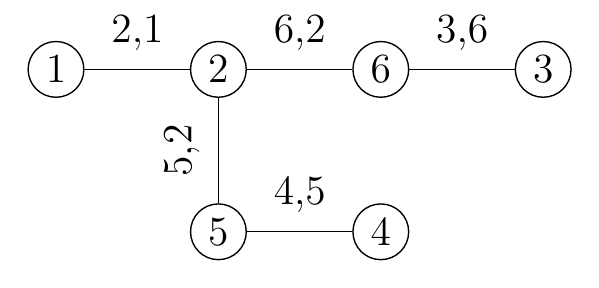}
			\includegraphics[width=0.35\textwidth]{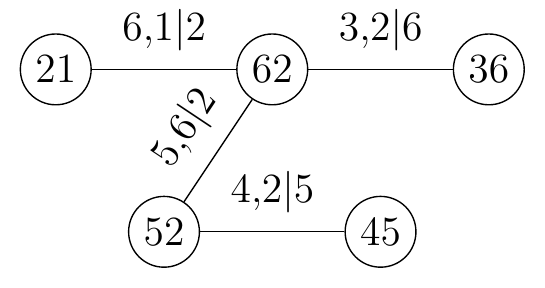}\\
			\includegraphics[width=0.35\textwidth]{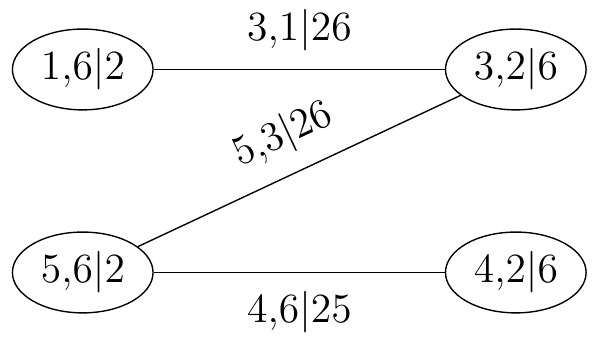}
			\includegraphics[width=0.25\textwidth]{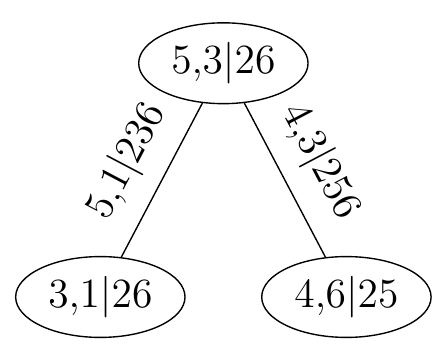}
			\includegraphics[width=0.3\textwidth]{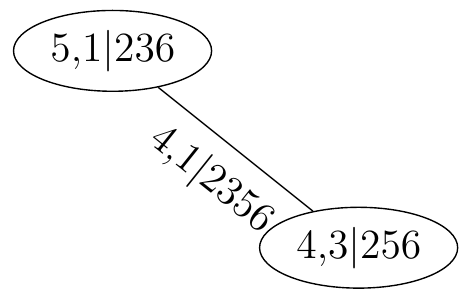}
			\captionof{figure}{R-vine trees $T_1, T_2$ (top), $T_3, T_4, T_5$ (bottom), left to right.}
			\label{fig:exvine1:1}
		\end{minipage}
	\end{tabular}
	\\[0.5cm]
	See Table \ref{table:exvine1:1} for a non exhaustive list of \textit{m-children} and \textit{m-descendants} of edges in the R-vine trees $T_1,T_2,T_3$. For the complete list, see Appendix \ref{sec:appendix_examples}, Example \ref{ex:appendix_example}.\\
	$$
	\begin{array}{p{0.04\textwidth}p{0.48\textwidth}p{0.18\textwidth}p{0.22\textwidth}}
	tree  & edge $e$ & m-children of $e$ & m-descendants of $e$\\
	\hline \hline
	$T_1$ & $2{,}1$ & $1{,}2$ & $1{,}2$ \\
	\hline
	$T_2$ & $6{,}1|2=\left\{\left\{2,1\right\},\left\{6,2\right\}\right\}$ & $\left\{2,1\right\};\left\{6,2\right\}$ & $1{,}2{,}6$ \\
	& $3{,}2|6=\left\{\left\{3,6\right\},\left\{6,2\right\}\right\}$ & $\left\{6,2\right\};\left\{3,6\right\}$ & $2{,}6{,}3$ \\
	\hline
	$T_3$ &$ 3{,}1|26=\left\{\left\{\left\{2,1\right\},\left\{6,2\right\}\right\},\left\{\left\{2,1\right\},\left\{6,2\right\}\right\}\right\} $&$ 6{,}1|2$; $ 3{,}2|6 $& $1{,}2{,}6{,}3$ \\
	\hline
	\end{array}
	$$
	\captionof{table}{Exemplary edges, m-children and m-descendants in the R-vine trees $T_1,T_2,T_3$.}
	\label{table:exvine1:1}
	With $c_{i,j|k} \coloneqq c_{i,j;k}\left(F\left(x_i|x_k\right),F\left(x_j|x_k\right)\right)$, $\mathbf{x} = \left(x_1,\dots,x_{6}\right)$, $f_i \coloneqq f_i(x_i)$, the density becomes
	\begin{equation*}
		\begin{aligned}
			f\left(\mathbf{x}\right) = & f_1 \times f_2 \times f_3 \times f_4 \times f_5 \times f_6 \times c_{2,1} \times c_{6,2} \times c_{3,6} \times c_{5,2}\times c_{4,5} \times c_{6,1|2} \times c_{3,2|6} \\
			&\times c_{5,6|2} \times c_{4,2|5} \times c_{3,1|26} \times c_{5,3|26} \times c_{4,6|25} \times c_{5,1|236} \times c_{4,3|256}  \times c_{4,1|2356}.
		\end{aligned}
	\end{equation*}
\end{example}
As we model $d\left(d-1\right)/2$ edges, the model complexity is increasing quadratically in $d$. We can ease this by only modeling the first $k$ trees and assuming (conditional) independence for the remaining $d-1-k$ trees. Thus, the model complexity increases linearly. This \textit{truncation} is discussed in detail by \citet{brech-cc-2012a}. Generally, for $k \in \left\{1,\ldots,d-2\right\}$, a $k$-truncated R-vine is an R-vine where each pair copula density $c_{j\left(e\right),\ell\left(e\right);D\left(e\right)}$ assigned to an edge $e \in \left\{E_{k+1},\ldots,E_{d-1}\right\}$ is represented by the independence copula density $c^\perp\left(u_1,u_2\right) \equiv 1$. In a $k$-truncated R-vine, Equation \eqref{eq:vinedensity} becomes
\begin{equation*}
	f\left(x_1,\ldots,x_d\right) = \left[\prod_{i=1}^{d}~f_i\left(x_i\right)\right] \times \left[\prod_{i=1}^{k}~\prod_{e\in E_i}~c_{j\left(e\right),\ell\left(e\right);D\left(e\right)}\left(F\left(x_{j\left(e\right)}|x_{D\left(e\right)}\right),F\left(x_{\ell\left(e\right)}|x_{D\left(e\right)}\right)\right)\right].
\end{equation*}
In Example \ref{ex:exvine1}, we obtain a $k$-truncated R-vine by setting $c_{i,j|D} = c^\perp$ whenever $\left|D\right| \ge k$. The most complex part of estimating an R-vine copula is the structure selection. To solve this, \citet{dissmann-etal} \label{page:dissmann} suggest to calculate a maximum spanning tree with edge weights set to absolute values of empirical Kendall's $\tau$. The intuition is to model strongest dependence in the first R-vine trees. After selecting the first tree, pair copulas and parameters are chosen by maximum likelihood estimation for each edge. Based on the estimates, pseudo-observations are derived from the selected pair-copulas. Kendall's $\tau$ is estimated for these pseudo-observations to find a maximum spanning tree by taking into account the proximity condition. Thus, higher order trees are dependent on the structure, pair copulas and parameters of lower order trees. Hence, this sequential greedy approach is not guaranteed to lead to optimal results in terms of e.\,g.\ log-likelihood, AIC or BIC. \citet{GruberCzado2015} developed a Bayesian approach which allows for simultaneous selection of R-vine structure, copula family and parameters to overcome the disadvantages of sequential selection. However, this approach comes at the cost of higher computational effort and is not feasible in high dimensional set-ups, i.\,e.\ for more than ten dimensions.
\section{Graphical models}\label{sec:dags}
\subsection{Graph theory}\label{subsec:graphtheory}
We introduce necessary graph theory from \citet[pp.\ 4--7]{Lauritzen1996}. A comprehensive list with examples is given in Appendix \ref{sec:appendix_definitionsgraphtheory}. Let $V \neq \emptyset$ be a finite set, the \textit{node set} and let $E \subseteq \left\{\left(v,w\right)|\left(v,w\right) \in V \times V \mbox{ with } v \neq w\right\}$ be the \textit{edge set}. We define a graph $\GGG = \left(V,E\right)$ as a pair of node set and edge set. An edge $\left(v,w\right)$ is undirected if $\left(v,w\right)\in E \Rightarrow \left(w,v\right)\in E$, and $\left(v,w\right)$ is \textit{directed} if $\left(v,w\right)\in E \Rightarrow \left(w,v\right)\notin E$. A directed edge $\left(v,w\right)$ is called an \textit{arrow} and denoted $v \rightarrow w$ with $v$ the \textit{tail} and $w$ the \textit{head}. The existence of a directed edge between $v$ and $w$ without specifying the orientation is denoted by $v \leftrightarrow w$ and no directed edge between $v$ and $w$ regardless of orientation is denoted by $v \nleftrightarrow w$. If a graph only contains undirected edges, it is an \textit{undirected} graph and if it contains only directed edges, it is a \textit{directed} graph. We will not consider graphs with both directed and undirected edges. A weighted graph is a graph $\GGG=\left(V,E\right)$ with \textit{weight function} $\mu$ such that $\mu: E \rightarrow \mathbb{R}$. By replacing all arrows in a directed graph $\GGG$ by undirected edges, we obtain the \textit{skeleton} $\GGG^s$ of $\GGG$. Let $\GGG = \left(V,E\right)$ be a graph and define a \textit{path} of length $k$ from nodes $\alpha$ to $\beta$ by a sequence of distinct nodes $\alpha = \alpha_0, \ldots ,\alpha_k = \beta$ such that $\left(\alpha_{i-1},\alpha_i\right) \in E$ for $i=1,\ldots,k$. This applies to both undirected and directed graphs. A \textit{cycle} is defined as path with $\alpha=\beta$. A graph without cycles is called \textit{acyclic}. In a directed graph, a \textit{chain} of length $k$ from $\alpha$ to $\beta$ is a sequence of distinct nodes $\alpha=\alpha_0, \ldots, \alpha_k=\beta$ with $\alpha_{i-1} \rightarrow \alpha_i$ or $\alpha_{i} \rightarrow \alpha_{i-1}$ for $i=1,\ldots,k$. Thus, a directed graph may contain a chain from $\alpha$ to $\beta$ but no path from $\alpha$ to $\beta$. A graph $\HHH=\left(W,F\right)$ is a \textit{subgraph} of $\GGG = \left(V,E\right)$ if $W \subseteq V$ and $F \subseteq E$. We speak of an \textit{induced subgraph} $\HHH=\left(W,F\right)$ if $W \subseteq V$ and $F=\left\{\left(v,w\right)|\left(v,w\right) \in W \times W \mbox{ with } v \neq w\right\} \cap E$, i.\,e.\ $\HHH$ contains a subset of nodes of $\GGG$ and all the edges of $\GGG$ between these nodes. If $\GGG=\left(V,E\right)$ is undirected and a path from $v$ to $w$ exists for all $v,w \in V$, we say that $\GGG$ is \textit{connected}. If $\GGG=\left(V,E\right)$ is directed we say that $\GGG$ is weakly connected if a path from $v$ to $w$ exists for all $v,w \in V$ in the skeleton $\GGG^s$ of $\GGG$. If an undirected graph is connected and acyclic, it is a \textit{tree} and has $d-1$ edges on $d$ nodes. For $\GGG$ undirected, $\alpha,\beta \in V$, a set $C \subseteq V$ is said to be an $\left(\alpha,\beta\right)$ \textit{separator} in $\GGG$ if all paths from $\alpha$ to $\beta$ intersect $C$. $C$ is said to \textit{separate} $A$ from $B$ if it is an $\left(\alpha,\beta\right)$ separator for every $\alpha \in A$, $\beta \in B$. 

\subsection{Directed acyclic graphs (DAGs)}\label{subec:dags}
Let $\GGG=\left(V,E\right)$ be a directed acyclic graph (DAG). If there exists a path from $w$ to $v$, we write $w >_\GGG v$. Denote a disjoint union by $\dotcup$, and define the \textit{parents} $\parents\left(v\right) \coloneqq \left\{w \in V|w \rightarrow v\right\}$, \textit{ancestors} $\ancestors\left(v\right) \coloneqq \left\{w \in V| w >_\GGG v \right\}$, \textit{descendants} $\descendants\left(v\right) \coloneqq \left\{w \in V|v >_\GGG w\right\}$ and \textit{non-descendants} $\nondescendants\left(v\right) \coloneqq V \setminus \left(\descendants\left(v\right) \dotcup \parents\left(v\right) \dotcup v\right)$. We see $V = v \dotcup \parents \left(v\right) \dotcup \descendants\left(v\right) \dotcup \nondescendants\left(v\right)$ for all $v \in V$. $A \subseteq V$ is \textit{ancestral} if $\parents\left(v\right) \subseteq A$ for all $v \in A$, with $An\left(A\right)$ the smallest ancestral set containing $A$. Let $k_v \coloneqq \left|\parents\left(v\right)\right| \mbox{ and } k \coloneqq \max_{v \in V}k_v$ for all $v \in V$. A DAG with at most $k$ parents is called $k$-DAG. For each DAG $\GGG$ there exists a \textit{topological ordering}, see \cite{AnderssonPerlman1998}. This is formalized by an ordering function $\eta$. Let $V = \left\{v_1,\ldots,v_d\right\}$ and $\eta:\ V \rightarrow \left\{1,\ldots,d\right\}$ such that for each pair $v_i{,}v_j \in V$ we have $\eta\left(v_i\right) < \eta\left(v_j\right) \Rightarrow v_j \not >_\GGG v_i$, i.\,e.\ there is no path from $v_j$ to $v_i$ in $\GGG$. An ordering $\eta$ always exists, but is not necessarily unique. By $\left\{\eta^{-1}\left(1\right),\ldots,\eta^{-1}\left(d\right)\right\}$, we refer to $V$ ordered increasingly according to $\eta$ and by $\left\{\eta^{-1}\left(d\right),\ldots,\eta^{-1}\left(1\right)\right\}$ we refer $V$ ordered decreasingly according to $\eta$. A \textit{v-structure} in $\GGG$ is a triple of nodes $\left(u,v,w\right) \in V$ where $u \rightarrow v$ and $w \rightarrow v$ but $u \nleftrightarrow w$. The \textit{moral graph} $\GGG^m$ of a DAG $\GGG$ is the skeleton $\GGG^s$ of $\GGG$ with an additional undirected edge $\left(u,w\right)$ for each \textit{v-structure} $\left(u,v,w\right)$. As for undirected graphs, separation can also be defined for DAGs, called \textit{d-separation}\label{page:dsep}. Let $\GGG = \left(V,E\right)$ be an DAG. A chain $\pi$ from $a$ to $b$ in $\GGG$ is \textit{blocked} by a set of nodes $S$, if it contains a node $\gamma \in \pi$ such that either
\begin{enumerate}
	\itemsep0em 
	\item $\gamma \in S$ and arrows of $\pi$ do not meet head-to-head at $\gamma$ (i.\,e.\ at $\gamma$ there is no v-structure with nodes of $\pi$), or
	\item $\gamma \notin S$ nor has $\gamma$ any descendants in $S$, and arrows of $\pi$ do meet head-to-head at $\gamma$ (i.\,e.\ at $\gamma$ there is a v-structure with nodes of $\pi$).
\end{enumerate}
A chain that is not blocked by $S$ is \textit{active}. Two subsets $A$ and $B$ are \textit{d-separated} by $S$ if all chains from $A$ to $B$ are blocked by $S$.

\begin{example}[DAG in 6 dimensions]\label{ex:exdag1}
	Table \ref{table:exdag1} displays the topological ordering function, parents, descendants and non-descendants for all $v \in V$ of the DAG $\GGG_1$ in Figure \ref{fig:exdag1}.
	\begin{tabular}{cc}
		\begin{minipage}[l]{0.25\textwidth}
			\centering
			\includegraphics[width=0.9\textwidth, trim={0.0cm 0.0cm 0.0cm 0.0cm},clip]{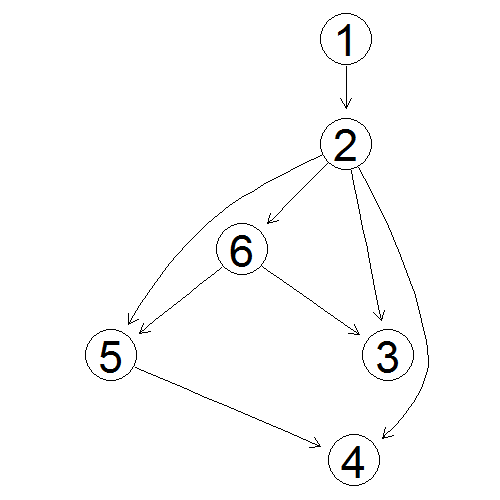}
			\captionof{figure}{DAG $\GGG_1$}
			\label{fig:exdag1}
		\end{minipage}
		&
		\begin{minipage}[r]{0.75\textwidth}
			\begin{table}[H]
				\centering
				\begin{tabular}{rrrrr}
					$v$ & $\eta\left(v\right)$ & $\parents\left(v\right)=\left\{w_1^v,w_2^v\right\}$ & $\descendants\left(v\right)$ & $\nondescendants\left(v\right)$ \\
					\hline \hline
					1 & 1 & - & 2,3,4,5,6 & - \\
					2 & 2 & 1 & 3,4,5,6 & - \\
					3 & 4 & 6,2 & - & 1,4,5\\
					4 & 6 & 5,2 & - & 1,3,6 \\
					5 & 5 & 6,2 & 4 & 1,3\\
					6 & 3 & 2 & 3,4,5 & 1\\
					\hline
				\end{tabular}
				\captionof{table}{Properties of DAG $\GGG_1$.}
				\label{table:exdag1}
			\end{table}
		\end{minipage}
	\end{tabular}
	\\[0.5cm]
	A high value of $\eta\left(v\right)$ corresponds to more non-descendants. $\eta$ is not unique since $3 \nleftrightarrow 5$ in $\GGG_1$. Hence, a topological ordering for $\GGG_1$ is also $\left\{\eta^{-1}\left(1\right),\ldots,\eta^{-1}\left(6\right)\right\}=\left\{1,2,6,5,3,4\right\}$.
\end{example}
\subsection{Markov properties on graphs}\label{subsec:markovproperties}
Let $V=\left\{1,\ldots,d\right\}$ and consider a random value $\mathbf{X}=\left(X_1,\dots,X_d\right) \in \mathbb{R}^d$ distributed according to a probability measure $P$. For $I \subseteq V$ define $\mathbf{X}_I \coloneqq \left(X_v\right)_{v \in I}$ and denote the conditional independence of the random vectors $\mathbf{X}_A$ and $\mathbf{X}_B$ given $\mathbf{X}_C$ by $\condindep {A}{B}{C}$. Let $\GGG= \left(V,E\right)$ be a DAG, then $P$ obeys the \textit{local directed Markov property} according to $\GGG$ if
\begin{equation}\label{eq:directedlocalmarkov}
	\condindep{v}{\nondescendants\left(v\right)}{\parents\left(v\right)} \mbox{ for all } v \in V.
\end{equation}
\begin{example}[Example \ref{ex:exdag1} cont.]
	The local directed Markov property \eqref{eq:directedlocalmarkov} for the DAG $\GGG_1$ in Figure \ref{fig:exdag1} gives $\condindep {4}{1,3,6}{2,5}$; $\condindep {5}{1,3}{2,6}$; $\condindep {3}{1,4,5}{2,6}$ and $\condindep {6}{1}{2}$.
\end{example}

From \citet[p.\ 51]{Lauritzen1996}, $P$ has the local directed Markov property according to $\GGG$ if and only if it has the \textit{global directed Markov property} according to $\GGG$, which states that for $A,B,C \subseteq V$ we have that $\condindep ABC \mbox{ if } A \mbox{ and } B \mbox{ are separated by } C \mbox{ in } \left(\GGG_{An\left(A\dotcup B \dotcup C\right)}\right)^m.$ Thus, inferring conditional independences using this property requires undirected graphs. To use directed graphs, we can employ the d-separation. \citet[p.\ 48]{Lauritzen1996} showed that for a DAG $\GGG= \left(V,E\right)$ and $A,B,C \subseteq V$ disjoint sets, $C$ \textit{d-separates} $A$ from $B$ in $\GGG$ if and only if $C$ separates $A$ from $B$ in $\left(\GGG_{An\left(A\dotcup B\dotcup C\right)}\right)^m$. The conditional independences drawn from a DAG can be exploited using the following Proposition, see \citet[p.\ 33]{Whittaker1990}.
\begin{Proposition}[Conditional independence]\label{prop:CI_lemmas}
	If $\left(\mathbf{X},\mathbf{Y},\mathbf{Z}_1,\mathbf{Z}_2\right)$ is a partitioned random vector with joint density $f_{X,Y,Z_1,Z_2}$, then the following expressions are equivalent:
	\begin{enumerate}
		\itemsep0em 
		\item $\condindep {\mathbf{Y}} {\left(\mathbf{Z}_1,\mathbf{Z}_2\right)} \mathbf{X}$,
		\item $\condindep {\mathbf{Y}} {\mathbf{Z}_2} {\left(\mathbf{X}, \mathbf{Z}_1\right)}$ and $\condindep {\mathbf{Y}}{\mathbf{Z_1}}{\mathbf{X}}$.
	\end{enumerate}
\end{Proposition}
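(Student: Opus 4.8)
The plan is to prove the equivalence by working directly with the defining factorizations of conditional independence in terms of densities, going in both directions. Recall that $\condindep{\mathbf A}{\mathbf B}{\mathbf C}$ holds if and only if the joint density factorizes as $f_{A,B|C}(\mathbf a,\mathbf b\mid\mathbf c)=f_{A|C}(\mathbf a\mid\mathbf c)\,f_{B|C}(\mathbf b\mid\mathbf c)$ wherever the conditioning density is positive, equivalently $f_{A|B,C}(\mathbf a\mid\mathbf b,\mathbf c)=f_{A|C}(\mathbf a\mid\mathbf c)$. I would state this characterization once at the start and then use it mechanically.

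First I would prove (i)$\Rightarrow$(ii). Assume $\condindep{\mathbf Y}{(\mathbf Z_1,\mathbf Z_2)}{\mathbf X}$, i.e.\ $f_{Y\mid X,Z_1,Z_2}=f_{Y\mid X}$. For the second part of (ii), $\condindep{\mathbf Y}{\mathbf Z_1}{\mathbf X}$: marginalize over $\mathbf Z_2$ in $f_{Y,Z_1\mid X,Z_2}=f_{Y\mid X}f_{Z_1\mid X,Z_2}$ — more directly, integrate $f_{Y,Z_1,Z_2\mid X}=f_{Y\mid X}\,f_{Z_1,Z_2\mid X}$ over $\mathbf z_2$ to get $f_{Y,Z_1\mid X}=f_{Y\mid X}\,f_{Z_1\mid X}$, which is exactly $\condindep{\mathbf Y}{\mathbf Z_1}{\mathbf X}$. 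For the first part of (ii), $\condindep{\mathbf Y}{\mathbf Z_2}{(\mathbf X,\mathbf Z_1)}$: from $f_{Y\mid X,Z_1,Z_2}=f_{Y\mid X}$ and the just-derived $f_{Y\mid X,Z_1}=f_{Y\mid X}$ (which follows from $\condindep{\mathbf Y}{\mathbf Z_1}{\mathbf X}$), one gets $f_{Y\mid X,Z_1,Z_2}=f_{Y\mid X}=f_{Y\mid X,Z_1}$, i.e.\ $\mathbf Y$ is conditionally independent of $\mathbf Z_2$ given $(\mathbf X,\mathbf Z_1)$.

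Then I would prove (ii)$\Rightarrow$(i). Assume $f_{Y\mid X,Z_1,Z_2}=f_{Y\mid X,Z_1}$ and $f_{Y\mid X,Z_1}=f_{Y\mid X}$. Chaining these gives $f_{Y\mid X,Z_1,Z_2}=f_{Y\mid X}$, which is precisely $\condindep{\mathbf Y}{(\mathbf Z_1,\mathbf Z_2)}{\mathbf X}$. This direction is essentially immediate once the density characterization is in place.

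The main thing to be careful about — rather than a genuine obstacle — is the handling of the positivity/support conditions: every conditional density above is only defined where the relevant conditioning marginal is strictly positive, so each displayed identity should be read as holding on the appropriate set, and the marginalization step (integrating out $\mathbf Z_2$) needs the factorization to hold almost everywhere so that it passes to the integral. Since the statement, following \citet[p.\ 33]{Whittaker1990}, is understood in this density-a.e.\ sense, I would simply note this convention once and then suppress it. No structural difficulty remains; the proof is a short bookkeeping argument with conditional densities.
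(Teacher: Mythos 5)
Your proof is correct: both directions follow cleanly from the density characterization $\condindep{\mathbf{A}}{\mathbf{B}}{\mathbf{C}} \Leftrightarrow f_{A\mid B,C}=f_{A\mid C}$, and the one nontrivial step (marginalizing $\mathbf{Z}_2$ out of the factorized joint conditional to get $\condindep{\mathbf{Y}}{\mathbf{Z}_1}{\mathbf{X}}$) is handled properly. Note that the paper does not prove this proposition at all; it is quoted from Whittaker (1990, p.~33), and your argument is essentially the standard textbook proof given there, so there is no divergence of approach to report.
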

To estimate DAGs, a specific distribution is assumed. For continuous data, it is most often the multivariate Gaussian. There exists a multitude of algorithms, see \citet{bnlearnpaper}, which are applicable also in high dimensions.  While we are aware that assuming Gaussianity might be too restrictive for describing the data adequately, we consider the estimated DAG as proxy for an R-vine. An R-vine is however not restricted to Gaussian pair copulas or marginals, relaxing the severe restrictions which come along with DAG models.
\section{Representing DAGs as R-vines}\label{sec:representation}
First, we show that each Gaussian R-vine has a representation as a Gaussian DAG. Second, we demonstrate that the converse also holds for the case of $1$-DAGs, i.\,e.\ \textit{Markov-trees}. For the case $k \ge 2$, a representation of $k$-DAGs as $k$-truncated R-vines is not necessarily possible. We prove under sufficient conditions when such a representation exists. Finally, we derive necessary conditions to infer if an R-vine representation of a $k$-DAG is possible and which truncation level $k^\prime > k$ can be attained at best.
\subsection{Representing truncated R-vines as DAGs}\label{subsec:representationrvinesdags}
To establish a connection between $k$-truncated Gaussian R-vines and DAGs, we follow \cite{BrechmannJoe2014} using \textit{structural equation models (SEMs)}. Define a SEM corresponding to a Gaussian R-vine with structure $\VVV$, denoted by $\SSS\left(\VVV\right)$. Let $\VVV = T_1,\dots,T_{d-1}$ be an R-vine tree sequence and assume without loss of generality $\left\{1,2\right\} \in T_1$ and for $j = 3,\dots,d$ denote the edges in $T_1$ by $\left\{j, \kappa_1\left(j\right)\right\}$. The higher order trees contain edges $j,\kappa_i\left(j\right)|\kappa_1\left(j\right),\dots,\kappa_{i-1}\left(j\right) \in T_i$ for $i = 2,\dots,d-1$. Based on this R-vine, define $\SSS\left(\VVV\right)$ by
\begin{equation}\label{eq:sem}
	\begin{aligned}
		X_1 &= \psi_1 \epsilon_1,\\
		X_2 &= \varphi_{21} X_1 + \psi_2 \epsilon_2,\\
		X_i &= \sum_{j = 1}^{i-1}~\varphi_{i \kappa_j\left(i\right)} X_{\kappa_j\left(i\right)} + \psi_i \epsilon_i,
	\end{aligned}
\end{equation}
with $\epsilon_i \sim \NNN\left(0,1\right)$ i.i.d. and $\psi_i$ such that $\Variance\left(X_i\right) = 1$ for $i = 1,\dots,d$. From $\SSS\left(\VVV\right)$ we obtain a graph $\GGG = \left(V = \left\{1,\dots,d\right\}, E = \emptyset\right)$ and add a directed edge $X_{\kappa_j\left(i\right)} \rightarrow X_i$ for each $i \in 2,\dots,d$ and $j = 1,\dots,i$. In other words, each conditioned set of the R-vine yields an arrow. By the structure of $\SSS\left(\VVV\right)$, $\GGG$ is a DAG. By \cite{PetersBuehlmann2014}, the joint distribution of $\left(X_1,\dots,X_d\right)$ is uniquely determined by $\GGG$ and it is Markov with respect to $\GGG$. Additionally, if the R-vine is $k$-truncated, we have at most $k$ summands on the right hand side and thus, obtain a $k$-DAG. Furthermore, $\GGG$ has a topological ordering $1,\dots,d$. We show that it is possible for two different R-vines to have the same DAG representation.

\begin{example}[Different $2$-truncated R-vines with same DAG representation in $4$ dimensions]\label{ex:semunique}
	Consider the following two $2$-truncated R-vines and their $2$-DAG representation.\\
	\begin{tabular}{lll}
		\begin{minipage}[l]{0.3\textwidth}
			\centering
			\includegraphics[width=0.4\textwidth, trim={0.1cm 0.1cm 0.1cm 0.1cm},clip]{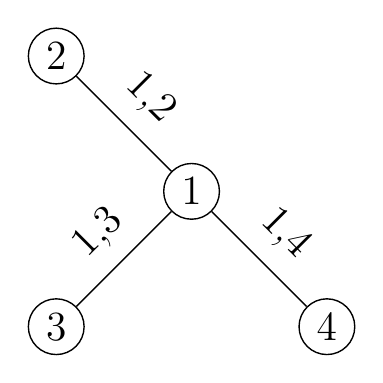}
			\includegraphics[width=0.4\textwidth, trim={0.1cm 0.1cm 0.1cm 0.1cm},clip]{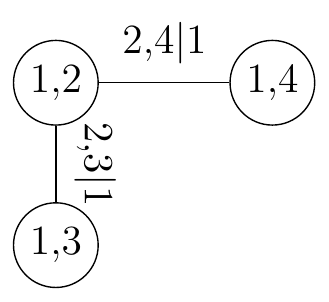}
			\captionof{figure}{R-vine $\VVV_1$.}
			\label{fig:exsemunique:vine1}
		\end{minipage}
		&
		\begin{minipage}[l]{0.3\textwidth}
			\centering
			\includegraphics[width=0.4\textwidth, trim={0.1cm 0.1cm 0.1cm 0.1cm},clip]{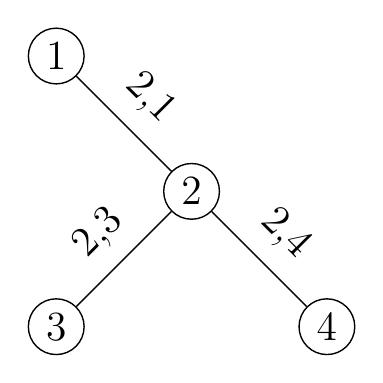}
			\includegraphics[width=0.4\textwidth, trim={0.1cm 0.1cm 0.1cm 0.1cm},clip]{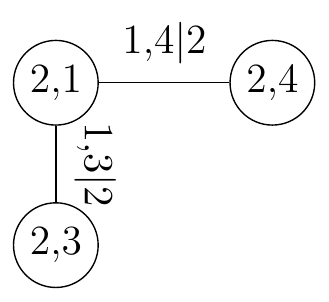}\\
			\captionof{figure}{R-vine $\VVV_2$.}
			\label{fig:exsemunique:vine2}
		\end{minipage}
		&
		\begin{minipage}[l][][l]{0.3\textwidth}
			\centering
			\includegraphics[width=0.425\textwidth, trim={0.1cm 0.1cm 0.1cm 0.1cm},clip]{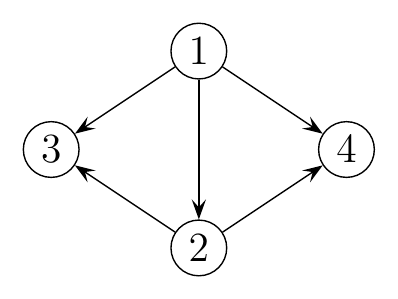}
			\hspace{1.5cm}
			\captionof{figure}{DAG $\GGG_2$ of $\VVV_1$, $\VVV_2$.}
			\label{fig:exsemunique:dag}
		\end{minipage}
	\end{tabular}\\[0.5cm]
	Since the conditioned sets of $\VVV_1$ and $\VVV_2$ in their first two trees are the same, both R-vines have the same DAG representation $\GGG_2$. Assuming fixed SEM coefficients $\varphi$, both R-vines also have different correlation matrices. Yet, both correlation matrices are belonging to distributions which are Markov with respect to $\GGG_2$.
\end{example}

Since two R-vines may have the same representing DAG, inferring an R-vine from a DAG uniquely is not necessarily possible. We  formalize an R-vine representation of a DAG.

\begin{Definition}[R-vine representation of DAG]\label{def:rvinerep}
	Let $\GGG = \left(V,E\right)$ be a $k$-DAG. A $k$-truncated R-vine representation of $\GGG$ is an R-vine tree sequence $\VVV\left(\GGG\right)= \left(T_1,\dots,T_{d-1}\right)$ such that $T_{k+1},\dots,T_{d-1}$ contain edges $j\left(e\right),\ell\left(e\right)|D\left(e\right)$ where $\condindep{j\left(e\right)}{\ell\left(e\right)}{D\left(e\right)}$ by $\GGG$.
\end{Definition}
We first consider the case of representing Markov-Trees, i.\,e.\ $1$-DAGs. Afterwards, the representation of general $k$-DAGs for $k \ge 2$ is evaluated.

\subsection{Representing Markov Trees as 1-truncated R-vines}\label{subsec:representingmarkovtrees}
\begin{Proposition}[Representing Markov Trees]\label{prop:dagk1}
	Let $\GGG = \left(V,E\right)$ be a $1$-DAG. There exists a $1$-truncated R-vine representation $\VVV\left(\GGG\right)$ of $\GGG$. If $\left|E\right| = d-1$, $T_1 = \GGG^s = \GGG^m$.
\end{Proposition}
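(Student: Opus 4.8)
The plan is to proceed by induction on $d = |V|$, constructing the R-vine tree sequence $\VVV(\GGG)$ from the bottom up while exploiting the tree-like skeleton of a $1$-DAG. The key structural observation is that a $1$-DAG is a directed graph in which every node has at most one parent, so its skeleton $\GGG^s$ is a forest (a disjoint union of trees); moreover, since a $1$-DAG has no v-structures (a v-structure requires two parents at a node), the moral graph satisfies $\GGG^m = \GGG^s$, which gives the final sentence of the statement for free in the case $|E| = d-1$ once we exhibit $T_1 = \GGG^s$. So the heart of the matter is: given a forest on $d$ nodes arising as the skeleton of a $1$-DAG, build a valid R-vine tree sequence $T_1, \dots, T_{d-1}$ whose first tree is (a spanning tree extending) $\GGG^s$ and such that every edge $j(e),\ell(e)|D(e)$ in trees $T_2, \dots, T_{d-1}$ encodes a conditional independence $\condindep{j(e)}{\ell(e)}{D(e)}$ valid in $\GGG$.

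First I would reduce to the connected case: if $\GGG^s$ is disconnected, any edge joining two components (say connecting $a$ in one component to $b$ in another) carries the conditional independence $\condindep{a}{b}{\emptyset}$ by d-separation, since there is no chain at all between the components; more generally cross-component edges in higher trees are likewise justified by d-separation because no active chain crosses between components. So it suffices to handle a connected $1$-DAG, i.e. one whose skeleton is a tree with exactly $d-1$ edges, and then patch components together. For the connected case, set $T_1 := \GGG^s$, a tree on $\{1,\dots,d\}$ with $d-1$ edges; each edge $\{i,j\} \in E_1$ has empty conditioning set, so no condition is imposed on $T_1$. Now I would pick a leaf $\ell$ of the tree $T_1$ (a node of degree $1$), with unique neighbour $m$. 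The induction hypothesis, applied to the $1$-DAG $\GGG'$ obtained by deleting $\ell$ (still a connected $1$-DAG on $d-1$ nodes — deleting a leaf of the skeleton keeps it a $1$-DAG since we only possibly remove a parent relation), yields an R-vine tree sequence $\VVV(\GGG') = (T_1', \dots, T_{d-2}')$ on $\{1,\dots,d\}\setminus\{\ell\}$ with $T_1' = (\GGG')^s$. I then need to extend each $T_i'$ to a tree $T_{i+1}$ on the node set $E_i$ by adding, at each level, exactly one new edge incident to a node whose complete union involves $\ell$. Concretely, in $T_1$ the extra node (compared to $(\GGG')^s$) is the edge $\{\ell, m\}$; to form $T_2$ we must join $\{\ell,m\}$ to some edge of $T_1$ sharing node $m$, producing an edge with conditioned set $\{\ell, \cdot\}$ and conditioning set $\{m\}$, and then continue "pushing $\ell$ down" the vine, at each tree $T_i$ adding one edge whose conditioned set is $\{\ell, x_i\}$ and whose conditioning set $D$ grows by one vertex, always chosen along the unique path structure so that the proximity condition holds.

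The main obstacle — and where I would spend the bulk of the argument — is verifying simultaneously that (a) the proximity condition holds for the added edges at every level, and (b) each added edge $\ell, x_i | D_i$ genuinely corresponds to a d-separation $\condindep{\ell}{x_i}{D_i}$ in $\GGG$. For (b) the point is that, because $\GGG$ is a $1$-DAG, any chain between $\ell$ and $x_i$ in $\GGG$ must pass through the unique skeleton-path from $\ell$ to $x_i$, and this path is a non-collider path (no v-structures exist), so it is blocked precisely by conditioning on any intermediate node; choosing $D_i$ to contain such an intermediate node — which the vine construction does, since the conditioning set of an edge at level $i$ in a path-like extension accumulates exactly the vertices "between" $\ell$ and $x_i$ — blocks all chains, giving the d-separation and hence (by the global directed Markov property, quoted from \citet{Lauritzen1996}) the conditional independence. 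For (a) I would invoke the standard fact that attaching a new pendant vertex to the first tree of an existing regular vine can always be completed to a regular vine (this is essentially how D-vines and, more generally, vines are grown incrementally), checking the proximity condition edge by edge using that at each stage the new edge shares an m-child with an existing one along the chosen path. Finally, reassembling the connected pieces: a forest-skeleton $\GGG$ with $c$ components needs $d-1$ edges in $T_1$, so we add $c-1$ "bridge" edges joining the component-trees into a spanning tree of $\{1,\dots,d\}$; every such bridge edge and every higher-tree edge straddling two original components carries a vacuous or cross-component d-separation as noted above, so Definition \ref{def:rvinerep} is satisfied. This completes the induction, and in the sub-case $|E| = d-1$ the construction gives $T_1 = \GGG^s = \GGG^m$ directly since there are no v-structures to moralize.
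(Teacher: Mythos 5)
Your argument is correct, but it takes a genuinely different route from the paper's. The paper also sets $T_1=\GGG^s=\GGG^m$ (a tree, since a $1$-DAG has no v-structures), but then simply verifies the sufficient conditions \ref{ass:1} and \ref{ass:2} of Theorem \ref{thm:transformation}: \ref{ass:1} holds because every edge of $T_1$ is an arrow of $\GGG$, and \ref{ass:2} is obtained by building the diagonal of the R-vine matrix as a decreasing topological ordering, repeatedly peeling off a sink of the DAG --- which in a connected $1$-DAG is automatically a skeleton leaf with exactly one parent. The conditional independences in $T_2,\dots,T_{d-1}$ then come for free from the already-proven Theorem \ref{thm:transformation}. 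You instead bypass that theorem entirely: you induct on $d$, delete a skeleton leaf $\ell$, extend the smaller vine by a pendant node, and justify each new edge $\ell,x_i|D_i$ directly by d-separation, using that the complete union of a vine edge induces a connected subtree of $T_1$, so that $D_i$ contains the intermediate nodes of the unique (collider-free) chain from $\ell$ to $x_i$. Your version is self-contained and makes the completion of the tree sequence (proximity condition) explicit, where the paper delegates this to Theorem \ref{thm:transformation} and the matrix-completion step; the paper's version is shorter because it reuses the main theorem. Two minor imprecisions in yours, neither fatal: $D_i$ need not be \emph{exactly} the set of vertices between $\ell$ and $x_i$ (it is a connected subtree containing them, which suffices since there are no colliders that extra conditioning could un-block), and the d-separations inherited from the inductive step on $\GGG\setminus\{\ell\}$ should be noted to remain valid in $\GGG$ because the deleted node is a skeleton leaf and hence lies on no chain between the remaining nodes.
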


See Appendix \ref{subsec:appendix_proof_prop_markovtree} for the proof and Appendix \ref{sec:appendix_algorithms} for an implementation of the algorithm \texttt{RepresentMarkovTreeRVine}. Next, we consider the general case for $k$-DAGs with $k \geq 2$.

\subsection{Representing $\mathbf{k}$-DAGs as $\mathbf{k}$-truncated R-vines under sufficient conditions}\label{subsec:mainresult}
First, we introduce the assumptions of our main theorem and their interpretation. Next, the proof follows with some illustrations. Let $\GGG = \left(V,E\right)$ be an arbitrary $k$-DAG. We impose under which assumptions an incomplete R-vine tree sequence $\left(T_1=\left(V,E_1\right),\ldots,T_k=\left(V_k,E_k\right)\right)$ is part of a $k$-truncated R-vine representation $\VVV\left(\GGG\right)$ of $\GGG$.
\begin{asu}\label{ass:1}
	For all $v,w \in V$ with $w \in \parents\left(v\right)$, there exists an $i \in \left\{1,\dots,k\right\}$ and $e \in E_i$ such that $j(e) = v,\ k(e) = w$. Here, $\parents\left(v\right)$ is specified by the DAG $\GGG$.
\end{asu}
\begin{asu}\label{ass:2}
	The main diagonal of the R-vine matrix $M$ of $T_1,\ldots,T_k$ can be written as decreasing topological ordering of the DAG $\GGG$, $\left\{\eta^{-1}\left(d\right),\ldots,\eta^{-1}\left(1\right)\right\}$ from the top left to bottom right.
\end{asu}
We illustrate such an R-vine satisfying \ref{ass:1} and \ref{ass:2} by the Examples \ref{ex:exdag1} and \ref{ex:exvine1}.
\begin{example}[Example \ref{ex:exdag1} cont.]\label{ex:exdag1contproof}
	Denote $\parents\left(v\right) = \left\{w_1^v,w_2^v\right\} \ \mbox{for} \ v \in \left\{4,5,3\right\}$ and $w_1^6 = \parents\left(6\right),\ w_1^2=\parents\left(2\right)$. The values $w_1^v,w_2^v$ of $M$ for each $v \in V$ are given in Table \ref{table:exdag1}. The corresponding R-vine can be seen in Figure \ref{fig:exvine1:1} of Example \ref{ex:exvine1}, $T_1$ and $T_2$.
	\begin{equation*}
		M=\left(
		\begin{array}{cccccc}
			4&&&&&\\
			&5&&&&\\
			&&3&&&\\
			&&&6&&\\
			w_2^4& w_2^5& w_2^7 & 1 & 2 & \\
			w_1^4& w_1^5& w_1^7 & w_1^6 & w_1^2 & 1
		\end{array}
		\right)
	\end{equation*}
\end{example}
\ref{ass:1} links each conditioned set in an edge in one of the first $k$ R-vine trees to an arrow in the DAG $\GGG$. We have seen this property in the representation of R-vines as DAGs in Section \ref{subsec:representationrvinesdags}. Note that in a (not truncated) R-vine, each pair $j\left(e\right),\ell\left(e\right) \in 1,\dots,d$ occurs exactly once as conditioned set, see \citet[p.\ 96]{KurowickaCooke2006}. \ref{ass:2} maps the topological ordering of $\GGG$ onto the conditioned sets of the R-vine tree such that
\begin{equation}\label{eq:ordering}
	j\left(e\right) \not>_\GGG \ell\left(e\right) \mbox{ for each } e \in E_1,\ldots,E_{d-1}.
\end{equation}
This can be seen as for a column $p$, the elements $M_{p+1,p},\dots,M_{d,p}$ must occur as a diagonal element to the right of $p$, i.\,e.\ as diagonal entries in a column $p+1,\ldots,d$. By definition of topological orderings, we obtain \eqref{eq:ordering}. To interpret \ref{ass:2}, recall that in a DAG we have $\condindep{v}{\nondescendants\left(v\right)}{\parents\left(v\right)}$. For higher R-vine trees $T_{k+1},\dots,T_{d-1}$ we want to truncate, \ref{ass:1} assures that all parents $\parents\left(v\right)$ are in the conditioning set for these trees. \ref{ass:2} gives us that only pairs of $v,w$ for $w \in \nondescendants\left(v\right)$ are in the conditioned sets in these trees. This holds true since the later a node occurs in the topological ordering, the more non-descendants it has. Thus, \ref{ass:2} maps the structure of DAG $\GGG$ and the R-vine $\VVV\left(\GGG\right)$. 
\begin{Theorem}[Representing DAGs as truncated R-vines]\label{thm:transformation}
	Let $\GGG = \left(V,E\right)$ be a $k$-DAG. If there exists an incomplete R-vine tree sequence $\VVV\left(\GGG\right)=\left( T_1=\left(V,E_1\right),\ldots,T_k=\left(V_k,E_k\right)\right)$ such that \ref{ass:1} and \ref{ass:2} hold, then $\VVV\left(\GGG\right)$ can be completed with trees $T_{k+1},\ldots,T_{d-1}$ which only contain independence copulas. In particular, these independence pair copulas encode conditional independences derived from the $k$-DAG $\GGG$ by the local directed Markov property.
\end{Theorem}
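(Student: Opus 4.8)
The plan is to start from the incomplete R-vine tree sequence $\left(T_1,\ldots,T_k\right)$ and show it can be completed to a full R-vine tree sequence $\left(T_1,\ldots,T_{d-1}\right)$ in which every edge $e$ of $T_{k+1},\ldots,T_{d-1}$ carries a conditioned pair $j(e),\ell(e)$ with $\condindep{j(e)}{\ell(e)}{D(e)}$ according to $\GGG$. Since a completion always exists purely combinatorially (any proper incomplete R-vine tree sequence extends to a full one by repeatedly adding a spanning tree satisfying the proximity condition), the real content is that the conditional independence statement holds for \emph{every} edge above level $k$. I would therefore fix an arbitrary edge $e \in E_i$ with $i \geq k+1$, with conditioned set $\{j(e),\ell(e)\}$ and conditioning set $D(e) = A_{j(e)}\cap A_{\ell(e)}$, and aim to show $\condindep{j(e)}{\ell(e)}{D(e)}$ via the global directed Markov property, i.e.\ by checking that $j(e)$ and $\ell(e)$ are d-separated by $D(e)$ in $\GGG$ (equivalently separated in the moral graph of $\GGG_{\Ancestors(\{j(e),\ell(e)\}\cup D(e))}$).

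The key structural step is to exploit the two assumptions to control $D(e)$. By \ref{ass:2} the main diagonal of the R-vine matrix is a decreasing topological ordering, so by \eqref{eq:ordering} we may assume WLOG $j(e) \not>_\GGG \ell(e)$; combined with the matrix structure, every element appearing in the conditioned or conditioning set attached to the column of $j(e)$ that lies ``below'' $j(e)$ is a node occurring later in the topological order, hence $\ell(e)$ and all of $D(e)$ lie in $\nondescendants(j(e)) \cup \parents(j(e))$. The crucial claim is then that $\parents(j(e)) \subseteq D(e)$: by \ref{ass:1}, each arrow $w \to j(e)$ in $\GGG$ corresponds to an edge in some tree $T_m$, $m \leq k$, with conditioned set $\{j(e),w\}$; since the complete union of the edge $e$ (living in tree $i > k$) contains the complete unions of its m-children down to $T_1$, and $j(e)$ stays in the conditioned set along the column while the other parents get absorbed into conditioning sets as the tree level increases, one argues that each such $w$ must already lie in $A_{j(e)}$, and because $w$ is also an ancestor (in particular a non-descendant-free direction) it ends up in $D(e) = A_{j(e)} \cap A_{\ell(e)}$. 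Once $\parents(j(e)) \subseteq D(e)$ and $\ell(e) \in \nondescendants(j(e))$ are established, the local directed Markov property \eqref{eq:directedlocalmarkov} gives $\condindep{j(e)}{\nondescendants(j(e))}{\parents(j(e))}$, and Proposition \ref{prop:CI_lemmas} (with $\mathbf{X} = \mathbf{X}_{\parents(j(e))}$, $\mathbf{Y} = X_{j(e)}$, and $\{\mathbf{Z}_1,\mathbf{Z}_2\}$ splitting $\nondescendants(j(e))$ into $D(e)\setminus\parents(j(e))$ and the rest) yields $\condindep{j(e)}{\ell(e)}{D(e)}$, which is exactly what Definition \ref{def:rvinerep} requires.

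I would organize the write-up as: (1) a lemma that a proper incomplete R-vine tree sequence of length $k$ always admits a completion to a full R-vine tree sequence (cite or reprove via the proximity condition and the triangular-matrix representation of \citet{dissmann-etal}); (2) a lemma translating \ref{ass:1} and \ref{ass:2} into the two containments $\parents(j(e)) \subseteq D(e)$ and $\{\ell(e)\}\cup D(e) \subseteq \nondescendants(j(e)) \cup \parents(j(e))$ for every edge $e$ above level $k$, proved by induction on the tree level using the complete-union recursion $A_e = A_a \cup A_b$ for $e = \{a,b\}$; and (3) the final deduction via the local Markov property and Proposition \ref{prop:CI_lemmas}. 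Illustrating each containment on Examples \ref{ex:exdag1} and \ref{ex:exvine1} (e.g.\ the edge $c_{4,1|2356}$ in $T_5$ with $\parents(4) = \{5,2\} \subseteq \{2,3,5,6\}$ and $1 \in \nondescendants(4)$) makes the abstract bookkeeping concrete.

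The main obstacle I anticipate is step (2): carefully tracking, through the recursive definition of complete unions, exactly which diagonal entries of the R-vine matrix land in $D(e)$ versus the conditioned set $\{j(e),\ell(e)\}$ as one moves up the trees, and verifying that \emph{all} parents of $j(e)$ have been pushed into the conditioning set by the time one reaches tree $k+1$ — this is precisely where \ref{ass:1} (every parent appears as a conditioned partner of $j(e)$ at some level $\le k$) and \ref{ass:2} (the topological diagonal forces the ``smaller'' node $j(e)$ to persist in the conditioned set) must be combined, and it is easy to get the indexing of which node survives in the conditioned set versus which is absorbed backwards. A secondary subtlety is handling edges whose conditioned set $\{j(e),\ell(e)\}$ involves two nodes neither of which is a parent of the other but whose conditioning set still must contain $\parents(j(e))$ — here one uses that the same matrix-column argument applies symmetrically to $\ell(e)$ as well, so it suffices to apply the local Markov property at whichever of $j(e),\ell(e)$ is later in the topological order.
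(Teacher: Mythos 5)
Your proposal follows essentially the same route as the paper: it establishes exactly the two structural containments the paper proves as Lemmas \ref{lemma:prooflemma1} and \ref{lemma:prooflemma2} (namely $\parents\left(j\left(e\right)\right) \subseteq D\left(e\right)$ and $\left\{\ell\left(e\right)\right\}\cup D\left(e\right) \subseteq \parents\left(j\left(e\right)\right)\cup\nondescendants\left(j\left(e\right)\right)$, which together with $\ell\left(e\right)\notin D\left(e\right)$ give $\ell\left(e\right)\in\nondescendants\left(j\left(e\right)\right)$), and then concludes via the local directed Markov property \eqref{eq:directedlocalmarkov} and a double application of Proposition \ref{prop:CI_lemmas}, just as in the paper's proof. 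The only additions are cosmetic: you explicitly flag the purely combinatorial completability of the incomplete tree sequence (which the paper takes for granted) and briefly mention d-separation before settling on the local-Markov argument.
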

The main benefit now is that we can use the R-vine structure instead of the DAG structure, which is most often linked to the multivariate Gaussian distribution. For the proof, we first present two lemmas. These and the proof itself will be continuously illustrated.

\begin{Lemma}[]\label{lemma:prooflemma1}
	Let $\GGG$ be a $k$-DAG and $T_1,\ldots,T_k$ an R-vine tree sequence satisfying \ref{ass:1} and \ref{ass:2}. For each $j\left(e\right),\ell\left(e\right)|D\left(e\right) \mbox{ with } e \in E_{k+1},\ldots,E_{d-1}$, we have $\ell\left(e\right) \in \nondescendants\left(j\left(e\right)\right)$.
\end{Lemma}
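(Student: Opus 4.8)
The plan is to show that for an edge $e \in E_i$ with $i \ge k+1$, writing $j(e), \ell(e)$ for its conditioned set, we cannot have $\ell(e) >_\GGG j(e)$, which by \eqref{eq:ordering} together with the definition of non-descendants forces $\ell(e) \in \nondescendants(j(e))$. The starting point is Assumption \ref{ass:2}: the main diagonal of the R-vine matrix $M$ of $T_1,\dots,T_k$ (and hence of any completion) reads as a decreasing topological ordering $\{\eta^{-1}(d),\dots,\eta^{-1}(1)\}$ from top-left to bottom-right. First I would recall the standard structure-matrix fact (e.g. from \citet{dissmann-etal} or \citet[p.~96]{KurowickaCooke2006}): for each column $p$ of $M$ the conditioned set of the corresponding edge in tree $i$ is $\{M_{p,p}, M_{k+1,p}\}$ where the level $i$ is read off from the row, and all entries $M_{i,p}$ below the diagonal in column $p$ appear as diagonal entries $M_{q,q}$ for some $q > p$. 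Thus every conditioned pair $\{j(e),\ell(e)\}$ has the property that the ``later'' diagonal entry is the one sitting on the main diagonal, and by Assumption \ref{ass:2} ``later on the diagonal'' means ``earlier in $\eta$'', i.e. $\eta(j(e)) < \eta(\ell(e))$ after relabelling so $j(e) = M_{p,p}$. Hence $\eta(j(e)) < \eta(\ell(e))$, which by the defining property of a topological ordering gives $j(e) \not>_\GGG \ell(e)$; this is exactly \eqref{eq:ordering}, and I would simply cite that consequence of \ref{ass:2} rather than re-derive it.

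Next I would use \ref{ass:1} to rule out the remaining possibility $\ell(e) \in \parents(j(e)) \cup \descendants(j(e))$. Since $V = j(e) \dotcup \parents(j(e)) \dotcup \descendants(j(e)) \dotcup \nondescendants(j(e))$, and $\ell(e) \ne j(e)$, it suffices to exclude $\ell(e) \in \parents(j(e))$ and $\ell(e) \in \descendants(j(e))$. The case $\ell(e) \in \descendants(j(e))$ is immediate: $\ell(e) \in \descendants(j(e))$ means $j(e) >_\GGG \ell(e)$, contradicting \eqref{eq:ordering}. For the case $\ell(e) \in \parents(j(e))$: by \ref{ass:1} there is some tree $T_{i'}$ with $i' \le k$ containing an edge $e'$ whose conditioned set is exactly $\{j(e), \ell(e)\}$. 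But in an R-vine each unordered pair occurs as a conditioned set at most once across all trees (this is the standard uniqueness of conditioned sets, \citet[p.~96]{KurowickaCooke2006}), so $e$ and $e'$ would have to be the same edge — impossible since $e \in E_i$ with $i \ge k+1$ while $e' \in E_{i'}$ with $i' \le k$. Hence $\ell(e) \notin \parents(j(e))$. Combining the three exclusions, $\ell(e) \in \nondescendants(j(e))$, as claimed.

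The main obstacle is making the bookkeeping in the first paragraph airtight: I need to be careful that the conditioned-set pair read from column $p$ of $M$ really is $\{M_{p,p}, M_{i+1,p}\}$ with the diagonal entry playing the ``conditioned, topologically earlier'' role, so that \ref{ass:2} indeed yields \eqref{eq:ordering} for \emph{all} edges and not just those in the first $k$ trees; since the diagonal of $M$ is fixed once $T_1,\dots,T_k$ are fixed (the completion only fills in rows above the $(k{+}1)$-st subdiagonal and does not touch the diagonal), this extends to the completed vine without trouble. A secondary point worth stating explicitly is why a completion with the required diagonal even exists as a combinatorial object at this stage — but that is precisely what Theorem \ref{thm:transformation} will establish, and this lemma is only an ingredient toward it, so here I would phrase everything in terms of the (given) first $k$ trees and the ordering they induce, and leave the existence of the full completion to the theorem's proof.
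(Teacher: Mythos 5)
Your proposal is correct and follows essentially the same route as the paper's proof: exclude $\ell(e)\in\parents(j(e))$ via \ref{ass:1} and the uniqueness of conditioned sets, exclude $\ell(e)\in\descendants(j(e))$ via the ordering \eqref{eq:ordering} implied by \ref{ass:2}, note $\ell(e)\neq j(e)$, and conclude from the partition $V = v \dotcup \parents(v) \dotcup \descendants(v) \dotcup \nondescendants(v)$. The only quibble is the row index in your matrix bookkeeping (the conditioned partner of $M_{p,p}$ in tree $i$ sits in row $d-i+1$ under the paper's convention, not row $k+1$), which does not affect the argument.
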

\begin{proof}
	Consider an arbitrary edge $j\left(e\right),\ell\left(e\right)|D\left(e\right) \mbox{ for } e \in E_{k+1},\ldots,E_{d-1}.$
	We have $\ell\left(e\right) \notin \parents\left(j\left(e\right)\right)$, since conditioned sets in an R-vine tree sequence are unique and all conditioned sets of the form $j\left(e\right),\ell\left(e\right)$ with $\ell\left(e\right) \in \parents\left(j\left(e\right)\right)$ occurred already in the first $k$ trees by \ref{ass:1}. Additionally, $\ell\left(e\right) \notin \descendants\left(j\left(e\right)\right)$, since otherwise would violate \ref{ass:2}, as $\ell\left(e\right) >_\GGG j\left(e\right)$. Finally, $\ell\left(e\right) \neq j\left(e\right)$, since the two elements of a conditioned set must be distinct. Thus, we have $\ell\left(e\right) \notin \left(\parents\left(j\left(e\right)\right) \dotcup \descendants\left(j\left(e\right)\right) \dotcup j\left(e\right)\right) = V\setminus \nondescendants\left(j\left(e\right)\right)$ and hence $\ell\left(e\right) \in \nondescendants\left(j\left(e\right)\right)$.
\end{proof}
\begin{example}[Example \ref{ex:exdag1contproof} cont.]\label{ex:exdag1contproof1}
	Illustrating Lemma \ref{lemma:prooflemma1}, consider the R-vine matrix $M$ of Example \ref{ex:exdag1contproof} and column $3$. To complete $M$, we need to fill in e.\,g.\ $M_{4,3}$. Valid entries can come from the main diagonal of $M$ right of $3$, i.\,e.\ $\left\{M_{4,4},M_{5,5},M_{6,6}\right\}=\left\{6,2,1\right\}$. Since $\parents\left(3\right)= \left\{2,6\right\}$ and by \ref{ass:1}, the edges in the first two R-vine trees are $\left\{3,6\right\}$ and $3,2|6$, the only remaining entry is $M_{4,3}=1$. This can only be a non-descendant of $3$ because of \ref{ass:2}.
\end{example}

\begin{Lemma}[]\label{lemma:prooflemma2}
	Let $\GGG$ be a $k$-DAG and $T_1,\ldots,T_k$ an R-vine tree sequence satisfying \ref{ass:1} and \ref{ass:2}. For each $j\left(e\right),\ell\left(e\right)|D\left(e\right) \mbox{ with } e \in E_{k+1},\ldots,E_{d-1}$ we have $D\left(e\right) \subseteq \left\{\parents\left(j\left(e\right)\right) \cup \nondescendants\left(j\left(e\right)\right)\right\}$.
\end{Lemma}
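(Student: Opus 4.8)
The plan is to show that any element $m \in D(e)$ is either a parent of $j(e)$ or a non-descendant of $j(e)$ in $\GGG$, i.e.\ that $m \notin \descendants(j(e))$ and $m \neq j(e)$. The case $m \neq j(e)$ is immediate because the conditioning set $D(e) = A_a \cap A_b$ is disjoint from the conditioned set $\{j(e), \ell(e)\}$, so no element of $D(e)$ equals $j(e)$. The real content is ruling out $m \in \descendants(j(e))$, and for this I would argue by way of the R-vine matrix representation used in the previous example and in \ref{ass:2}.

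Concretely, I would set up the argument as follows. By \ref{ass:2}, the main diagonal of the R-vine matrix $M$ is a decreasing topological ordering $\{\eta^{-1}(d), \dots, \eta^{-1}(1)\}$, and for each edge in column $p$ the conditioned and conditioning entries $M_{p+1,p}, \dots, M_{d,p}$ are all drawn from the diagonal entries $M_{p+1,p+1}, \dots, M_{d,d}$ strictly to the lower right of $p$ — this is exactly the structural fact about R-vine matrices recalled on page of Example \ref{ex:exvine1} and used in Example \ref{ex:exdag1contproof1}. An edge $j(e), \ell(e) \mid D(e)$ appearing in tree $T_i$ lives in some column $p$ with $M_{p,p} = j(e)$ (taking the convention, as in \ref{ass:2} and the examples, that the diagonal entry is the conditioned node that does not $>_\GGG$-dominate the other), and the conditioning set $D(e)$ together with $\ell(e)$ consists of entries of $M$ below $M_{p,p}$ in column $p$, hence all of them are diagonal entries occurring to the right of $p$. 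By the topological-ordering property \eqref{eq:ordering} applied columnwise — i.e.\ the fact that entries to the right of $p$ on the diagonal are never $>_\GGG$-descendants of $M_{p,p} = j(e)$ — every entry of column $p$ below the diagonal, in particular every $m \in D(e)$, satisfies $m \not>_\GGG j(e)$ is false direction; rather $j(e) \not>_\GGG m$, which gives precisely $m \notin \descendants(j(e))$.

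Combining the two observations, $D(e) \subseteq V \setminus (\descendants(j(e)) \cup \{j(e)\}) = \parents(j(e)) \dotcup \nondescendants(j(e))$, which is the claimed inclusion. (I would phrase the final set equality using $V = v \dotcup \parents(v) \dotcup \descendants(v) \dotcup \nondescendants(v)$ from Section \ref{subec:dags}.)

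The main obstacle I anticipate is making the matrix-column argument fully rigorous: one must be careful that the convention placing $j(e)$ on the diagonal (as opposed to $\ell(e)$) is consistent with \ref{ass:2} and with the definition of the conditioned set orientation $j(e) \not>_\GGG \ell(e)$, and that "entries of column $p$ below the diagonal are diagonal entries to the right of $p$" is invoked correctly for higher trees, not just $T_1$. An alternative, perhaps cleaner route that avoids leaning on the matrix bookkeeping is to mimic Lemma \ref{lemma:prooflemma1} directly: take $m \in D(e)$; since $m \in A_a \cap A_b$, both $j(e)$ and $\ell(e)$ are $>_\GGG$-incomparable-or-dominated relative to $m$ along the tree-construction chain, and using \eqref{eq:ordering} together with the proximity condition one shows $m \not<_\GGG j(e)$ in $\GGG$, so $m \notin \descendants(j(e))$; as $m \neq j(e)$ trivially, the inclusion follows. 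I would present whichever of these two is shorter given the lemmas already available, but I expect the matrix-column version to be the intended one since \ref{ass:2} is stated in matrix language.
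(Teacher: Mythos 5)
Your proof is correct, and it rests on the same key mechanism as the paper's: every entry of column $p$ below the diagonal of $M$ must be a diagonal entry to the right of $p$, and \ref{ass:2} together with the definition of a topological ordering then forces $j\left(e\right) \not>_\GGG m$ for every such entry $m$, i.e.\ $m \notin \descendants\left(j\left(e\right)\right)$. The difference is organizational: the paper first uses \ref{ass:1} to show $\parents\left(j\left(e\right)\right) \subseteq D\left(e\right)$, splits into the cases $k_{j\left(e\right)} = k$ and $k_{j\left(e\right)} < k$, and then runs an induction over the trees $T_{k+2},\dots,T_{d-1}$ to handle the one new conditioning entry added per tree; you instead treat all of $D\left(e\right)$ uniformly as column entries below the diagonal and conclude $D\left(e\right) \cap \left(\descendants\left(j\left(e\right)\right) \cup \left\{j\left(e\right)\right\}\right) = \emptyset$ in one step, using the partition $V = v \dotcup \parents\left(v\right) \dotcup \descendants\left(v\right) \dotcup \nondescendants\left(v\right)$. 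This is a legitimate streamlining for the stated inclusion and does not even need \ref{ass:1}; the only thing you lose is the stronger fact $\parents\left(j\left(e\right)\right) \subseteq D\left(e\right)$, which the paper establishes inside this proof and silently reuses later in the proof of Theorem \ref{thm:transformation} (where $D_e = \parents\left(j_e\right) \dotcup \widehat{\nondescendants}\left(j_e\right)$ is asserted). If you adopt your version, that containment would need to be proved separately before the theorem. Your worry about the orientation convention is resolved exactly as you suspect: the paper's convention is that $j\left(e\right)$ is the diagonal entry $M_{p,p}$ of its column, consistent with \eqref{eq:ordering}.
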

\begin{proof}
	Consider $j\left(e^\prime\right),k\left(e^\prime\right)|D\left(e^\prime\right)$ for $e^\prime \in E_{k+1}$. We have the following two cases.\\
	\textbf{First case:} $\left|\parents\left(j\left(e^\prime\right)\right)\right| = k$. All parents of $j\left(e^\prime\right)$ occurred in the conditioned set of edges together with $j\left(e^\prime\right)$ in the first $k$ R-vine trees. Hence, $\parents\left(j\left(e^\prime\right)\right) = D\left(e^\prime\right)$ and $\left|D\left(e^\prime\right)\right|=k$.\\
	\textbf{Second case:} $\left|\parents\left(j\left(e^\prime\right)\right)\right| \eqqcolon k_{j\left(e^\prime\right)} < k$. Similarly to the first case, we conclude $\parents\left(j\left(e^\prime\right)\right) \subset D\left(e^\prime\right)$. Let $D\left(e^\prime\right) \setminus \parents\left(j\left(e^\prime\right)\right) = D_1$ with $\left|D_1\left(e^\prime\right)\right| = k-k_{j\left(e^\prime\right)} > 0$. To obtain the elements of $D\left(e^\prime\right)$, recall \ref{ass:2} and consider the column of the R-vine matrix $M$ in which $j\left(e^\prime\right)$ is in the diagonal, say column $p$. The entries $\left\{M_{d-k,p},\ldots,M_{d,p}\right\}$ describe the elements which occurred in conditioned sets together with $j\left(e^\prime\right)$ in the first $k$ trees. As these entries may only be taken from the right of $M_{p,p}=j\left(e^\prime\right)$, these must be non-descendants of $j\left(e^\prime\right)$.
	To conclude the statement for the R-vine trees $T_{k+2},\ldots,T_{d-1}$, we use an inductive argument. Let $e^{\prime\prime} \in E_{k+2}$ and $j\left(e^{\prime\prime}\right)$ is in the diagonal of the R-vine matrix $M$ in column $p$. Then, for the conditioning set of $e^{\prime\prime}$ we have $D\left(e^{\prime\prime}\right) = M_{d-k-1,p} \dotcup \left\{M_{d-k,p},\ldots,M_{d,p}\right\}$. For the set $\left\{M_{d-k,p},\ldots,M_{d,p}\right\}$ we have shown that it can only consist of parents and non-descendants of $j\left(e^{\prime\prime}\right)$. As $M_{d-k-1,p}$ can only have a value occurring in the main diagonal of the R-vine matrix to the right of column $p$, it must be a non-descendant of $j\left(e^{\prime\prime}\right)$. The same argument holds inductively for the trees $T_{k+3},\ldots,T_{d-1}$. Thus, we have shown that for each edge $j\left(e\right),\ell\left(e\right)|D\left(e\right)$ with $e \in E_{k+1},\ldots,E_{d-1}$ we have $D\left(e\right) \subseteq \left\{\parents\left(j\left(e\right)\right) \cup \nondescendants\left(j\left(e\right)\right)\right\}$.
\end{proof}

\begin{example}[Example \ref{ex:exdag1contproof1} cont.]\label{ex:exdag1contproof2}
	Consider the first column of $M$ with $M_{1,1}=4$. Since $\parents\left(4\right)=\left\{2,5\right\}$, $\left\{4,5\right\} \in E_1$ and $4,2|5 \in E_2$, independently of the values in $M_{2,1},\ldots,M_{4,1}$, $\parents\left(4\right)=\left\{2,5\right\}$ is in the conditioning set for each of these edges. There will be more nodes in the conditioning set but $\left\{2,5\right\}$ in higher trees, yet, these are non-descendants of $4$ by \ref{ass:2}.
\end{example}

We will now conclude with the proof of Theorem \ref{thm:transformation} using the Lemmas \ref{lemma:prooflemma1} and \ref{lemma:prooflemma2}.
\begin{proof}
	Abbreviate $j_e \equiv j\left(e\right),\ k_e \equiv \ell\left(e\right),\ D_e \equiv D\left(e\right)$ and set $j_e,k_e|D_e \equiv j\left(e\right),\ell\left(e\right)|D\left(e\right)$ with $e \in E_{k+1},\ldots,E_{d-1}$ arbitrary but fixed. For the node $j_e$ in the DAG $\GGG$ we have by the directed local Markov property \eqref{eq:directedlocalmarkov} that $\condindep{j_e}{\nondescendants\left(j_e\right)}{\parents\left(j_e\right)}$ and thus with Lemma \ref{lemma:prooflemma1},
	\begin{equation}\label{eq:proof1}
		\condindep{j_e}{k_e \dotcup \left(\nondescendants\left(j_e\right) \setminus k_e\right)}{\parents\left(j_e\right)}.
	\end{equation}
	Set $\widehat{\nondescendants} \left(j_e\right) \coloneqq D_e \setminus \parents\left(j_e\right) \mbox{ with } \widehat{\nondescendants} \left(j_e\right) \subseteq \nondescendants\left(j_e\right) \mbox{ by Lemma } \ref{lemma:prooflemma2}$,
	plug it into \eqref{eq:proof1} obtaining
	\begin{equation}\label{eq:proof2}
		\condindep{j_e}{\left(k_e \dotcup \left(\big(\nondescendants\left(j_e\right) \setminus k_e\big) \setminus \widehat{\nondescendants} \left(j_e\right)\right) \dotcup \widehat{\nondescendants}\left(j_e\right)\right)}{\parents\left(j_e\right)},
	\end{equation}
	exploiting $k_e \cap \widehat{\nondescendants}\left(j_e\right) = \emptyset$, i.\,e.\ a node can not be part of the conditioning and the conditioned set of the same edge. Applying Proposition \ref{prop:CI_lemmas} on \eqref{eq:proof2} yields $\condindep{j_e}{k_e \dotcup \widehat{\nondescendants} \left(j_e\right)}{\parents\left(j_e\right)}$ by dropping $\left(\left(\nondescendants\left(j_e\right) \setminus k_e\right) \setminus \widehat{\nondescendants} \left(j_e\right)\right)$ in \eqref{eq:proof2}. $k_e \dotcup \widehat{\nondescendants} \left(j_e\right)$ is a disjoint union on which Proposition \ref{prop:CI_lemmas} can be applied to conclude $\condindep{j_e}{k_e}{\parents\left(j_e\right) \dotcup \widehat{\nondescendants} \left(j_e\right)}$.
	By definition of $\widehat{\nondescendants}\left(j_e\right)$, we have $D_e = \parents\left(j_e\right) \dotcup \widehat{\nondescendants}\left(j_e\right)$ and obtain the final result $\condindep{j_e}{k_e}{D_e} \mbox{ for } e \in E_{k+1},\ldots,E_{d-1}.$ Since each edge is assigned a pair copula density, we can now choose the independence copula density $c^{\perp}$ for these edges in $E_{k+1},\ldots,E_{d-1}$ backed by the conditional independence properties of the DAG. The resulting R-vine is thus a $k$-truncated R-vine.
\end{proof}
\begin{example}[Example \ref{ex:exdag1contproof2} cont.]\label{ex:exdag3contproof}
	We illustrate Theorem \ref{thm:transformation} using the previous Examples \ref{ex:exdag1contproof1} and \ref{ex:exdag1contproof2}. Consider column $1$ of $M$ and edge $4,3|256 \in E_4$. From the conditional independence $\condindep{4}{1,3,6}{2,5}$ obtained from the DAG $\GGG$, we select the non-descendants of $4$ to neglect, i.\ e.\ $1$, to yield $\condindep{4}{3,6}{2,5}$ by application of Proposition \ref{prop:CI_lemmas} and finally $\condindep{4}{3}{2,5,6}$ by second application of Proposition \ref{prop:CI_lemmas}.
\end{example}
Computing an R-vine representation $\VVV\left(\GGG\right)$ of an arbitrary $k$-DAG $\GGG$ is a complex combinatorial problem and the existence of an incomplete R-vine tree sequence satisfying \ref{ass:1} and \ref{ass:2} is not clear. We first show classes of $k$-DAGs where we can prove the existence of their R-vine representations. Afterwards, we introduce necessary conditions for the existence of an $k$-truncated R-vine representation. 

\begin{Corollary}[$k$-DAGs with R-vine representation]\label{cor:alwaysrepresentabledags}
	Let $\GGG= \left(V,E\right)$ be a $k$-DAG such that $V=\left\{v_1,\dots,v_d\right\}$ is an increasing topological ordering of $\GGG$. If, for all $v_i \in V$, $i=1,\dots,d$, we have $\parents\left(v_i\right) \subseteq \left\{v_{i-k},\dots,v_{i-1}\right\}$ or $\parents\left(v_i\right) \subseteq \left\{v_1,\dots,v_k\right\}$, an R-vine representation $\VVV\left(\GGG\right)$ of $\GGG$ exists.
\end{Corollary}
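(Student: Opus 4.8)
The plan is to verify Assumptions \ref{ass:1} and \ref{ass:2} for a carefully constructed incomplete R-vine tree sequence $\left(T_1,\ldots,T_k\right)$ and then invoke Theorem \ref{thm:transformation}. Since the hypothesis of the Corollary says that each node $v_i$ either has all its parents among its $k$ immediate predecessors $\left\{v_{i-k},\ldots,v_{i-1}\right\}$ or among the first $k$ nodes $\left\{v_1,\ldots,v_k\right\}$, the natural candidate is a D-vine-like structure on the first block and a C-vine-like (star) structure hanging off the first $k$ nodes for the remaining vertices. Concretely, I would build the R-vine matrix $M$ with main diagonal $\left(v_d,v_{d-1},\ldots,v_1\right)$ from top-left to bottom-right, which enforces \ref{ass:2} by construction, and then fill each column $p$ (with diagonal entry $v_i$) so that the bottom $k_{v_i}$ entries of that column are exactly the parents $\parents\left(v_i\right)$, ordered consistently; the hypothesis guarantees these parents all sit to the right of $v_i$ on the diagonal (because in the increasing topological order each parent of $v_i$ has smaller index, hence appears later on the decreasing diagonal), so the matrix entries are legal.

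The key steps, in order: (i) Fix the decreasing topological ordering $v_d,\ldots,v_1$ as the main diagonal. (ii) For a node $v_i$ whose parents lie in $\left\{v_{i-k},\ldots,v_{i-1}\right\}$, place those parents in column $p$ in the order induced by the topological ordering (closest predecessor at the bottom), mimicking a D-vine; for a node whose parents lie in $\left\{v_1,\ldots,v_k\right\}$, place those parents in the same fixed order $v_1,\ldots,v_k$ read appropriately, mimicking a C-vine anchored at the first block. (iii) Complete the remaining free slots of each column by any admissible entries (diagonal values to the right), which is always possible by the standard completion argument for R-vine matrices. (iv) Check that the resulting matrix is a valid R-vine matrix — this amounts to verifying the proximity condition, equivalently that the m-child / m-descendant structure is consistent; here the structural dichotomy in the hypothesis is exactly what makes the two "modes" (local D-vine block, global C-vine block) compatible, because the C-vine part only ever conditions on the fixed set $\left\{v_1,\ldots,v_k\right\}$ and the D-vine part only ever conditions on a contiguous window, and these never conflict in the shared prefix. (v) Observe that by construction every arrow $w\to v_i$ of $\GGG$ shows up as a conditioned pair $v_i,w$ in some tree $T_j$ with $j\le k$, giving \ref{ass:1}; and \ref{ass:2} holds by (i). Then Theorem \ref{thm:transformation} completes $\left(T_1,\ldots,T_k\right)$ to a $k$-truncated R-vine representation $\VVV\left(\GGG\right)$ of $\GGG$.

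The main obstacle is step (iv): proving that the matrix assembled from a mixture of the two prescribed column types is genuinely a valid R-vine matrix, i.e.\ that it satisfies the proximity condition (or, equivalently, the constraints characterizing R-vine matrices in \citet{dissmann-etal}). One has to rule out the possibility that placing parents of a "C-vine node" and parents of a "D-vine node" in nearby columns creates an inconsistency in the induced trees. I would handle this by splitting on the two cases for each column and showing that the proximity condition for an edge in $T_{j+1}$ reduces to a statement about the two columns involved both having the required shared m-child in $T_j$; in the pure-D-vine-window case this is the usual D-vine argument, in the pure-C-vine case it is the usual C-vine argument, and in the mixed case the conditioning set of the C-vine node is $\left\{v_1,\ldots,v_j\right\}$ while that of the D-vine node is a window ending at most at $v_{i-1}$ — one then checks these are compatible precisely because the windowed node's parents are forced to lie inside $\left\{v_{i-k},\ldots,v_{i-1}\right\}$, which either is disjoint from or nested with $\left\{v_1,\ldots,v_k\right\}$ depending on whether $i-k>k$. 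A clean way to organize this is to treat the two families of $k$-DAGs separately first (yielding $k$-truncated D-vine and C-vine representations as standalone results), and then argue that the hypothesis lets one glue a C-vine "tail" onto the $\left\{v_1,\ldots,v_k\right\}$-prefix of the D-vine without breaking admissibility, since beyond index $2k$ the two prescriptions coincide on which nodes can appear in which conditioning sets.
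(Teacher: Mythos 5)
Your parsing of the hypothesis differs from the one the paper's proof actually uses, and the difference is not cosmetic. You read the condition per node, allowing some nodes to be of ``window type'' ($\parents\left(v_i\right) \subseteq \left\{v_{i-k},\dots,v_{i-1}\right\}$) while others are of ``prefix type'' ($\parents\left(v_i\right) \subseteq \left\{v_1,\dots,v_k\right\}$), and you then try to assemble a hybrid D-vine/C-vine matrix. The paper treats the two alternatives as a global dichotomy: either every node is of window type, in which case $T_1$ is the path $v_1 - v_2 - \cdots - v_d$ and the representation is a pure D-vine, or every node is of prefix type, in which case the representation is the C-vine anchored at $v_1,\dots,v_k$. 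These two pure cases are exactly the ``standalone results'' you mention at the end of your proposal, and your treatment of each pure case (decreasing topological ordering on the diagonal for \ref{ass:2}, parents as the bottom entries of each column for \ref{ass:1}, then Theorem \ref{thm:transformation}) coincides with the paper's argument.

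The genuine gap is your step (iv), and it is not a verification you merely postponed: the mixed statement you are trying to prove is false, so the gluing cannot be carried out in general. Take $k=2$, $d=5$, with $\parents\left(v_3\right)=\parents\left(v_4\right)=\left\{v_1,v_2\right\}$ and $\parents\left(v_5\right)=\left\{v_3,v_4\right\}$. Every node satisfies one of the two inclusions ($v_4$ is of prefix type only, $v_5$ of window type only), yet no $\left(T_1,T_2\right)$ satisfying \ref{ass:1} and \ref{ass:2} exists: by part (i) of Proposition \ref{prop:necessaryconditiondag}, the induced subgraphs $T_1^{v_3}$, $T_1^{v_4}$, $T_1^{v_5}$ must each contain a path through their three nodes, hence at least two edges each; the only edge any two of these subgraphs can have in common is $\left\{v_1,v_2\right\}$, so their union has at least five edges on the five vertices $v_1,\dots,v_5$ and therefore contains a cycle, contradicting part (ii) of Proposition \ref{prop:necessaryconditiondag}. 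Consequently the corollary has to be read with the ``or'' outside the quantifier, and the proof should simply handle the two pure cases separately, as the paper does; any attempt to interleave a C-vine tail with a D-vine window in the same tree sequence runs into exactly the acyclicity obstruction above.
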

\begin{proof}
	Let $\parents\left(v_i\right) \subseteq \left\{v_{i-k},\dots,v_{i-1}\right\}$. The R-vine representation $\VVV\left(\GGG\right)$ is given by $T_1$ being path from $v_1$ to $v_d$ according to the topological ordering of $\GGG$, i.\,e.\ a \textit{D-vine}. Because of the proximity condition, $T_2,\dots,T_{d-1}$ are uniquely determined by $T_1$. In tree $T_j$, the edges have the form $v_i,v_{i-j}|v_{i-j+1},\dots,v_{i-1}$ and each conditioned set in the first $k$ R-vine trees represents an arrow of $\GGG$, satisfying \ref{ass:1}. \ref{ass:2} also holds since in a D-vine, the main diagonal of the R-vine matrix can be written as ordering of the path $T_1$. If $\parents\left(v_i\right) \subseteq \left\{v_{1},\dots,v_{k}\right\}$, $T_1$ is given a \textit{star} with central node $v_1$. $T_2$ is a star with central node $\left\{v_1{,}v_2\right\}$ and so on, giving rise to a \textit{C-vine}. In tree $T_{k-j}$, the edges have the form $v_i,v_{k-j}|v_{1},\dots,v_{k-j-1}$ for $i \ge k$, satisfying \ref{ass:1}. The main diagonal of the R-vine matrix of a C-vine is ordered according to the central nodes in the C-vine, satisfying \ref{ass:2}. To both, Theorem \ref{thm:transformation} applies. Examples of a $2$-DAG with D-vine and a $k$-DAG with C-vine representation are shown in Figure \ref{fig:exdagdvine}.
\end{proof}
\vspace{-0.25cm}
\begin{figure}[h]
	\centering
	\includegraphics[width=0.3\textwidth, trim={0.2cm 0.2cm 0.2cm 0.2cm},clip]{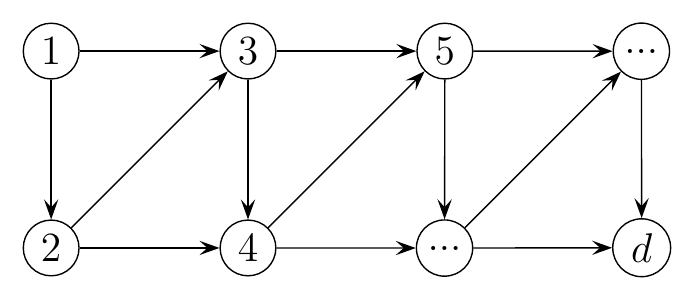}\hspace{1cm}
	\includegraphics[width=0.35\textwidth, trim={0.1cm 0.1cm 0.1cm 0.1cm},clip]{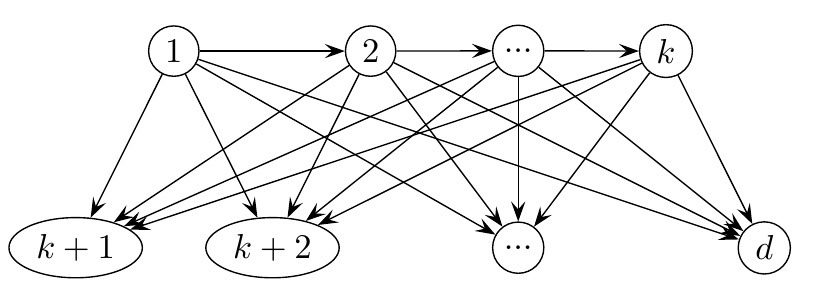}
	\caption{Examples of DAGs with D- and C-vine representation.}
	\label{fig:exdagdvine}
\end{figure}
\vspace{-0.25cm}
We now present necessary conditions for the first tree of an R-vine representation. It is of particular importance as it influences all higher order trees by the proximity condition.
\subsection{Necessary conditions for Theorem \ref{thm:transformation}}\label{subsec:necessaryconditions}
\begin{Proposition}[Necessary conditions]\label{prop:necessaryconditiondag}
	Consider a $k$-DAG $\GGG = \left(V,E\right)$ and the sets
	\begin{equation*}
		V^v \coloneqq \left\{v,\parents\left(v\right)\right\}=\left\{v,w_1^v,\ldots,w_{k_v}^v\right\}, v \in V.
	\end{equation*}
	Assume there exists an R-vine representation $\VVV\left(\GGG\right)= \left(T_1,\dots,T_k\right)$ such that \ref{ass:1} and \ref{ass:2} hold. For $v \in V$, denote the induced subgraphs $T_1^v \coloneqq \left(V^v,E^v\right) \subseteq T_1$ of $T_1=\left(V,E_1\right)$ on $V^v$. Thus, in $E^v$ are all edges in $T_1$ between nodes of $V^v$. Then, $T_1=\left(V,E_1\right)$ must be such that
	\begin{enumerate}[label={(\roman*})]
		\item for all $v \in V$ with $\parents\left(v\right) = k$, $T_1^v$ contains a path involving all nodes of $V^v$, \label{eq:neccond:1}
		\item the union of the induced subgraphs $\bigcup_{i \in \III}~T_1^{v_i} \coloneqq \left(\bigcup_{i \in \III}~V^{v_i},\bigcup_{i \in \III}~E^{v_i}\right) \subseteq T_1$ is acyclic for $\III\coloneqq \left\{i \in V:\left|\parents\left(v_i\right)\right| = k_{v_i} = k\right\}$. \label{eq:neccond2}
	\end{enumerate}
\end{Proposition}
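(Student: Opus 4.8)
The plan is to derive each of the two necessary conditions directly from the definition of the complete union $A_e$ together with Assumptions \ref{ass:1} and \ref{ass:2}, using the structural fact that the first tree $T_1$ of any R-vine is an ordinary tree on $V$ (hence acyclic and connected). I would phrase everything in terms of the edges of $T_1$ that are ``used up'' by parent-child arrows of $\GGG$.

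\textbf{Condition \ref{eq:neccond:1}.} Fix $v \in V$ with $|\parents(v)| = k$, and write $\parents(v) = \{w_1^v,\dots,w_k^v\}$. By \ref{ass:1}, for each $m = 1,\dots,k$ there is a tree index $i_m \in \{1,\dots,k\}$ and an edge $e_m \in E_{i_m}$ whose conditioned set is exactly $\{v, w_m^v\}$, i.e. $\{j(e_m),\ell(e_m)\} = \{v,w_m^v\}$. Since the complete union of any such edge $e_m$ satisfies $A_{e_m} \supseteq \{v, w_m^v\}$ and, by \ref{ass:2} (which forces $j(e) \not>_\GGG \ell(e)$ for every edge and in particular controls which diagonal entries a column may use), the complete union $A_{e_m}$ of an edge in $T_{i_m}$ with conditioned set $\{v,w_m^v\}$ is contained in $V^v$ — here I would argue that any m-descendant in $V_1$ of $e_m$ must be either $v$, a parent of $v$, or a non-descendant of $v$ placed by \ref{ass:2}, but the conditioning set $D_{e_m} = A_{j(e_m)} \cap A_{\ell(e_m)}$ consists only of parents of $v$ as in Lemma \ref{lemma:prooflemma2}, so $A_{e_m} \subseteq V^v$. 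Consequently all the m-descendants in $T_1$ of the edges $e_1,\dots,e_k$ lie in $V^v$, and the edges of $T_1$ that are m-descendants of some $e_m$ form a connected subgraph of $T_1^v$ (connectedness comes from the proximity condition applied down the tree levels from $e_m$ to $T_1$). Counting: these edges together must connect all $k+1$ nodes of $V^v$, because each $w_m^v$ must be joined to $v$ through the tower of inclusions realizing the conditioned set $\{v,w_m^v\}$. A connected subgraph of the tree $T_1$ spanning $k+1$ nodes with exactly $k$ edges is itself a tree; to upgrade ``spanning tree on $V^v$'' to ``path on $V^v$'' I would invoke Assumption \ref{ass:2} once more — the decreasing topological order on the diagonal forces the nodes $w_1^v,\dots,w_k^v$ into the $k$ rows directly below $v$ in $v$'s column of $M$, and the standard reading of the R-vine matrix then forces these $k$ edges of $T_1^v$ to be a path through $v$ and its parents.

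\textbf{Condition \ref{eq:neccond2}.} This one is shorter: for each $i \in \III$, by condition \ref{eq:neccond:1} just established, $T_1^{v_i}$ is a path, hence in particular a subgraph of $T_1$; so $\bigcup_{i\in\III} T_1^{v_i}$ is a subgraph of $T_1$. Since $T_1$ is a tree, it is acyclic, and every subgraph of an acyclic graph is acyclic. Thus the union is acyclic. I would state this explicitly and note that it is the contrapositive that is useful in practice: if one exhibits sets $V^{v_i}$, $i \in \III$, whose induced-subgraph union contains a cycle, no R-vine representation satisfying \ref{ass:1} and \ref{ass:2} can exist.

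\textbf{Main obstacle.} The delicate point is the passage from ``the edges of $T_1$ realizing the $k$ conditioned sets $\{v,w_m^v\}$ form a spanning tree of $V^v$'' to ``$T_1^v$ \emph{contains a path} on all of $V^v$'' — i.e. ruling out that $T_1^v$ is a star or some other non-path tree. The clean way is to show that if $j(e),\ell(e)|D(e)$ is an edge in tree $T_i$ with $|D(e)| = i-1$ and $D(e) \subseteq \parents(j(e))$, then for every $d' \in D(e)$ there is, inside $T_1^v$, an edge incident to $d'$ other than (if $i \geq 3$) the one toward $v$; iterating down from $T_k$ to $T_1$ via the proximity condition then forces the parents to be linearly ordered along a path rather than attached in a star. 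Equivalently, and perhaps more transparently for the write-up, read it off the R-vine matrix $M$: by \ref{ass:2} the entries $M_{d-k+1,p},\dots,M_{d,p}$ in $v$'s column $p$ are exactly $w_1^v,\dots,w_k^v$ in some order, and the matrix-to-tree dictionary (as in Example \ref{ex:exvine1}) translates column $p$ restricted to these rows into precisely a path $v - w_{\sigma(1)}^v - w_{\sigma(2)}^v - \dots - w_{\sigma(k)}^v$ in $T_1$, which is the desired Hamiltonian path in $T_1^v$. I would present the matrix argument as the main line and mention the proximity-condition argument as an alternative.
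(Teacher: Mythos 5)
Your overall architecture matches the paper's: part \ref{eq:neccond2} is the same one-line observation that every subgraph of a tree is acyclic, and for part \ref{eq:neccond:1} your complete-union/m-descendant argument and the paper's unpacking of the nested set formalism both correctly yield that the $T_1$-edges realizing the conditioned sets $\left\{v,w_m^v\right\}$ form a connected spanning subgraph of $V^v$, i.\,e.\ a spanning tree of $V^v$ with $k$ edges. The genuine gap is exactly the step you flag as the ``main obstacle'', and neither of your two proposed fixes closes it. The matrix argument rests on a false dictionary: a column $p$ of an R-vine matrix does \emph{not} translate into a path $v - M_{d,p} - M_{d-1,p} - \cdots$ in $T_1$. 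Only the $T_2$ level is forced: the edge $v,w_2^v|w_1^v$ has complete union $\left\{v,w_1^v,w_2^v\right\}$ and conditioning set $\left\{w_1^v\right\}$, so its m-children must be $\left\{v,w_1^v\right\}$ and $\left\{w_1^v,w_2^v\right\}$, both in $E_1$. At the $T_3$ level, the edge $v,w_3^v|w_1^v,w_2^v$ has a second m-child with complete union $\left\{w_1^v,w_2^v,w_3^v\right\}$ which, by proximity, must have $\left\{w_1^v,w_2^v\right\}$ as an m-child; but this leaves two possibilities, namely $w_1^v,w_3^v|w_2^v$ (forcing $\left\{w_2^v,w_3^v\right\}\in E_1$, the path) or $w_2^v,w_3^v|w_1^v$ (forcing $\left\{w_1^v,w_3^v\right\}\in E_1$, a star at $w_1^v$). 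Nothing excludes the second option, so no iteration ``forces the parents to be linearly ordered''. A concrete witness is the C-vine of Corollary \ref{cor:alwaysrepresentabledags}: for a $k$-DAG with $\parents\left(v\right)=\left\{v_1,\dots,v_k\right\}$ and $k\ge 3$, the C-vine rooted at $v_1$ satisfies \ref{ass:1} and \ref{ass:2}, every column of its matrix reads $v_1,v_2,v_3,\dots$ from the bottom row upwards, and yet $T_1$ is a star, so $T_1^v$ contains no path through all of $V^v$.

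What your argument actually establishes is the weaker statement that $T_1^v$ is a spanning tree of $V^v$. That weaker statement already suffices for part \ref{eq:neccond2} and for the cycle obstruction of Example \ref{ex:exdagcycles}, and for $k=2$ it is equivalent to the path statement, since every tree on three nodes is a path. For $k\ge 3$ the path conclusion cannot follow from \ref{ass:1} and \ref{ass:2} alone, so you should either restrict \ref{eq:neccond:1} to the case $k=2$ or restate it with ``spanning tree of $V^v$'' in place of ``path involving all nodes of $V^v$''; with that amendment your proof of both parts is complete.
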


\begin{proof}
	To show $\left(i\right)$ assume $\VVV\left(\GGG\right)= \left(T_1,\dots,T_k\right)$ satisfies \ref{ass:1} and \ref{ass:2}. Choose $v \in V$ with $k_v=k$ arbitrary but fixed. Order the set $\parents\left(v\right)= \left\{w_1^v,\ldots,w_{k}^v\right\}$ such that $v,w_i^v$ is the conditioned set of an edge $e \in E_i,\ i=1,\ldots,k$, ensured by \ref{ass:2}. Then, by the proof of Theorem \ref{thm:transformation}, each edge $e \in E_i,\ i=1,\ldots,k$ corresponding to $v \in V$ must have the form $v{,}w_i^v|w_1^v,\dots,w_{i-1}^v$. By the set formalism, see Example \ref{ex:exvine1}, and the proximity condition, see  \ref{eq:proximitycondition} on page \pageref{eq:proximitycondition}, we have  $\left\{\left\{v{,}w_1^v\right\},\left\{w_1^v{,}w_2^v\right\}\right\} \in E_2$ requiring $\left\{v{,}w_1^v\right\} \in E_1$ and $\left\{w_1^v{,}w_2^v\right\} \in E_1$. For $\left\{\left\{\left\{v,w_1^v\right\},\left\{w_1^v,w_2^v\right\}\right\},\left\{\left\{w_1^v,w_2^v\right\},\left\{w_2^v,w_3^v\right\}\right\}\right\} \in E_3$ we can conclude in a first step $\left\{\left\{w_1^v{,}w_2^v\right\},\left\{w_2^v{,}w_3^v\right\}\right\} \in E_2$ and in a second step $\left\{w_2^v,w_3^v\right\} \in E_1$. This can be extended to $E_k$ and yields $\left\{w_i^v,w_{i+1}^v\right\} \in E_1$ for $i = 1,\ldots,k-1$. Thus, $v$ and its parents, i.\,e.\ $V^v$ represent a path in $T_1^v$. Showing $\left(ii\right)$, for each $i \in \III$ the graph $T_1^{v_i}$ is a subgraph of $T_1$ by $\left(i\right)$. Thus, the union of $T_1^{v_i}$ over all $i \in \III$ must be a subgraph of $T_1$. Since $T_1$ is a tree, it is acyclic, hence, each of its subgraphs must be, and so the graph in $\left(ii\right)$.
\end{proof}
Whereas the proof of $\left(i\right)$ is a direct consequence of the \textit{proximity condition}, the proof of $\left(ii\right)$ is less intuitive. We illustrate this property.

\begin{example}[DAG in $6$ dimensions]\label{ex:exdagcycles}
	Consider the DAG $\GGG_2$ in Figure \ref{fig:exdagcycles:ex2dag1}. By Proposition \ref{prop:necessaryconditiondag}, we need to find an R-vine tree $T_1=\left(V,E_1\right)$ such that the induced subgraphs $T_1^v = \left(V^v,E^v\right) \subseteq T_1$ contain a paths involving all nodes of $V^v$ for $V^v \in \left\{\left\{4,1,2\right\},\left\{5,1,3\right\},\left\{6,2,3\right\}\right\}$. 
	\begin{figure}[h]
		\centering
		\includegraphics[width=0.2\textwidth, trim={0.2cm 0.2cm 0.2cm 0.2cm},clip]{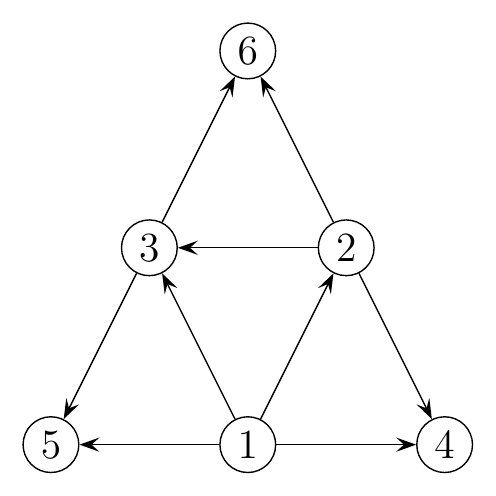}
		\caption{Example \ref{ex:exdagcycles}: DAG $\GGG_2$}
		\label{fig:exdagcycles:ex2dag1}
	\end{figure}
	This is not possible. If it would be, use the path $T_1^4$ from $1$ to $2$, $T_1^6$ from $2$ to $3$ and finally $T_1^5$ from $3$ to $1$. However, this creates a cycle and $T_1$ as such can not be a tree. Yet, removing any edge which closes the cycle yields an induced subgraph which is no longer connected, i.\,e.\ a path. Thus, the DAG $\GGG_2$ can not be represented by a $2$-truncated R-vine. $3$-truncated R-vines are possible which are shown in Appendix \ref{sec:appendix_examples}, Example \ref{ex:exdagcycles:cont.}.
\end{example}

Based on Proposition \ref{prop:necessaryconditiondag}, we are given an intuition how to construct an admissible first R-vine tree $T_1$ of $\VVV\left(\GGG\right)$ for a DAG $\GGG$. Moreover, it also yields a best possible truncation level $k^\prime > k$ for which a $k^\prime$-truncated R-vine representation exists.

\begin{Corollary}[Best possible truncation level $k^\prime$]\label{cor:lowerboundkprime}
	Consider a $k$-DAG $\GGG = \left(V,E\right)$. Let $T_1=\left(V,E_1\right)$ be a tree and for each $v,w \in V$ let $\delta_v^w$ be the length of the unique path from $v$ to $w$ in $T_1$. If $T_1$ is extended by successive R-vine trees $T_i, i \in \left\{2,\dots,d-1\right\}$, then the truncation level $k^\prime$ can be bound from below by
	\begin{equation*}
		k^\prime \ge \max_{v \in V}\max_{w \in \parents\left(v\right)}\delta_v^w.
	\end{equation*}
\end{Corollary}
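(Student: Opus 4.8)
The claim is a lower bound on the best achievable truncation level $k'$ when the first R-vine tree is fixed to some tree $T_1$. The strategy is to show that \emph{any} completion of $T_1$ into a full R-vine tree sequence must place the conditioned pair $\{v,w\}$, for $w \in \parents(v)$, in a tree of index at least $\delta_v^w$, and then appeal to Assumption \ref{ass:1}: a $k'$-truncated representation of $\GGG$ requires every arrow $w \to v$ of $\GGG$ to appear as a conditioned set somewhere in trees $T_1,\dots,T_{k'}$. Combining these two facts forces $k' \ge \delta_v^w$ for every $v$ and every $w \in \parents(v)$, hence $k' \ge \max_{v}\max_{w\in\parents(v)}\delta_v^w$.

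First I would make precise the following purely combinatorial fact about R-vines: if an edge $e$ in tree $T_i$ has conditioned set $\{a,b\}$, then the complete union $A_e$ satisfies $|A_e| = i+1$, and moreover $a$ and $b$ are connected by a path of length exactly $i$ in $T_1$ using only nodes of $A_e$. This is standard (it follows by induction on $i$ from the proximity condition together with the definition of m-descendants / complete union, cf.\ \citet[p.\ 96]{KurowickaCooke2006} and the set formalism recalled after Example \ref{ex:exvine1}); I would state it as the key lemma. In particular, if the conditioned set $\{v,w\}$ first occurs in tree $T_i$, then there is a path of length $i$ from $v$ to $w$ in $T_1$. Since $\delta_v^w$ is the length of the \emph{unique} path from $v$ to $w$ in the tree $T_1$ (uniqueness because $T_1$ is a tree on $d$ nodes with $d-1$ edges), this path of length $i$ must be exactly that unique path, so $i = \delta_v^w$. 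Thus the conditioned set $\{v,w\}$, once it appears at all in the tree sequence built on $T_1$, appears precisely in tree $T_{\delta_v^w}$.

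Next I would invoke Assumption \ref{ass:1}, which is part of what it means (via Theorem \ref{thm:transformation} and Definition \ref{def:rvinerep}) for the completed sequence to be a $k'$-truncated R-vine representation of $\GGG$: for every $v \in V$ and every $w \in \parents(v)$, the conditioned set $\{v,w\}$ must occur in one of the trees $T_1,\dots,T_{k'}$. By the previous paragraph that tree has index $\delta_v^w$, so necessarily $\delta_v^w \le k'$. Taking the maximum over all $v$ and all $w \in \parents(v)$ yields the stated bound
\[
k' \ \ge\ \max_{v \in V}\ \max_{w \in \parents(v)}\ \delta_v^w .
\]

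\textbf{Main obstacle.} The only real content is the combinatorial lemma identifying the tree index at which a given conditioned pair can appear with the length of the $T_1$-path between them; everything else is bookkeeping. The subtlety to handle carefully is that \emph{a priori} a conditioned pair $\{v,w\}$ could fail to appear in the sequence at all within the first $k'$ trees (it is forced to appear only by Assumption \ref{ass:1}, i.e.\ only because $w$ is a parent of $v$), and that when it does appear it does so in a tree whose index is forced by $T_1$ and cannot be chosen freely — this rigidity is exactly what makes $T_1$, and hence the path lengths $\delta_v^w$, the binding constraint. I would also remark that the bound is only a lower bound: nothing here guarantees that the R-vine \emph{can} be completed so that all non-parent pairs get independence copulas, which is why the statement speaks of a best possible truncation level "bound from below" rather than an exact value.
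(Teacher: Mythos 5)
Your overall strategy matches the paper's: path uniqueness in the tree $T_1$ together with the proximity condition forces the conditioned set $\left\{v,w\right\}$ to appear no earlier than tree $T_{\delta_v^w}$, and since a parent--child pair can never legitimately carry the independence copula, the truncation level must be at least $\delta_v^w$. The only real difference is in the second half: the paper argues via d-separation (two nodes joined by an arrow in $\GGG$ cannot be d-separated by any set, so the pair copula attached to the edge with conditioned set $\left\{v,w\right\}$ is never $c^\perp$), whereas you invoke \ref{ass:1}; read through Definition \ref{def:rvinerep} these amount to the same constraint, so this is a cosmetic difference.

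There is, however, a concrete error in your key lemma. It is not true that the conditioned set $\left\{a,b\right\}$ of an edge in tree $T_i$ is joined in $T_1$ by a path of length \emph{exactly} $i$, and consequently it is false that a pair $\left\{v,w\right\}$ ``appears precisely in tree $T_{\delta_v^w}$.'' Counterexample: in a C-vine whose first tree is a star centered at $v_1$, the conditioned set $\left\{v_3,v_4\right\}$ occurs in tree $T_3$, yet the unique $T_1$-path from $v_3$ to $v_4$ is $v_3, v_1, v_4$, of length $2$. What is true --- and is all your argument needs --- is the one-sided statement: for $e \in E_i$ the complete union $A_e$ has $i+1$ elements and induces a connected subtree of $T_1$, which therefore contains the unique $T_1$-path between the two conditioned nodes, so that path has length at most $i$, i.e.\ $\delta_a^b \le i$. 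Equivalently, the pair $\left\{v,w\right\}$ cannot occur as a conditioned set in any tree $T_i$ with $i < \delta_v^w$. With ``exactly $i$'' weakened to ``at most $i$,'' the final chain $\delta_v^w \le i \le k^\prime$ goes through unchanged and the bound is established; as stated, though, the lemma is false and should be corrected before you rely on it.
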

An example and the proof, using the proximity condition, d-separation and the graphical structure of $T_1$ is given in Appendix \ref{subsec:appendix_proof_corollarylowerbound}. \ref{ass:1} and \ref{ass:2} are strong assumptions and hence only rarely satisfied for arbitrary DAGs. This gives rise to a heuristic approach for arbitrary $k$-DAGs to find a sparse R-vine representation exploiting their conditional independences.

\subsection{Representing $\mathbf{k}$-DAGs as sparse R-vines}\label{subsec:generaldag2}
Our goal is to find an R-vine representation $\VVV\left(\GGG\right)$ of an arbitrary $k$-DAG $\GGG$ for $k \ge 2$. For the first R-vine tree $T_1$, we have $d^{d-2}$ candidates. Considering all these and checking Proposition \ref{prop:necessaryconditiondag} is not feasible. Additionally, \ref{ass:2} is hard to check upfront since it is not fully understood how a certain R-vine matrix diagonal relates to specific R-vines. Fixing the main diagonal may thus result in suboptimal models. Hence, Theorem \ref{thm:transformation} can not be applied directly. Denote $\GGG_k$ a $k$-DAG. By \ref{ass:1}, arrows in $\GGG_k$ shall be modelled as conditioned sets in R-vine trees $T_i$ for $i \in \left\{1,\dots,k\right\}$. Yet, for $k \ge 2$, there may be up to $kd - \left(k\left(k+1\right)\right)/2$ candidate edges for $T_1$ which is limited to $d-1$ edges. Hence, it is crucial to find the most important arrows of $\GGG_k$ for $T_1$. An heuristic measure for the importance of an arrow $v \rightarrow w$ in $\GGG_k$, fitted on data, is how often the arrow $v \rightarrow w$ exists in $1$,\dots,$k-1$-DAGs $\GGG_1,\dots,\GGG_{k-1}$, also fitted on data. However, also an arrow $w \rightarrow v$ is possible. Since R-vines are undirected graphical models, we neglect the orientation of arrows in the DAGs by considering their skeletons. Thus, for each edge $\left(v,w\right)$ in the skeleton $\GGG_k^s$ of the DAG $\GGG_k$ we estimate DAGs $\GGG_1,\dots,\GGG_{k-1}$ on the data, obtain their skeletons $\GGG_1^s,\dots,\GGG_{k-1}^s$ and count how often the edge $\left(v,w\right)$ exists in these graphs. An edge $\left(v,w\right) \in \GGG_i^s$ might be more important than an edge $\left(v,w\right) \in \GGG_j^s$ with $i < j$, which we describe by a non-increasing function of the maximal number of parents $g(i)$. Formally, consider $i$-DAGs $\GGG_i$ for $i=1,\dots,k$ estimated on data. Denote $\GGG_i^s=\left(V,E_i^s\right)$ the skeleton of $\GGG_i$ for $i=1,\dots,k$ and define an undirected graph $\HHH=\left(V,E_1^{\HHH}\right) \coloneqq \bigcup_{i=1}^k~\GGG_i^s$ with edge weights $\mu_1$ for $\left(v,w\right) \in E^{\HHH}_1$ given by
\begin{equation}\label{eq:edgeweight}
	\mu_1\left(v,w\right) \coloneqq \sum_{i=1}^{k}~g\left(i\right)\mathds{1}_{\left\{\left(v,w\right) \in E_i^s\right\}}\left(v,w\right),
\end{equation}
with $g\left(i\right) > 0$ non-increasing for $i=1,\dots,k$. In the remainder, $g\left(i\right) \equiv 1$. To our knowledge, this approach has not been used before. On $\HHH$, find a maximum spanning tree $T_1$ by, e.\,g.\ \citet{Prim1957}, maximizing the sum of weights $\mu_1$. The higher order trees are built iteratively. First, define a full graph $T_2=\left(V_2,E_2\right)$ on $V_2=E_1$ and delete each edge in $E_2$ not allowed by the proximity condition. Denote the edges with conditioned and conditioning set  $j\left(e\right),\ell\left(e\right)|D\left(e\right)$ for $e \in E_2$. Set weights for $e \in E_2$ according to
\begin{equation}\label{eq:edgeweight2}
	\mu_2\left(e\right)= \mu_1\left(j\left(e\right),\ell\left(e\right)\right) > 0 \mbox{ if } \mu_1\left(j\left(e\right),\ell\left(e\right)\right) \neq 0.
\end{equation}
Thus, $e \in E_2$ has positive weight if its conditioned set is an edge in at least one of the skeletons $\GGG_1^s,\dots,\GGG_k^s$. We can not ensure \ref{ass:2}, and thus not use the directed local Markov property as in Theorem \ref{thm:transformation}. We overcome this using \textit{d-separation}. More precisely, for $j\left(e\right),\ell\left(e\right)|D\left(e\right)$, $e \in E_i,\ 1\le i \le d-1$, we check if $j\left(e\right)$ is d-separated from $\ell\left(e\right)$ given $D\left(e\right)$ in $\GGG_k$. To facilitate conditional independence, i.\,e.\ sparsity, for $e\in E_2$ assign $\mu_0 \in \left(0,g\left(k\right)\right)$
\begin{equation*}
	\mu_2\left(e\right)= \mu_0 \mbox{ if }  j\left(e\right) \mbox{ is d-separated from } \ell\left(e\right) \mbox{ by } D\left(e\right) \mbox{ in } \GGG_k.
\end{equation*}
In the remainder, $\mu_0 \coloneqq g\left(1\right)/2 = 1/2$, i.\,e.\ it will not exceed the weight of an edge $j\left(e\right),\ell\left(e\right)|D\left(e\right)$ with $j\left(e\right) \leftrightarrow \ell\left(e\right)$ in any of the DAGs $\GGG_1,\dots,\GGG_k$ as we want to model relationships in the DAGs prioritized. All other weights are zero and a maximum spanning tree algorithm is applied on $E_2$. If an edge with weight $\mu_0$ is chosen, we can directly set the independence copula. We repeat this for $T_3,\ldots,T_{d-1}$. Since each pair of variables occurs exactly once as conditioned set in an R-vine, each weight $\mu_1$ in $\HHH$ is used exactly once. The actual truncation level $k^\prime$ is such that the R-vine trees $T_{k^\prime+1},\ldots,T_{d-1}$ contain only the independence copula. The corresponding algorithm is given in Appendix \ref{sec:appendix_algorithms}, for a toy example, see Appendix \ref{sec:appendix_toyexample}. We test it in the following simulation study and application.

\section{Simulation Study}\label{sec:simstudy}
For the next two sections, let $\mathbf{X}=\left(X_1,\dots,X_d\right) \in \mathbb{R}^d$ and define
\begin{enumerate}
	\itemsep-0.25em
	\item \textit{x-scale}: the original scale of $X_i$ with density $f_i(x_i),\ i=1,\dots,d$,
	\item \textit{u-scale} or \textit{copula-scale}: $U_i = F_i\left(X_i\right)$, $F_i$ the cdf of $X_i$ and $U_i \sim \UUU\left[0,1\right]$, $i=1,\dots,d$,
	\item \textit{z-scale}: $Z_i = \Phi^{-1}\left(U_i\right)$, $\Phi$ the cdf of $\NNN\left(0,1\right)$ thus $Z_i \sim \NNN\left(0,1\right)$, $i=1,\dots,d$.
\end{enumerate}
We show that our approach of Section \ref{subsec:generaldag2} calculates useful R-vine models in terms of goodness-of-fit in very short time. We collected data from January 1, 2000 to December 31, 2014 of the S\&P100 constituents. At the end of the observation period, still $82$ of the original $100$ stocks were in the index. For these $82$ stocks and the index we calculated daily log-returns, obtaining data in $83$ dimensions with $3772$ observations. We remove trend and seasonality off the data using ARMA-GARCH time series models with Student-t distributed residuals to obtain data on the $\textit{x-scale}$. Afterwards, we transform the residuals to the \textit{copula-scale} using their, non-parametrically estimated, empirical cumulative distribution function. For this dataset, we fitted five R-vine models using the R-package \textsf{VineCopula}, see \citet{VineCopula} using the algorithm of \citet{dissmann-etal}, introduced on p. \pageref{page:dissmann}. 
\setlength{\extrarowheight}{-.3em}
\begin{table}[h]
	\centering
	\begin{tabular}{rrrr}
		Scenario & pair copula families & truncation level & level $\alpha$  \\ 
		\hline \hline
		1 & all & - & $0.05$ \\ 
		2 & independence, t & - & $0.05$ \\
		3 & all & 4 & $0.05$ \\
		4 & all & - & $0.2$ \\
		5 & all & 4 & $0.2$ \\
		\hline
	\end{tabular}
	\caption{Parameter settings for scenarios in the simulation study.}
	\label{table:simstudy_parameters}
\end{table}
\vspace{-0.25cm}
The models were fitted with settings as shown in Table \ref{table:simstudy_parameters} such that Scenarios 3, 4 and 5 exhibit more sparsity. This is done by either imposing truncation levels or performing independence tests at level $\alpha$ while fitting the R-vines, see columns $3$ and $4$ of Table \ref{table:simstudy_parameters}. $\alpha = 0.2$ leads to many more independence copulas than $\alpha = 0.05$. From these models, $100$ replications with $1000$ data points each were simulated. For each of the simulated datasets, Dissmann's and our algorithm, see Section \ref{subsec:generaldag2}, were applied using $k$-DAGs $\GGG_k$ with $k=2,3,4$.
\begin{figure}[ht]
	\centering
	\includegraphics[width=0.24\textwidth, trim={0.1cm 0.1cm 0.1cm 0.1cm},clip]{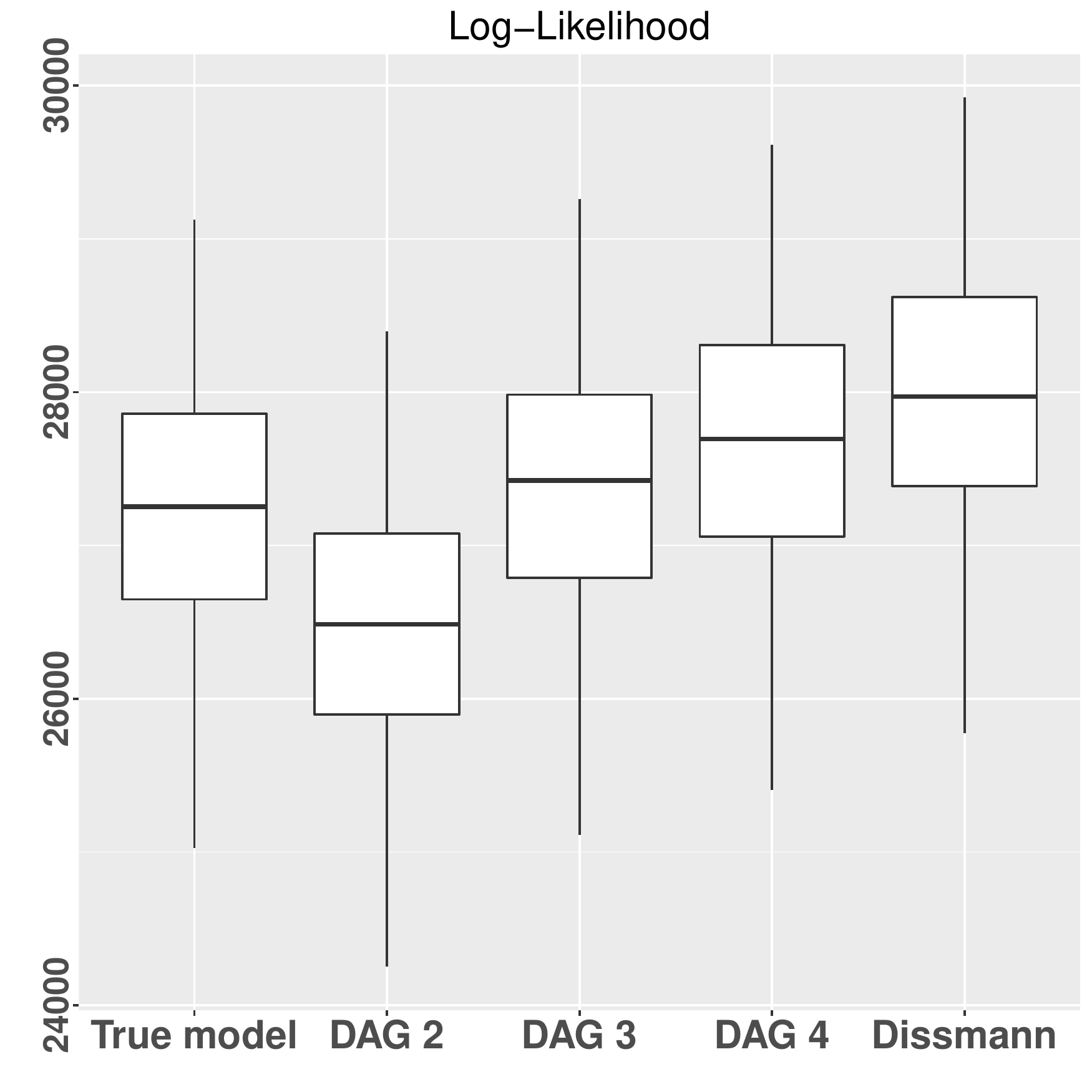}
	\includegraphics[width=0.24\textwidth, trim={0.1cm 0.1cm 0.1cm 0.1cm},clip]{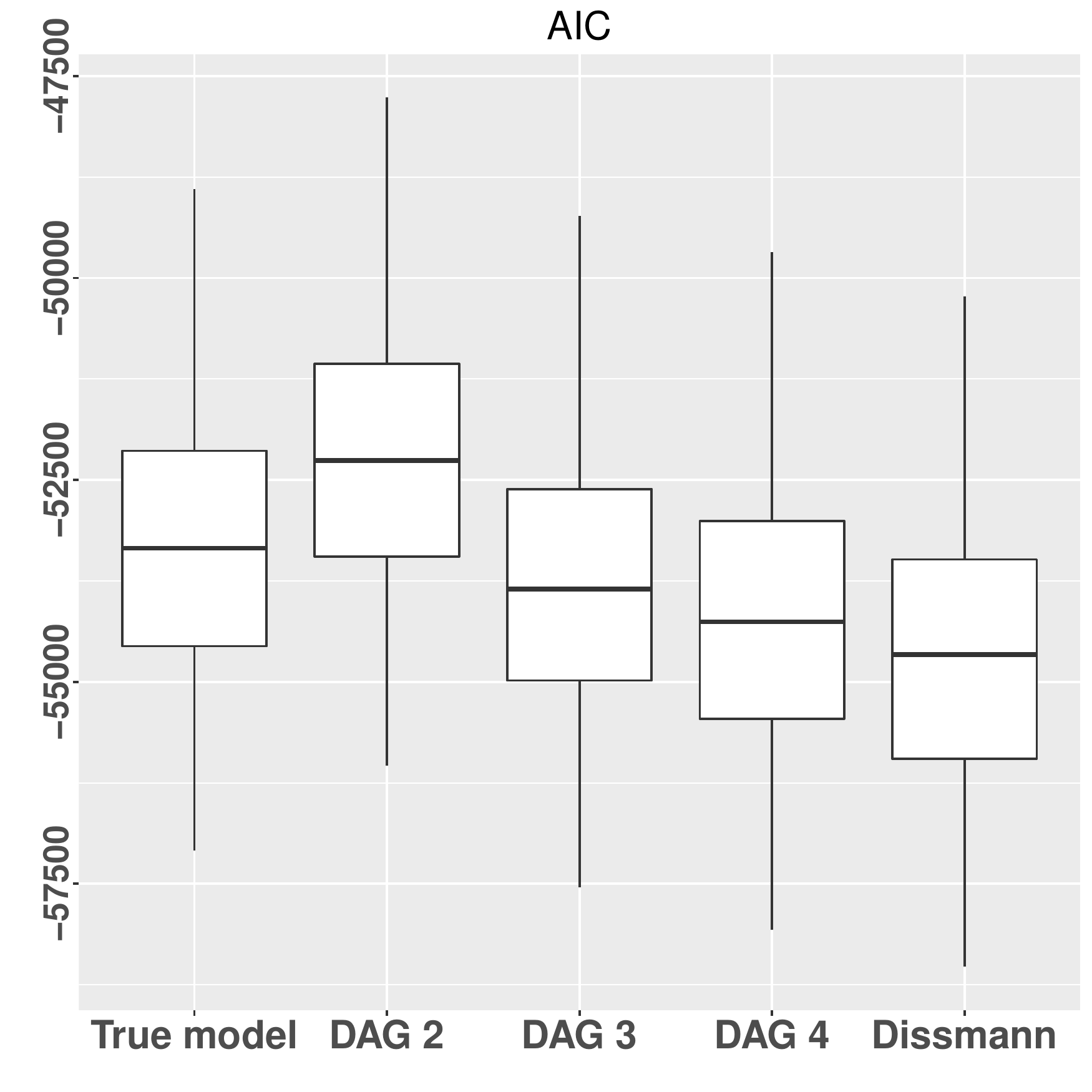}
	\includegraphics[width=0.24\textwidth, trim={0.1cm 0.1cm 0.1cm 0.1cm},clip]{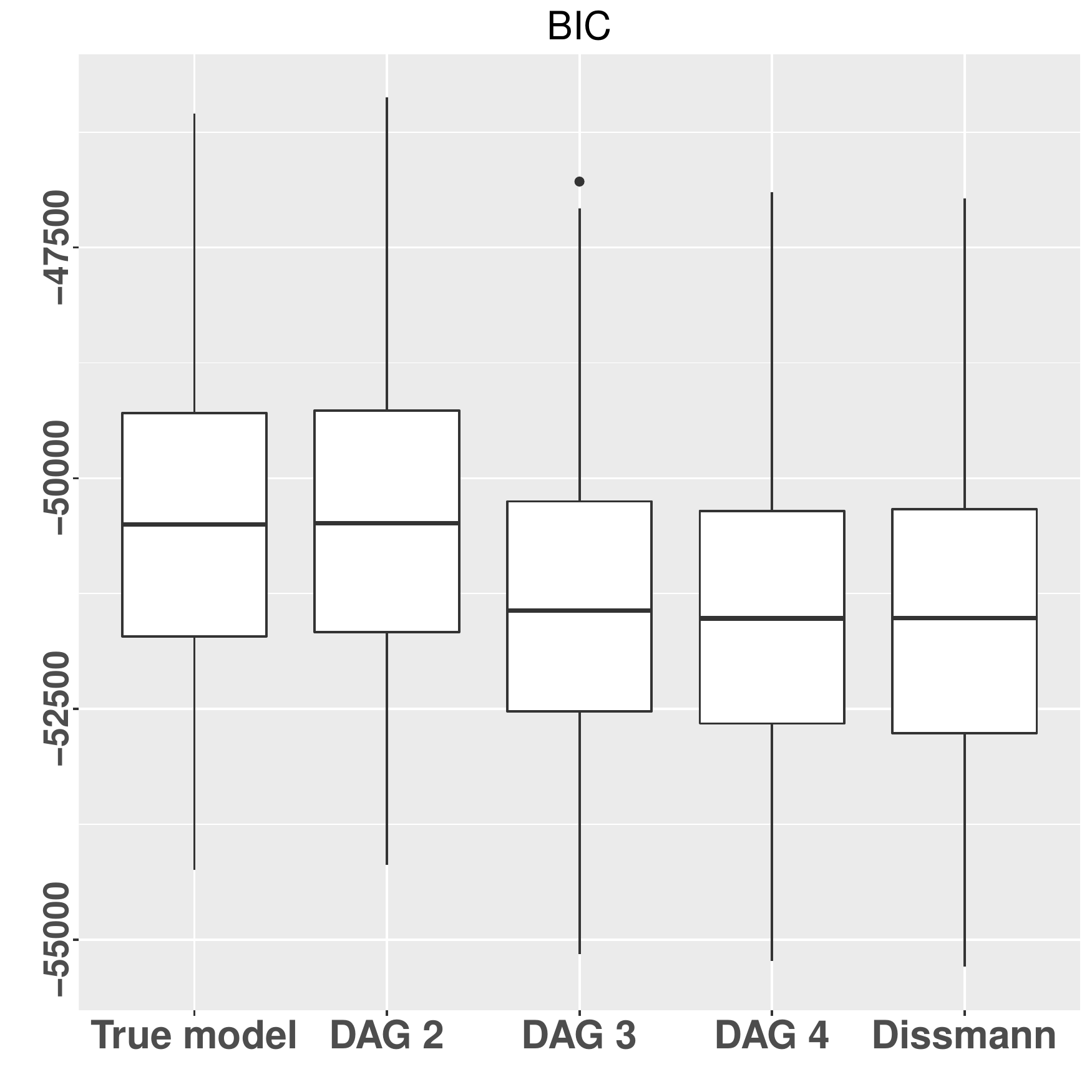}
	\includegraphics[width=0.24\textwidth, trim={0.1cm 0.1cm 0.1cm 0.1cm},clip]{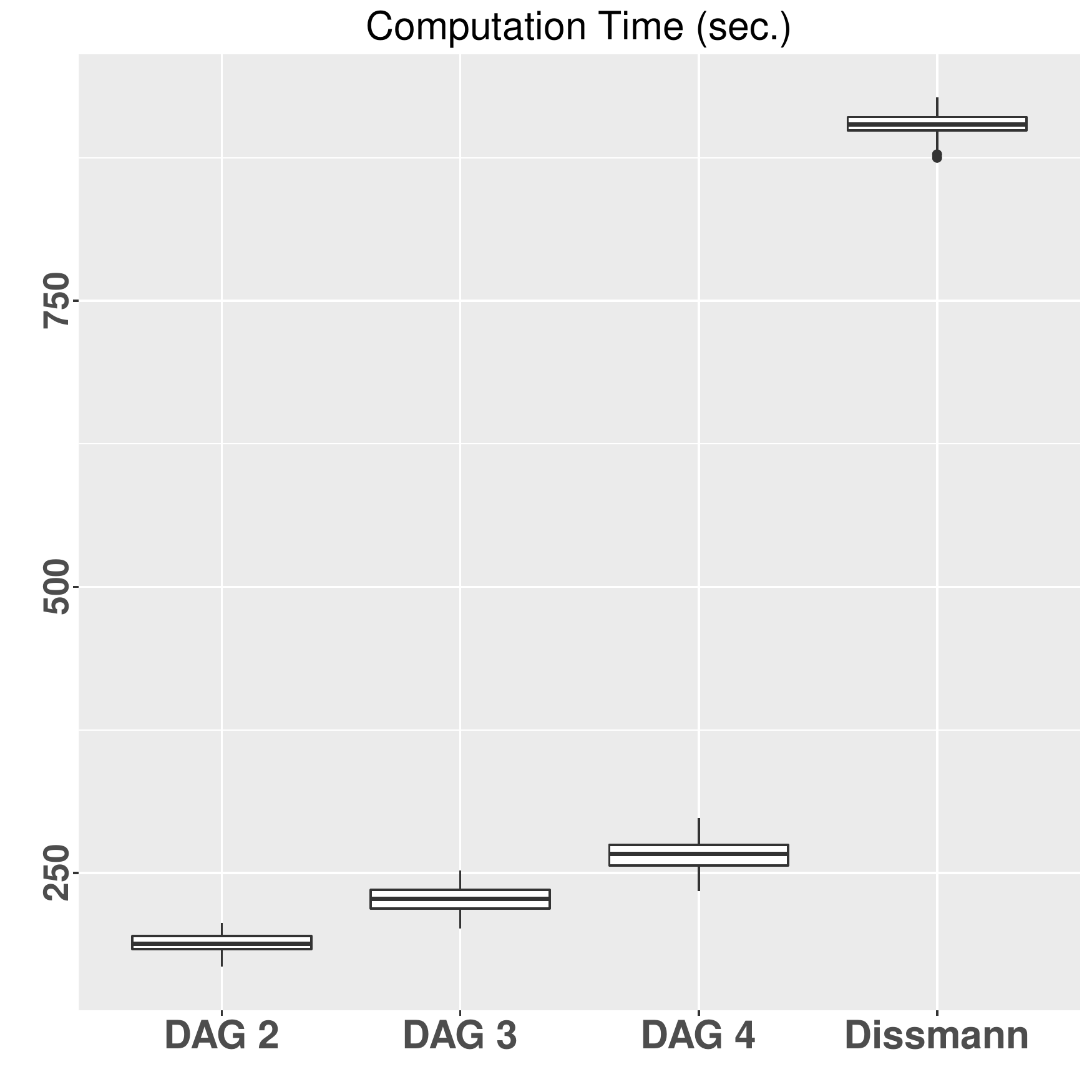}
	\caption{Scenario 5: Comparison of $k$-DAG representations for $k=2,3,4$ and Dissmann's algorithm: log-Likelihood, AIC, BIC, computation time on $100$ replications (left to right).}
	\label{fig:simstudy:results5}
\end{figure}
\vspace{-0.25cm}
We consider the results for the sparsest Scenario 5 in Figure \ref{fig:simstudy:results5}. Dissmann's algorithm achieves better results in terms of log-Likelihood and AIC, however, tends to overfit the data. For BIC, our approach using $k$-DAGs with $k=3,4$ achieves similar results as Dissmann. However the computation times are significantly shorter for our approach. The results are very similar for the other scenarios, henceforth we deferred their results in Appendix \ref{sec:appendix_simstudy}.
A second aspect is the distance of associated correlation matrices. We consider the data on the \textit{z-scale} and calculate the Kullback-Leibler divergence, see \citet{KullbackLeibler1951}. First, between the sample correlation matrix $\widehat{\Sigma}$ and the correlation matrix of $\GGG_k$, $\Sigma_{\GGG_k}$ (a). Next, we compare $\Sigma_{\GGG_k}$ and the correlation matrix of the representing R-vine model $\Sigma_{\VVV\left(\GGG_k\right)}$ (b). Finally, we compare $\widehat{\Sigma}$ and $\Sigma_{\VVV\left(\GGG_k\right)}$ and the correlation matrix of the Dissmann model $\Sigma_D$ (c).
\begin{figure}[ht]
	\centering
	\includegraphics[width=0.32\textwidth, trim={0.1cm 0.1cm 0.1cm 0.1cm},clip]{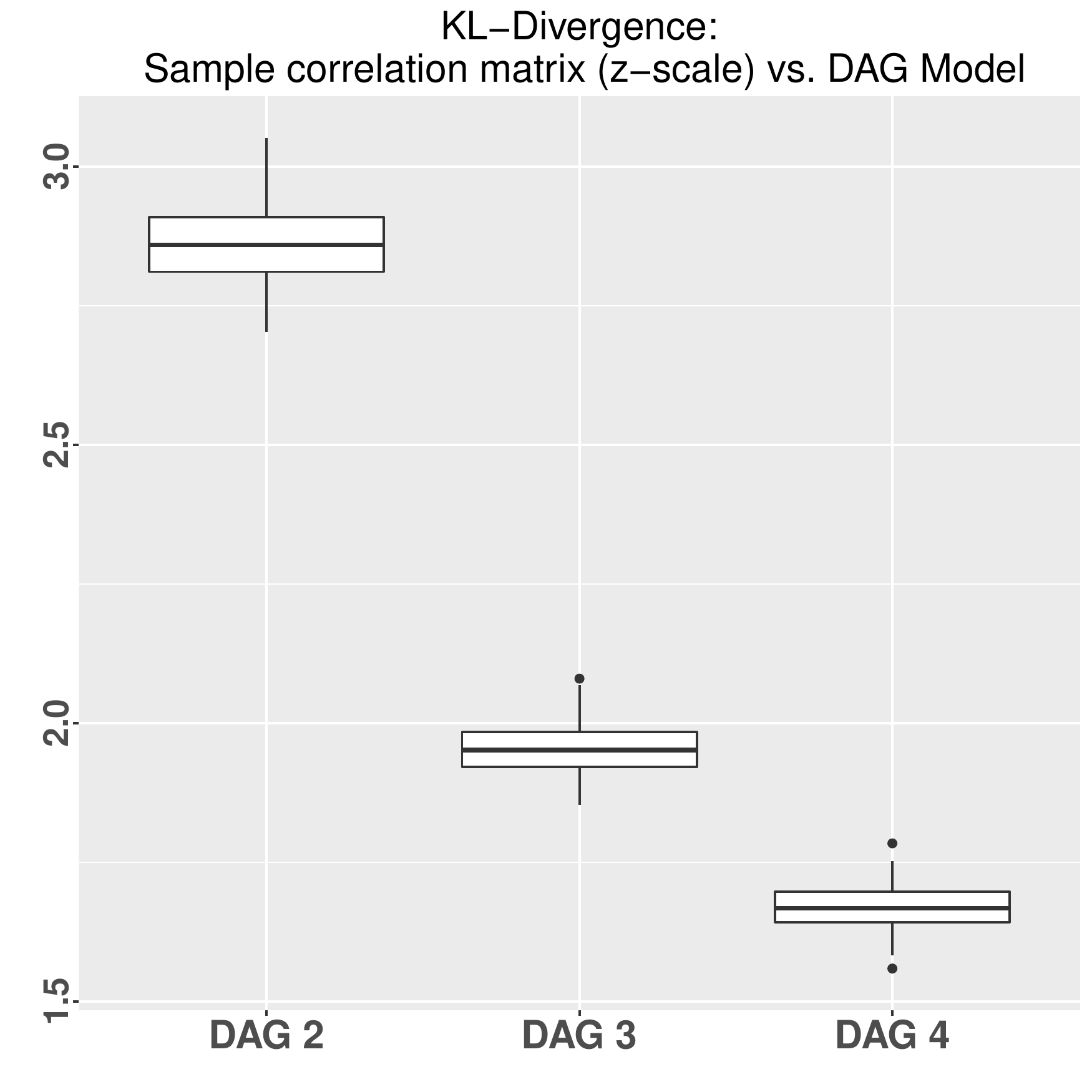}
	\includegraphics[width=0.32\textwidth, trim={0.1cm 0.1cm 0.1cm 0.1cm},clip]{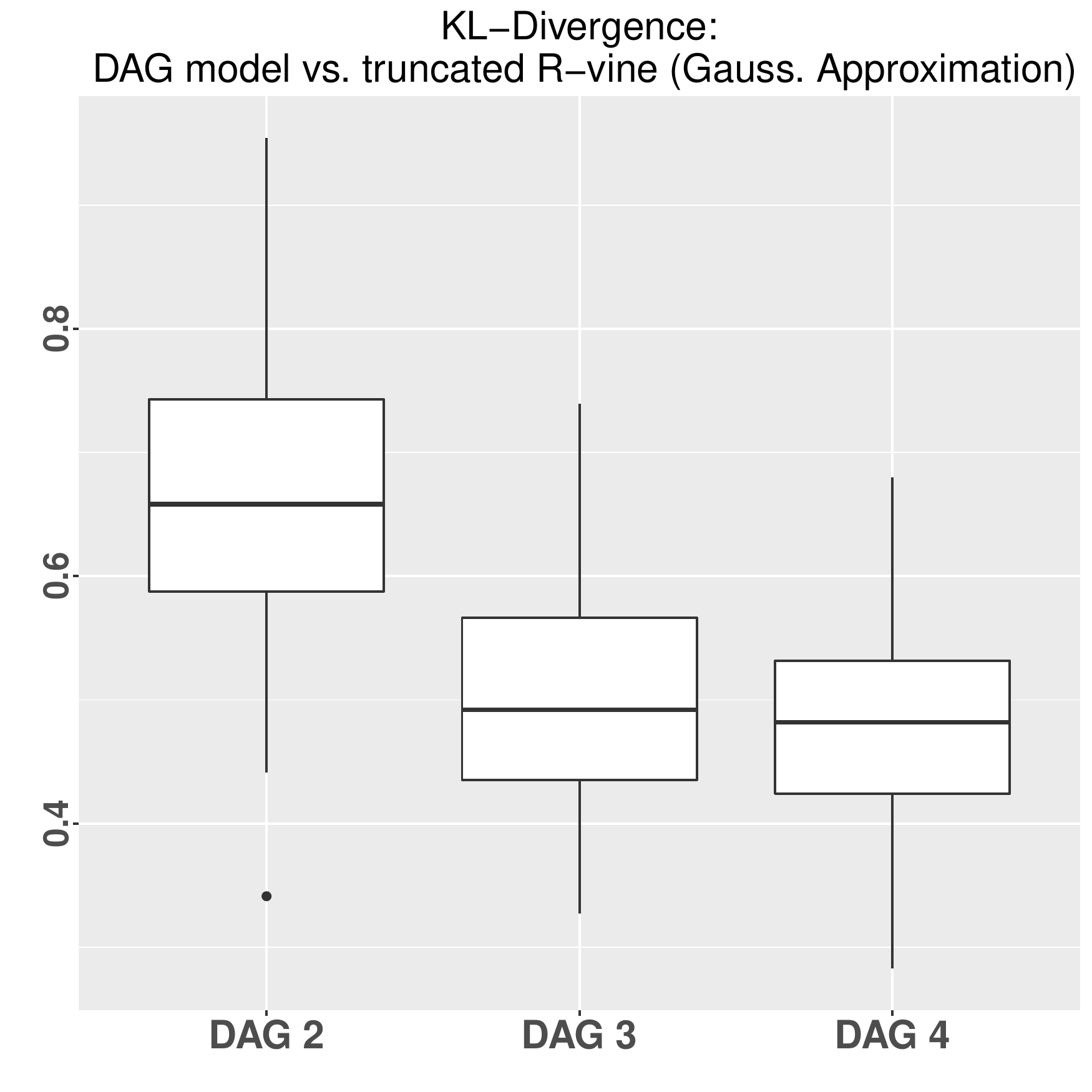}
	\includegraphics[width=0.32\textwidth, trim={0.1cm 0.1cm 0.1cm 0.1cm},clip]{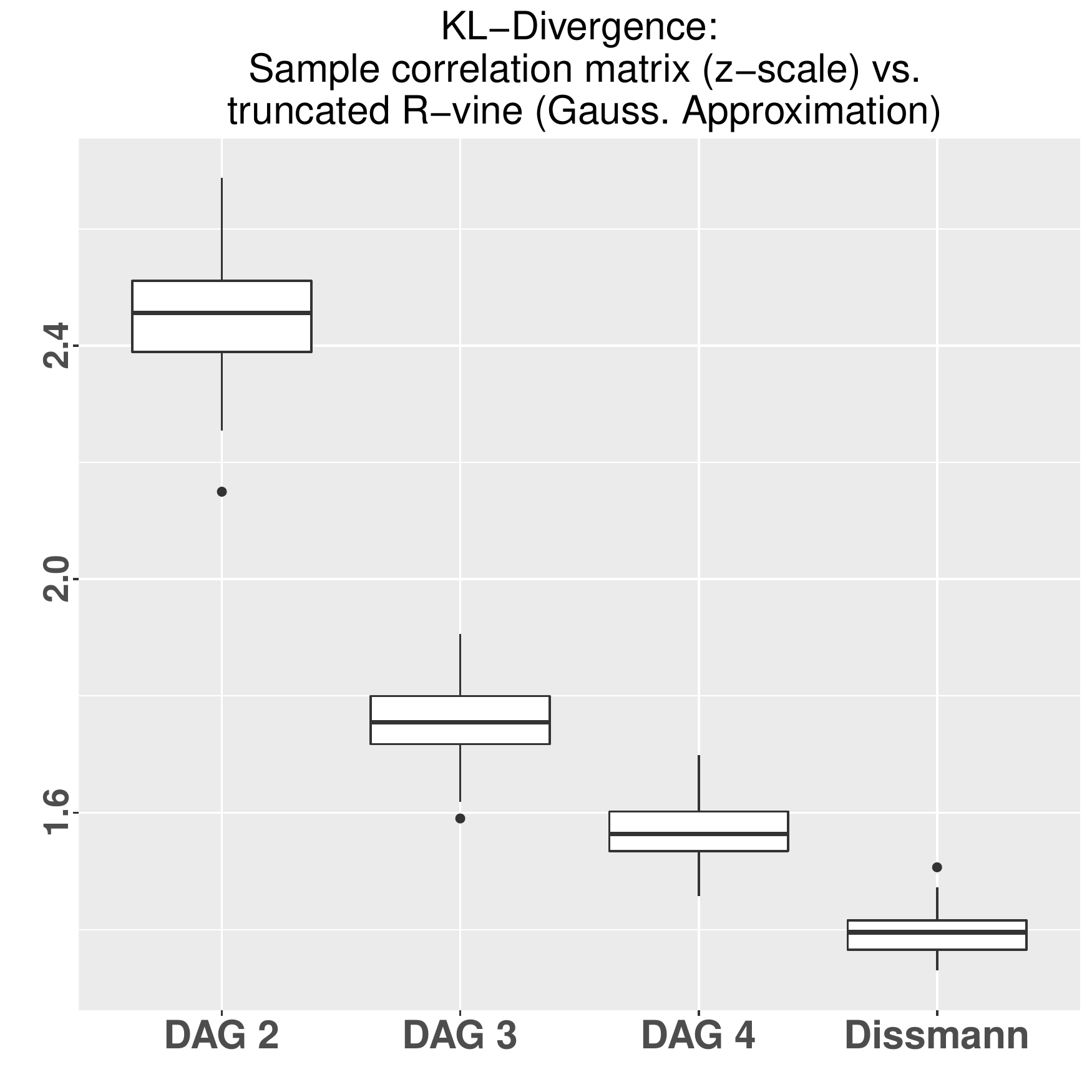}
	\caption{Kullback-Leibler divergence on $100$ replications between (a) $\widehat{\Sigma}$ and $\Sigma_{\GGG_k}$ (left), (b) $\Sigma_{\GGG_k}$ and $\Sigma_{\VVV\left(\GGG_k\right)}$ (centre), (c) $\widehat{\Sigma}$ and $\Sigma_{\VVV\left(\GGG_k\right)}$, $\Sigma_D$ (right) for $k=2,3,4$.}
	\label{fig:simstudy:resultsKL}
	\vspace{-0.25cm}
\end{figure}
We draw the conclusion that a $2$-DAG is not a good approximation of the sample correlation matrix, but we obtain better fit with $3$- and $4$-DAG. The rather low values in the centre plot indicate that our approach maps the structure between DAG and R-vine representation quite well. In the right plot, we see that Dissmann's algorithm obtains a smaller distance to the sample correlation matrix on the \textit{z-scale}. However, the distance between $k$-DAG and sample can still decrease for higher $k$, whereas the Dissmann model is already fully fitted.

\section{Application}\label{sec:application}
In \citet{brech-cc-2012a}, the authors analyzed the Euro Stoxx 50 and collected time series of daily log returns of $d=52$ major stocks and indices from May 22, 2006 to April 29, 2010 with $n=985$ observations. For these log returns, they fitted ARMA-GARCH time series models with Student-t's error distribution to remove trend and seasonality, obtaining standardized residuals. These are said to be on the \textit{x-scale} with a marginal distribution corresponding to a suitably chosen Student-t error distribution $F_i$. Using this parametric estimate for $F_i$, $i=1,\dots,d$, the copula data $U_i = F_i\left(X_i\right)$ is calculated. Since our approach uses Gaussian DAGs, we transform the data to have standard normal marginals, i.\,e.\ to the \textit{z-scale} by calculating $Z_i = \Phi^{-1}\left(U_i\right)$, with $\Phi$ the cdf of a $\NNN\left(0,1\right)$ distribution.\\
To learn $k$-DAGs for $k=1,\dots,10$ from the \textit{z-scale} data, we use the \textit{Hill-Climbing} algorithm of the R-package \textsf{bnlearn}, see \citet{bnlearnpaper} since we can limit the maximal number of parents. These $k$-DAGs are shown in Appendix \ref{subsec:dags}. Then, we apply our algorithm \texttt{RepresentDAGRVine} to calculate R-vine representations of the DAGs. To find pair copulas and parameters on these R-vines with independence copulas at given edges, we adapt functions of the R-package \textsf{VineCopula}, see \citet{VineCopula} and is apply them onto data on the \textit{u-scale}. All pair copula families of the R-package \texttt{VineCopula} were allowed.\\
As laid out initially, the paper has two goals. The first was to find truncated R-vines related to Gaussian DAGs which overcome the restriction of Gaussian distributions. Thus, we compare the goodness-of-fit of the $k$-DAGs $\GGG_k$ to their R-vine representations $\VVV\left(\GGG_k\right)$ from our algorithm. Given that our approach represents the structure of the DAGs well and there is non-Gaussian dependence, the variety of pair copula families of an R-vine should improve the fit notably. Second, we want to check whether our approach can compete with Dissmann's algorithm. Using their algorithm, we calculate a sequence of $t$-truncated R-vines for $t=1,\dots,51$, using an level $\alpha = 0.05$ independence test. Overall, we consider three different models in terms of the number of parameters and the corresponding log-likelihood and BIC values. Comparing the log-likelihood of DAGs and R-vines, we have to bear in mind that the marginals in the DAG are assumed to be standard normal and we also have to assume the same marginals for the R-vines, as done in e.\,g.\ \citet{Haffetal}. Yet, an advantage of vine copulas is that we can model marginals independently of the dependency structure. Thus, there is additional upside potential for the R-vine model. 
\begin{figure}[ht]
	\centering
	\includegraphics[width=0.49\textwidth, trim={0.1cm 0.6cm 0.7cm 0.6cm},clip]{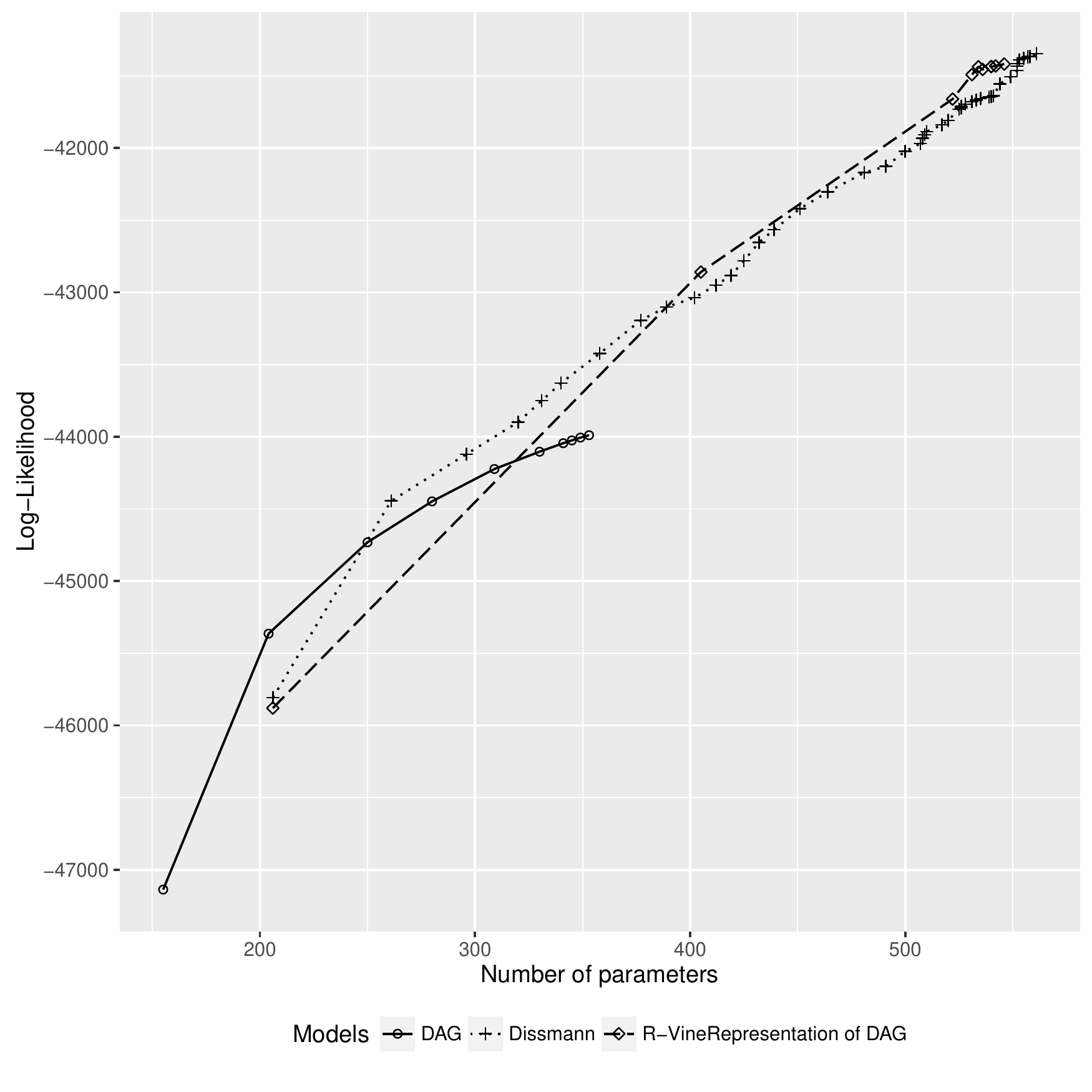}
	\includegraphics[width=0.49\textwidth, trim={0.1cm 0.6cm 0.7cm 0.6cm},clip]{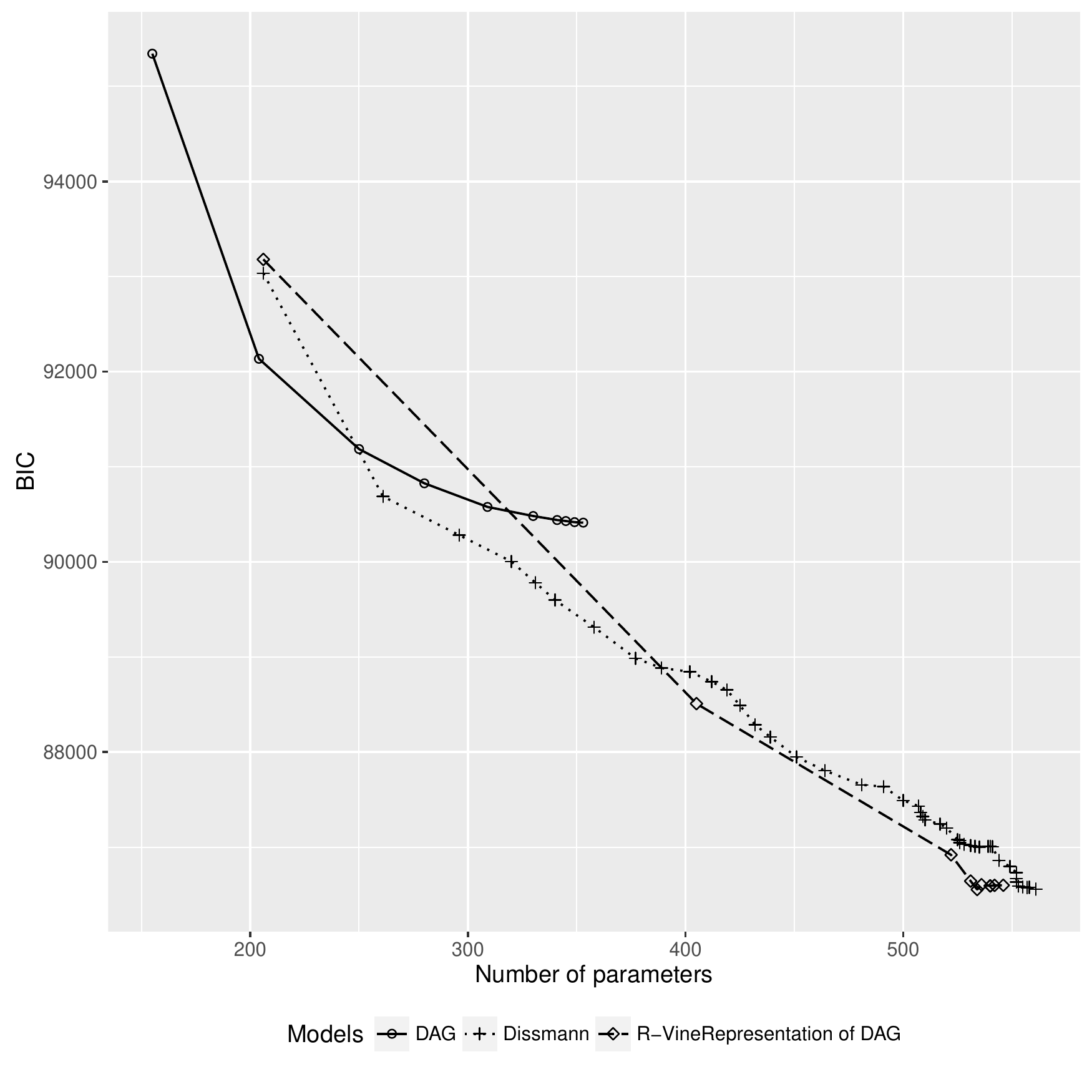}
	\caption{Comparison of $k$-DAGs $\GGG_k$, R-vine representations $\VVV\left(\GGG_k\right)$, $k=1,\dots,10$ and $t$-truncated Dissmann's algorithm, $t=1,\dots,51$ on \textit{z-scale}. Left: number of parameters vs. log-likelihood, right: number of parameters vs. BIC.}
	\label{fig:stoxx:dagvsvine}
\end{figure}
The results are given in Figure \ref{fig:stoxx:dagvsvine} and Tables \ref{table:application:dag}, \ref{table:application:dissmann} in Appendix \ref{subsec:appendix_application_tables}. The DAG models have the least parameters but their goodness-of-fit falls behind the two competitors. The reason is the presence of non-Gaussian dependence, i.\,e.\ $t$-copulas in the data which can not be modelled by the DAG. Comparing Dissmann's approach to our algorithm, we see a very similar behaviour when it comes to log-likelihood and BIC. However, our approach finds more parsimonious models given fixed levels of BIC. The computation time for our algorithm ranges from 125 sec. for a $1$-DAG to 270 sec. for a $10$-DAG. Dissmann's algorithm needs more than 600 sec. for a first R-vine tree and up to 760 sec. for a full estimation. Thus, our approach is about $3$ to $5$ times faster. This is also what we inferred from the simulation study. The computations were performed on a Linux Cluster with $32$ cores. Our approach is significantly faster, since given a specific edge $j\left(e\right),\ell\left(e\right)|D\left(e\right)$, Dissmann's algorithm first carries out an independence test for the pair copula. If the hypothesis is rejected, a maximum likelihood fit of the pair copula is carried out. Our approach checks $\condindep{j\left(e\right)}{\ell\left(e\right)}{D\left(e\right)}$ based on the d-separation in $\GGG_k$ and the corresponding copula is set to the independence.
The actual truncation levels $k^\prime$ of the R-vine representations are given in Table \ref{table:application:dag}. They are relatively high given the number of parents of these DAGs. However, this is because of very few non independence copulas in higher trees. For example, in the R-vine representation of the $2$-DAG, $T_{19},\dots,T_{51}$ contain $45$ non-independence copulas of $561$ edges, i.\,e.\ about $8$ \% are non-independence, see also Figure \ref{fig:stoxx:heatmap} in Appendix \ref{sec:appendix_heatmap}. This sparsity pattern is not negatively influencing the computation times or BIC as our examples demonstrated. It is also not intuitively apparent that a specific truncation level is more sensible to describe the data compared to a generally sparse structure.
\section{Conclusion}\label{sec:conclusion}
This paper aimed to link high dimensional DAG models with R-vines. Thus, the DAGs can be represented by a flexible modeling approach, overcoming the restrictive assumption of multivariate normality. Additionally, we intended to find new ways for non-sequential estimation of R-vine structures, a computationally highly demanding task. We proved a connection under sufficient conditions mapping $k$-DAGs to $k$-truncated R-vines. Afterwards, we gave necessary conditions for the corresponding DAGs to infer whether such R-vine models exist. For most cases more complex than a Markov Tree or special cases, an exact representation of a $k$-DAGs in terms of a truncated R-vine is not possible. However, it motives a general procedure to find more parsimonious R-vine models comparable to the standard algorithm, but multiple times faster. We expect this to leverage the application of R-vines in even higher dimensional settings with up to $1000$ variables.

	\section*{Acknowledgement}\label{sec:acknowledgment}
	The authors gratefully acknowledge the helpful comments of the referees, which further improved the manuscript.
	The first author is thankful for support from Allianz Deutschland AG. The second author is supported by the German Research foundation (DFG grant GZ 86/4-1). Numerical computations were performed on a Linux cluster supported by DFG grant INST 95/919-1 FUGG.

%
%
%
%

\bibliographystyle{chicago}
\bibliography{bibliography}
\addcontentsline{toc}{section}{REFERENCES}

\newpage

\appendix

\setcounter{table}{0}
\renewcommand{\thetable}{A\arabic{table}}
\renewcommand{\thefigure}{A\arabic{figure}}
\setcounter{page}{1}

\section{Definitions from Graph Theory}\label{sec:appendix_definitionsgraphtheory}
\begin{Definition}[Graph]
	Let $V \neq \emptyset$ be a finite set. Let $E \subseteq \left\{\left(v,w\right)\in V \times V: v \neq w\right\}$. Then, $\GGG=\left(V,E\right)$ is a graph with node set $V$ and edge set $E$.
\end{Definition}

\begin{Definition}[Edges]
	Let $\GGG=\left(V,E\right)$ be a graph and let $\left(v,w\right) \in E$ be an edge. We call an edge $\left(v,w\right)$ undirected if $\left(v,w\right) \in E \Rightarrow \left(w,v\right) \in E$ and directed if $\left(v,w\right) \in E \Rightarrow \left(w,v\right) \notin E$. A directed edge $\left(v,w\right)$ is also denoted by an arrow $v \rightarrow w$. Hereby, $w$ is called head of the edge and $v$ is called tail of the edge. By $v \leftrightarrow w$ we denote that $\left(v,w\right) \in E$ and $\left(w,v\right) \notin E$ or $\left(w,v\right) \in E$ and $\left(v,w\right) \notin E$, i.\,e.\ there is either a directed edge $v \rightarrow w$ or $w \rightarrow v$ in $\GGG$. By $v \nleftrightarrow w$ we denote that $\left(v,w\right) \notin E$ and $\left(w,v\right) \notin E$, i.\,e.\ there is no directed edge between $v$ and $w$.
\end{Definition}

\begin{Definition}[Directed and undirected graphs]
	Let $\GGG=\left(V,E\right)$ be a graph. We call $\GGG$ directed if each edge is directed. Similarly, we call $\GGG$ undirected if each edge is undirected.
\end{Definition}

\begin{Definition}[Weighted graph]
	A weighted graph is a graph $\GGG=\left(V,E\right)$ with \textit{weight function} $\mu$ such that $\mu: E \mapsto \mathbb{R}$.
\end{Definition}

\begin{Definition}[Skeleton]
	Let $\GGG=\left(V,E\right)$ be a directed graph. If we remove the edge orientation of each directed edge $v \rightarrow w$, we obtain the skeleton $\GGG^s$ of $\GGG$.
\end{Definition}

\begin{Definition}[Path]
	Let $\GGG = \left(V,E\right)$ be a graph. A path of length $k$ from $\alpha$ to $\beta$ is a sequence of distinct nodes $\alpha = \alpha_0, \ldots ,\alpha_k = \beta$ such that $\left(\alpha_{i-1},\alpha_i\right) \in E$ for $i=1,\ldots,k$. This definition applies for both directed and undirected graphs.
\end{Definition}

\begin{Definition}[Cycle]
	Let $\GGG = \left(V,E\right)$ be a graph and let $v \in V$. A cycle is defined as a path from $v$ to $v$.
\end{Definition}

\begin{Definition}[Acyclic graph]
	Let $\GGG=\left(V,E\right)$ be a graph. We call $\GGG$ acyclic if there exists no cycle within $\GGG$.
\end{Definition}

\begin{Definition}[Chain]
	Let $\GGG=\left(V,E\right)$ be a directed graph. A \textit{chain} of length $k$ from $\alpha$ to $\beta$ is a sequence of distinct nodes $\alpha=\alpha_0, \ldots, \alpha_k=\beta$ with $\alpha_{i-1} \rightarrow \alpha_i$ or $\alpha_{i} \rightarrow \alpha_{i-1}$ for $i=1,\ldots,k$. 
\end{Definition}

\begin{example}[Paths and chains in directed graphs]
	Consider the following two directed graphs $\GGG_1,\GGG_2$.
	\begin{figure}[H]
		\centering
		\includegraphics[width=0.48\textwidth]{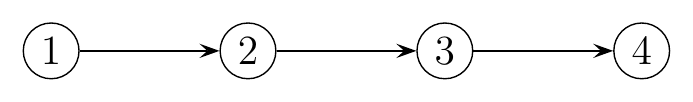}\\
		\includegraphics[width=0.48\textwidth]{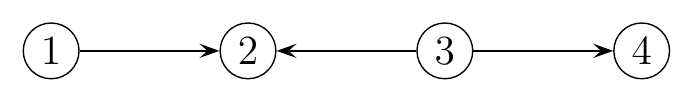}
		\caption{Example graphs $\GGG_1$ (upper) and $\GGG_2$ (lower)}
		\label{fig:ex:pathchain}
	\end{figure}
	For each of the two graphs, we consider the question whether a path or chain from $1$ to $4$ exists. In $\GGG_1$, clearly a path from $1$ to $4$ along $2$ and $3$ exists. Additionally, also a chain from $1$ to $4$ exists, as well as a chain from $4$ to $1$ since for the existence of a chain, the specific edge orientation is not relevant. With the same argument, in $\GGG_2$ there exists a chain between $1$ and $4$. However, no path between $1$ and $4$ exists as there is no edge $2 \rightarrow 3$.
\end{example}

\begin{Definition}[Subgraph]
	Let $\GGG=\left(V,E\right)$ be a graph. A graph $\HHH=\left(W,F\right)$ is a \textit{subgraph} of $\GGG = \left(V,E\right)$ if $W \subseteq V$ and $F \subseteq E$.
\end{Definition}
\begin{Definition}[Induced subgraph]
	Let $\GGG=\left(V,E\right)$ be a graph. A subgraph $\HHH=\left(W,F\right)$ of $\GGG$ is an \textit{induced subgraph} of $\GGG = \left(V,E\right)$ if $F = \left\{\left(v,w\right)|\left(v,w\right) \in W \times W: v \neq w \right\} \cap E$. 
\end{Definition}

\begin{example}[Subgraphs and induced subgraphs]
	Consider the following three graphs where $\GGG_2$ and $\GGG_3$ are subgraphs of $\GGG_1$. $\GGG_3$ is an induced subgraph of $\GGG_1$, whereas $\GGG_2$ is not since the edges $\left(2,3\right)$ and $\left(1,5\right)$ are present in $\GGG_1$ on the subset of nodes $\left\{1,2,3,5\right\}$ but are missing in $\GGG_2$.
	\begin{figure}[H]
		\centering
		\includegraphics[width=0.32\textwidth]{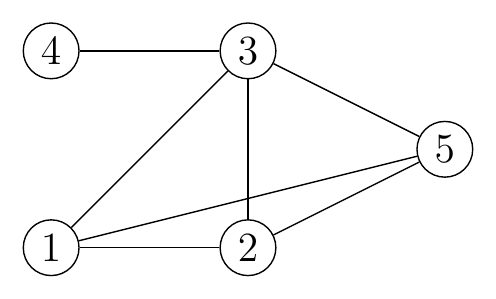}
		\includegraphics[width=0.32\textwidth]{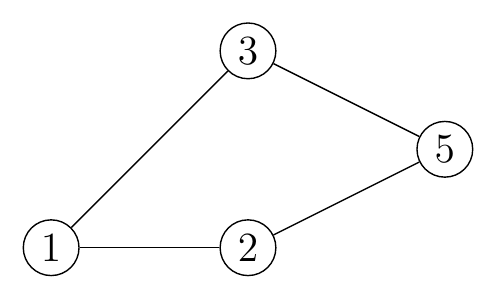}
		\includegraphics[width=0.32\textwidth]{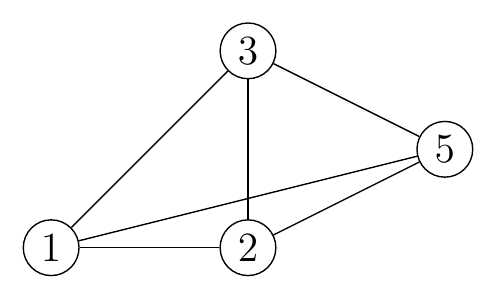}
		\caption{Example graphs $\GGG_1,\GGG_2,\GGG_3$ from left to right}
		\label{fig:ex:subgraph}
	\end{figure}
\end{example}

\begin{Definition}[Complete graph]
	Let $\GGG=\left(V,E\right)$ be a graph. $\GGG$ is called \textit{complete} if $E = \left\{\left(v,w\right)|\left(v,w\right) \in V \times V: v \neq w \right\}$.
\end{Definition}

\begin{Definition}[Connected graph]
	Let $\GGG=\left(V,E\right)$ be an undirected graph. If a path from $v$ to $w$ exists for all $v,w \in V$ in $\GGG$, we say that $\GGG$ is \textit{connected}.
\end{Definition}

\begin{Definition}[Weakly connected graph]
	Let $\GGG=\left(V,E\right)$ be a directed graph. If a path from $v$ to $w$ exists for all $v,w \in V$ in the (undirected) skeleton $\GGG^s$ of $\GGG$, we say that $\GGG$ is weakly connected.
\end{Definition}

\begin{Definition}[Tree]
	Let $\GGG=\left(V,E\right)$ be an undirected graph. $\GGG$ is a tree if it is connected and acyclic.
\end{Definition}

\begin{Definition}[Separator]
	Let $\GGG=\left(V,E\right)$. A subset $C \subseteq V$ is said to be an $\left(\alpha,\beta\right)$ \textit{separator} in $\GGG$ if all paths from $\alpha$ to $\beta$ intersect $C$. The subset $C$ is said to \textit{separate} $A$ from $B$ if it is an $\left(\alpha,\beta\right)$ separator for every $\alpha \in A$, $\beta \in B$. 
\end{Definition}

\begin{Definition}[v-structure]
	Let $\GGG=\left(V,E\right)$ be a directed acyclic graph. We define a \textit{v-structure} by a triple of nodes $\left(u,v,w\right) \in V$ if $u \rightarrow v$ and $w \rightarrow v$ but $u \nleftrightarrow w$.
\end{Definition}

\begin{Definition}[Moral graph]
	Let $\GGG=\left(V,E\right)$. The \textit{moral graph} $\GGG^m$ of a DAG $\GGG$ is defined as the skeleton $\GGG^s$ of $\GGG$ where for each v-structure $\left(u,v,w\right)$ an undirected edge $\left(u,w\right)$ is introduced in $\GGG^s$.
\end{Definition}

\begin{Definition}[d-separation]
	Let $\GGG = \left(V,E\right)$ be an directed acyclic graph. An chain $\pi$ from $a$ to $b$ in $\GGG$ is said to be \textit{blocked} by a set of nodes $S$, if it contains a node $\gamma \in \pi$ such that either
	\begin{enumerate}
		\item $\gamma \in S$ and arrows of $\pi$ do not meet head-to-head at $\gamma$, or
		\item $\gamma \notin S$ nor has $\gamma$ any descendants in $S$, and arrows of $\pi$ do meet head-to-head at $\gamma$.
	\end{enumerate}
	A chain that is not blocked by $S$ is said to be \textit{active}. Two subsets $A$ and $B$ are now said to be \textit{d-separated} by $S$ if all chains from $A$ to $B$ are blocked by $S$.
\end{Definition}

\begin{example}[d-separation]\label{ex:dsep}
	We give an example of a DAG $\GGG$, see Figure \ref{fig:exdsep}.
	\begin{figure}[H]
		\centering
		\includegraphics[width=0.4\textwidth]{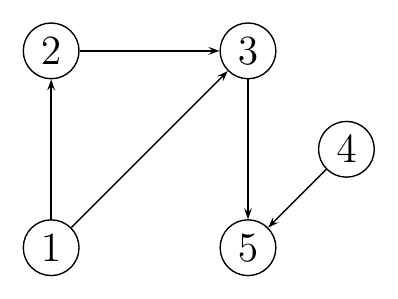}
		\caption{DAG $\GGG$}
		\label{fig:exdsep}
	\end{figure}
	First, we want to consider whether $\condindep{3}{4}{5}$ holds. In this case, $a=3$, $b=4$ and $S=5$. By application of the d-separation, we see that $S=5$ can not block a chain from $a$ to $b$ as arrows do meet head-to-head at $S=5$, hence $\condindep{3}{4}{5}$ does not hold.\\
	Second, we consider whether $\condindep{1}{5}{23}$. Thus, $a=1$, $b=5$ and $S=\left\{2,3\right\}$. The chain from $a$ to $b$ via $3$ is blocked as arrows meet not head-to-head at $3$. Second, also the chain from $a$ to $b$ via $\left\{2,3\right\}$ is blocked as arrows meet not head-to-head. Hence, we conclude $\condindep{1}{5}{23}$. For another example, see \citep[p.\ 50]{Lauritzen1996}.
\end{example}

\newpage

\section{Examples}\label{sec:appendix_examples}
\begin{example}[Example \ref{ex:exvine1} cont.]\label{ex:appendix_example}
	We continue Example \ref{ex:exvine1} showing the remaining m-children and m-descendants.
	$$
	\begin{array}{p{0.05\textwidth}p{0.15\textwidth}p{0.3\textwidth}p{0.3\textwidth}}
	tree  & edge $e$ & m-children of $e$ & m-descendants of $e$ \\
	\hline\hline
	$T_1$ & $2{,}1$ & $1{,}2$ & $1{,}2$ \\
	& $6{,}2$ & $2{,}6$ & $2{,}6$ \\
	& $3{,}6$ & $3{,}6$ & $3{,}6$ \\
	& $5{,}2$ & $5{,}2$ & $5{,}2$ \\
	& $4{,}5$ & $4{,}5$ & $4{,}5$ \\
	\hline
	$T_2$ & $6{,}1|2$ & $\left\{2,1\right\};\left\{6,2\right\}$ & $1{,}2{,}6$ \\
	& $3{,}2|6$ & $\left\{3,6\right\};\left\{6,2\right\}$ & $2{,}3{,}6$ \\
	& $5{,}6|2$ & $\left\{5,2\right\};\left\{6,2\right\}$ & $2{,}5{,}6$ \\
	& $4{,}2|5$ & $\left\{4,5\right\};\left\{5,2\right\}$ & $2{,}4{,}5$ \\
	\hline
	$T_3$ &$ 3{,}1|26 $&$ 6{,}1|2$; $ 3{,}2|6 $& $1{,}2{,}3{,}6$ \\
	& $5{,}3|26 $&$ 3{,}2|6$; $5{,}6|2$ & $2{,}3{,}5{,}6$ \\
	& $4{,}6|25 $&$ 5{,}6|2$; $4{,}2|5$ & $2{,}4{,}5{,}6$ \\
	\hline
	$T_4$ &$ 5{,}1|236 $&$ 3{,}1|26$, $5{,}3|26 $& $1{,}2{,}3{,}5{,}6$ \\
	&$ 4{,}3|256 $&$ 5{,}3|26$, $4{,}6|25 $& $2{,}3{,}4{,}5{,}6$ \\
	\hline
	$T_5$ & $1{,}4|2356$ & $5{,}1|236${,}$ 4{,}3|256 $ & $ 1{,}2{,}3{,}4{,}5{,}6$\\
	\hline
	\end{array}
	$$
	\captionof{table}{Edges, m-children and m-descendants in the R-vine of Example \ref{ex:exvine1}.}
	\label{table:exvine:2}
\end{example}

\begin{example}[Example \ref{ex:exdagcycles} cont.]\label{ex:exdagcycles:cont.}
	We show the corresponding first and second R-vines trees as outlined which lead to $3$-truncated R-vines.
	\begin{figure}[H]
		\centering
		\includegraphics[width=0.32\textwidth]{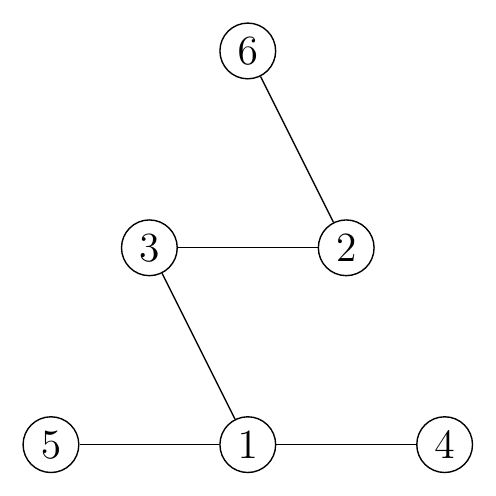}
		\includegraphics[width=0.32\textwidth]{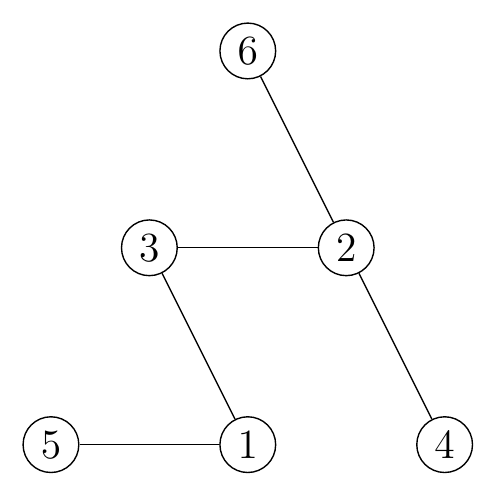}
		\includegraphics[width=0.32\textwidth]{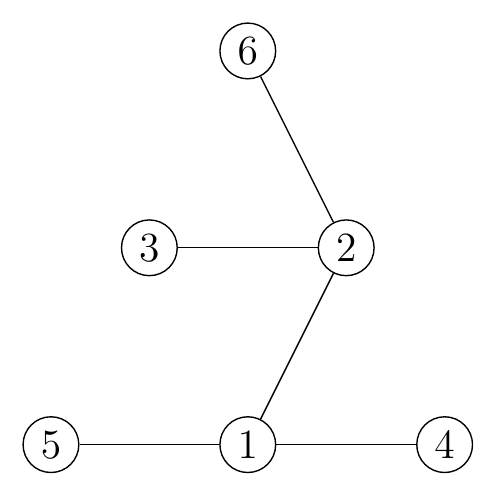}\\
		\includegraphics[width=0.32\textwidth]{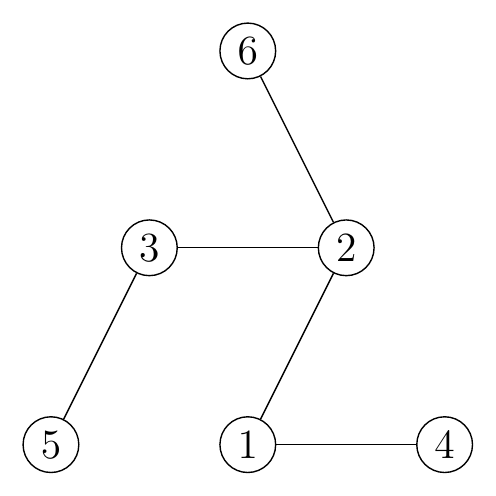}
		\includegraphics[width=0.32\textwidth]{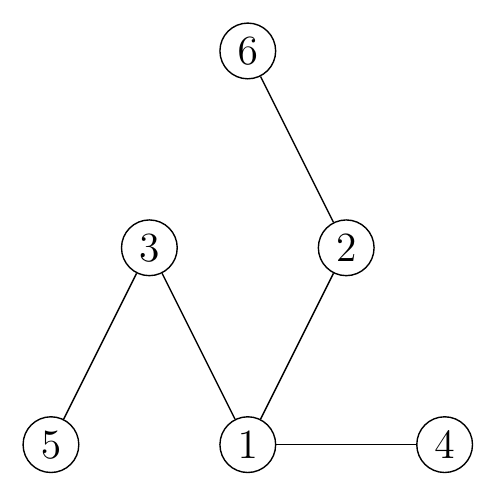}
		\includegraphics[width=0.32\textwidth]{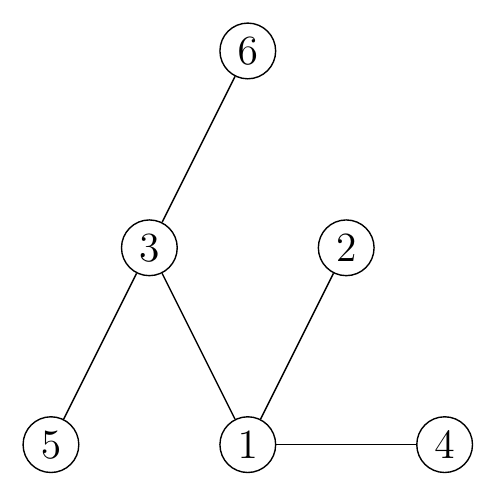}
		\caption{Example \ref{ex:exdagcycles}: $6$ admissible choices of first R-vine trees $T_1$ leading to a $3$-truncated R-vine.}
		\label{fig:exdagcycles:ex2vine1}
	\end{figure}
	\begin{figure}[H]
		\centering
		\includegraphics[width=0.24\textwidth]{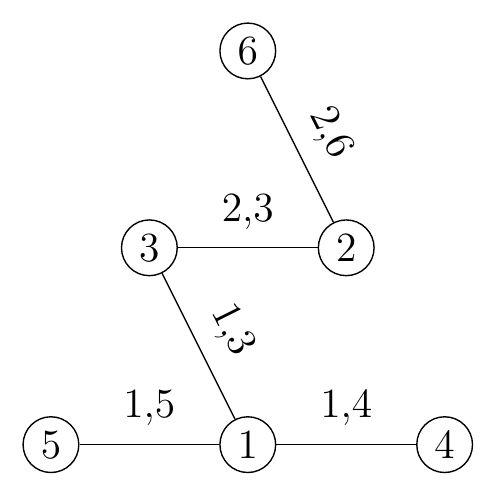}
		\includegraphics[width=0.24\textwidth]{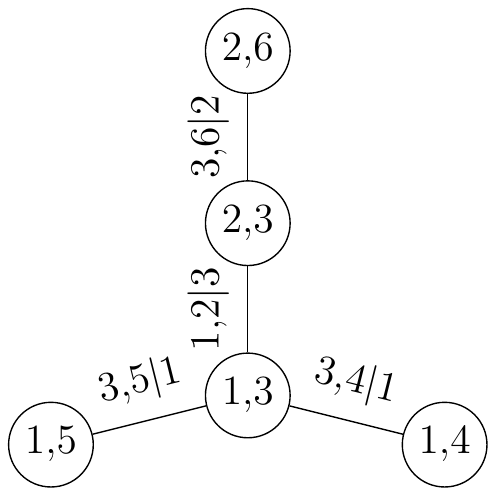}
		\includegraphics[width=0.24\textwidth]{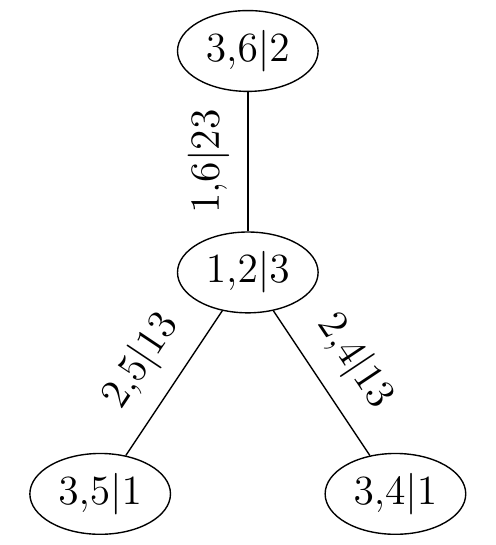}
		\includegraphics[width=0.24\textwidth]{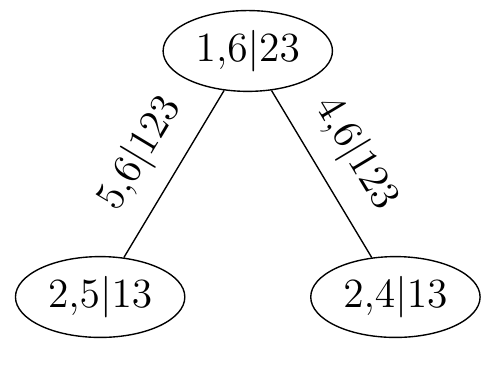}
		\caption{Example \ref{ex:exdagcycles}: First $4$ R-vine trees $T_1,T_2,T_3,T_4$ (from left to right), showing the $3$-truncation as given by the conditional independence properties in the DAG $\GGG_2$. Note that $24|13$ can not be set to the independence copula since $2$ is a parent of $4$.}
		\label{fig:exdagcycles:ex2vine2}
	\end{figure}
\end{example}

\newpage

\section{Proofs}\label{sec:appendix_proof}
\subsection{Proof of Proposition \ref{prop:dagk1}}\label{subsec:appendix_proof_prop_markovtree}
\begin{proof}
	We assume $\left|E\right| = d-1$. If not, the argument can be applied to each weakly connected subgraph of $\GGG$. Since $k=1$, there are no \textit{v-structures}, hence, the moral graph $\GGG^m$ is the skeleton $\GGG^s$ of $\GGG$ and $\GGG^m$ is connected. Since there are $d-1$ arrows in $\GGG$, there are $d-1$ undirected edges in $\GGG^m$. Since each connected graph on $d$ nodes with $d-1$ edges is a tree, $\GGG^m$ is a tree. Additionally, each edge in $\GGG^m$ corresponds to an arrow $w \leftrightarrow v$ in the DAG $\GGG$, satisfying Assumption \ref{ass:1}. The main diagonal of the R-vine matrix can be chosen to be a decreasing topological ordering of $\GGG$ by starting with a node which has no descendants but one parent, say $v_d$ and let the corresponding R-vine matrix $M$ be such that $M_{1,1}=v_d$. Thus, its parent and all other nodes must occur on the diagonal to the right of it. Next, take a node which has either one descendant, i.\,e.\ $v_d$ or no descendant, denote $v_{d-1}$ and set $M_{2,2}=v_{d-1}$. This can be repeated until $v_1=M_{d,d}$ and determines the R-vine matrix main diagonal which is a decreasing topological ordering of $\GGG$, satisfying Assumption \ref{ass:2} onto which Theorem \ref{thm:transformation} applies.
\end{proof}

\subsection{Proof of Corollary \ref{cor:lowerboundkprime}}\label{subsec:appendix_proof_corollarylowerbound}
\begin{proof}
	Since $T_1$ is a tree, all paths are unique. If not, there exist two distinct paths between $v$ and $w$ and both paths together are a cycle from $v$ to $v$. Consider an arbitrary node $v \in V$ with parents $\parents\left(v\right)=\left\{w_1^v,\dots,w_{k_v}^v\right\}$ in $\GGG$ such that $w_{k_v}^v \coloneqq \argmax_{w \in \parents\left(v\right)}~\delta_v^w$, then there exists a unique path from $v$ to $w_{k_v}^v$, $v=\alpha_0,\dots,\alpha_{\delta_v^w}=w_{k_v}^v$. From Theorem \ref{thm:transformation}, our goal is to obtain edges with conditioned sets $v,w$ with $w \in \parents\left(v\right)$ in an R-vine tree $T_i$ with lowest possible order $i$. Similar to the proof of Proposition \ref{prop:necessaryconditiondag}, we try to obtain an edge
	\begin{equation}\label{eq:proofcorollaryedge}
		v,w_{k_v}^v|\alpha_1,\ldots,\alpha_{\delta_v^w-1} \in T_{\delta_v^w},
	\end{equation}
	with $\delta_v^w-1$ entries in the conditioning set. The conditioned set of \ref{eq:proofcorollaryedge} can not occur in a tree $T_i$ with $i < \delta_v^w$ because of the proximity condition and since the path between $v$ and $w_{k_v}^v$ is unique. By the d-separation, page \pageref{page:dsep}, two nodes in a DAG connected by an arrow, i.\,e.\ $v$ and its parent $w_{k_v}^v$, can not be d-separated by any set $S$. Thus, the pair copula density associated to the edge \eqref{eq:proofcorollaryedge} in $T_{\delta_v^w}$ is not the independence copula density $c^\perp$. This tree $T_{\delta_v^w}$ is characterized by a path distance in $T_1$ and the maximum path distance over all parents of $v \in V$ yields the highest lower bound. As it has to hold for all $v \in V$, we obtain a lower bound for the truncation level $k^\prime$ by the maximum over all $v \in V$.
\end{proof}

We present a brief example for the Corollary.

\begin{example}[Example for Corollary \ref{cor:lowerboundkprime}]\label{ex:corollary}
	Consider the R-vine tree $T_1$ in Figure \ref{fig:excorollary}.
	\begin{figure}[h]
		\centering
		\includegraphics[width=0.3\textwidth, trim={0.2cm 0.2cm 0.2cm 0.2cm},clip]{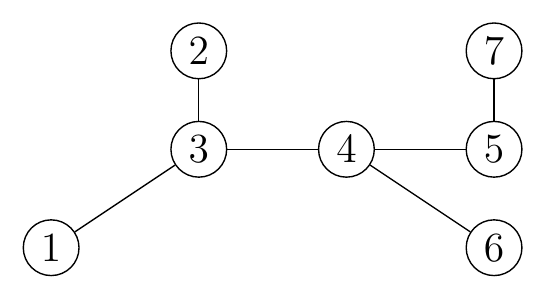}
		\caption{Example \ref{ex:corollary}: R-vine tree $T_1$}
		\label{fig:excorollary}
	\end{figure}
	Assume an underlying DAG $\GGG$ with $1\in \parents\left(7\right)$. We have a lower bound for the truncation level $k^\prime \ge 4$ since the path in $T_1$ from $7$ to $1$ is $7-5-4-3-1$ with a path length $\ell_7^1=4$. Not earlier as in tree $T_4$, i.\,e.\ not in the trees $T_1,T_2,T_3$ an edge with conditioned set $7,1$ can be obtained which can not be represented by the independence copula.
\end{example}
\newpage

\section{Toy-example for heuristics}\label{sec:appendix_toyexample}
\begin{example}[Heuristics for transformation]\label{ex:heuristics}
	Consider the DAGs $\GGG_k$, $k=1,2,3$, with at most $k$ parents, see Figure \ref{fig:ExDAGHeuristics:DAG}, from left to right. 
	\begin{figure}[H]
		\centering
		\includegraphics[width=0.32\textwidth]{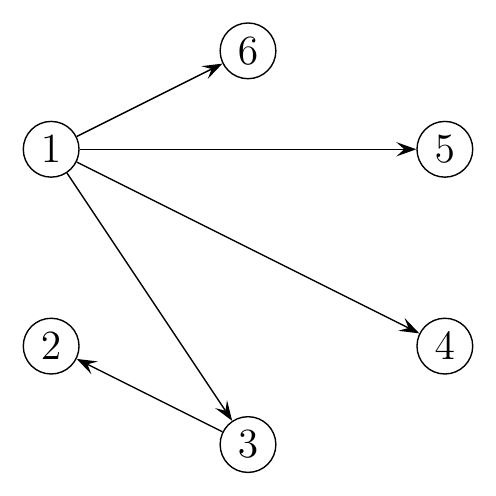}
		\includegraphics[width=0.32\textwidth]{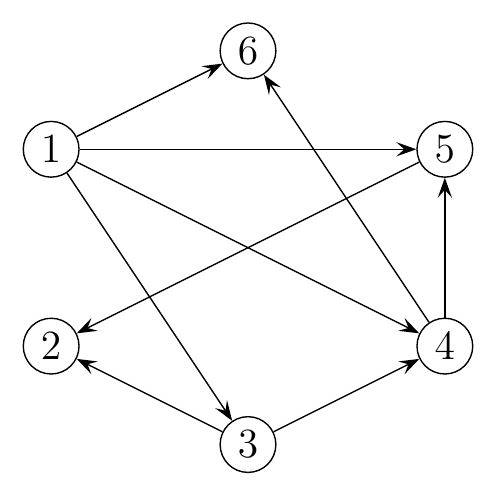}
		\includegraphics[width=0.32\textwidth]{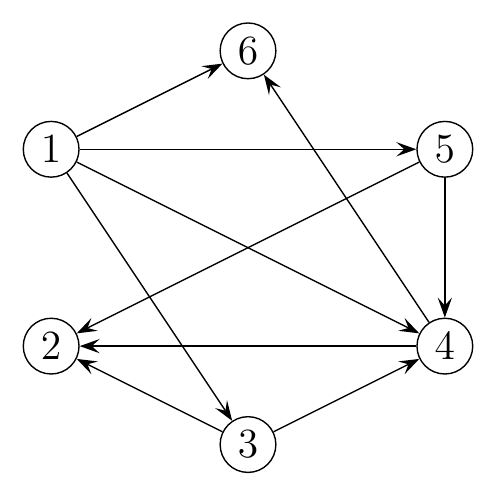}
		\caption{Example DAGs $\GGG_k$ for $k=1,2,3$ with at most $k$ parents (from left to right).}
		\label{fig:ExDAGHeuristics:DAG}
	\end{figure}
	Applying a maximum spanning tree algorithm on $\HHH$ to find the first R-vine tree $T_1$, we obtain the skeleton $\GGG_1^s$ (see Figure \ref{fig:ExDAGHeuristics:Vine1}, first figure). This is however not in general the case. We sketch the intermediate step of building $T_2$, where we already removed edges not allowed by the proximity condition and assigned weights according to Equation \eqref{eq:edgeweight2} (see Figure \ref{fig:ExDAGHeuristics:Vine1}, second to fourth figures).
	\begin{figure}[H]
		\centering
		\includegraphics[width=0.24\textwidth]{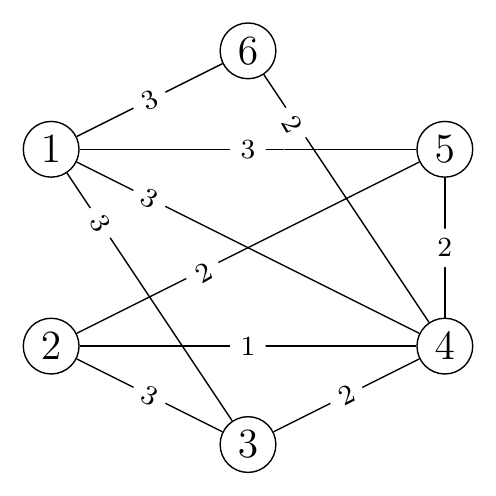}
		\includegraphics[width=0.24\textwidth]{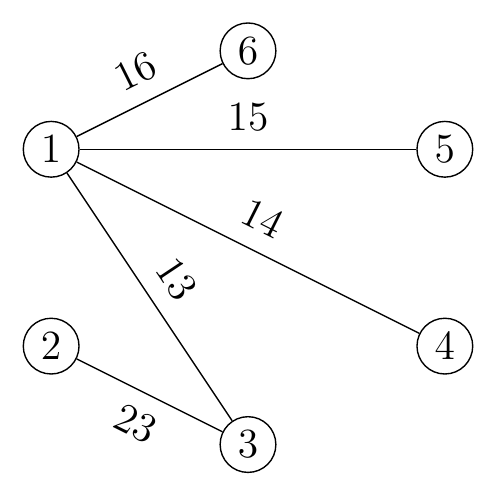}
		\includegraphics[width=0.24\textwidth]{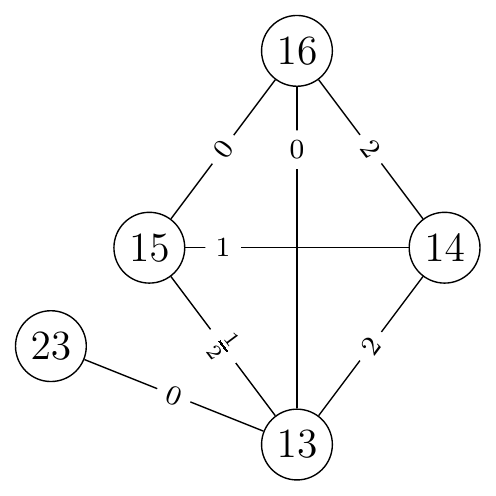}
		\includegraphics[width=0.24\textwidth]{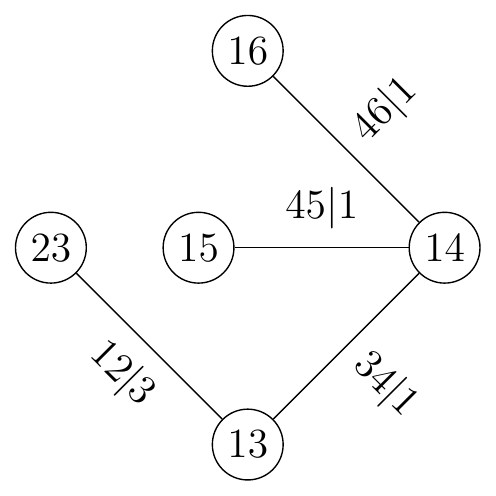}
		\caption{Weighted graph $\HHH$ with weight function  $g\left(i\right) \equiv 1$ and $\mu_0 = \frac{g\left(1\right)}{2} = \frac{1}{2}$ (left) and R-vine representation of the DAG $\GGG_3$ with R-vine trees $T_1$, intermediate step for building tree $T_2$ and final $T_2$ (from left to right). Note that $\condindep{3}{5}{1}$ by the d-separation in $\GGG_3$ and hence the weight of the corresponding edge assigned is $\mu_0 = \frac{1}{2}$. However, this edge is not chosen by the maximum spanning tree algorithm.}
		\label{fig:ExDAGHeuristics:Vine1}
	\end{figure}
	We see that $T_3$ has the form of a so called D-vine, i.\,e.\ the R-vine tree is a path. Thus, the structure of higher order trees $T_4$ and $T_5$ is already determined, see Figure \ref{fig:ExDAGHeuristics:Vine2}.
	\begin{figure}[H]
		\centering
		\includegraphics[width=0.32\textwidth]{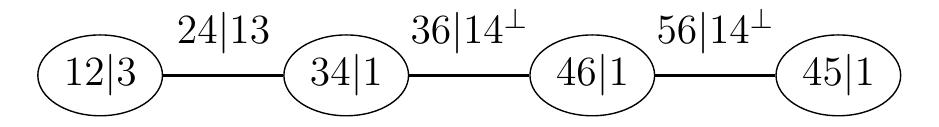}
		\includegraphics[width=0.32\textwidth]{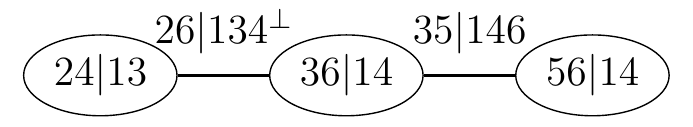}
		\includegraphics[width=0.24\textwidth]{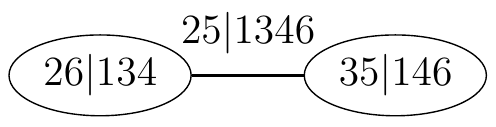}
		\caption{R-vine representation of the DAG $\GGG_3$. Trees $T_3, T_4, T_5$ (from left to right). Edges with superscript $\perp$ are associated with the independence copula by the d-separation in $\GGG_3$.}
		\label{fig:ExDAGHeuristics:Vine2}
	\end{figure}
	Based on the first R-vine tree $T_1$ and Corollary \ref{cor:lowerboundkprime} we infer the lower bound for the truncation level. We consider the sets $V^v = \left\{v,\parents\left(v\right)\right\}$ for $v \in V$ based on $\GGG_3$. For example, the node $2$ has the parents $\parents\left(2\right) = \left\{3,4,5\right\}$ in $\GGG_3$. Based on the first R-vine tree $T_1$ we check the lengths of shortest paths between $2$ and its parents and obtain $\ell_2^3=1$, $\ell_2^4=3$ and $\ell_2^5=3$. By application of Corollary \ref{cor:lowerboundkprime}, this gives a lower bound for the truncation level $k^\prime \ge 3$. The lengths of the shortest paths in $T_1$ for all nodes $v \in V$ can be found in Table \ref{table:ExDAGHeuristics:distances}.
	\begin{table}[H]
		\centering
		\begin{tabular}{crrr}
			$v$ & $\parents\left(v\right)=\left\{w_1^v,w_2^v,w_3^v\right\}$ & $\ell_v^{w_1} ,\ell_v^{w_2},\ell_v^{w_3}$ & $\max_{w \in \parents\left(v\right) }~\ell_v^w$ \\
			\hline \hline
			1 & - & - & -\\
			2 & 3,4,5 & 1,3,3 & 3\\
			3 & 1 & 1 & 1\\
			4 & 1,3,5 & 1,2,2 & 2\\
			5 & 1 & 1 & 1\\
			6 & 1,4 & 1,2 & 2\\
			\hline
		\end{tabular}
		\caption{Shortest path distances in $T_1$ between nodes $v$ and its parents $\parents\left(v\right)$ in DAG $\GGG_3$.}
		\label{table:ExDAGHeuristics:distances}
	\end{table}
	We obtain $k^\prime = \max_{v \in V}~\max_{w \in \parents\left(v\right)}~\ell_v^w = 3$. Note that this lower bound is not attained as we have the conditioned set $\left\{2,5\right\}$ in the R-vine Tree $T_5$ which can not be represented by the independence copula as this conditioned set is associated to an edge in the DAG $\GGG_3$. However, several edges with superscript $\perp$ can be associated with the independence copula by the d-separation. The trees $\GGG_1$ and $\GGG_2$ are only used to obtain the weights for the corresponding trees, but not with respect to check for d-separation.
\end{example}

\newpage

\section{Supplementary material to simulation study}\label{sec:appendix_simstudy}
We restate the simulation setup for the remaining scenarios.
\begin{table}[h]
	\centering
	\begin{tabular}{rrrr}
		Scenario & pair copula families & truncation level & indep. test significance level $\alpha$  \\ 
		\hline \hline
		1 & all & - & $0.05$ \\ 
		2 & independence, t & - & $0.05$ \\
		3 & all & 4 & $0.05$ \\
		4 & all & - & $0.2$ \\
		5 & all & 4 & $0.2$ \\
		\hline
	\end{tabular}
	\caption{Parameter settings for sample models in simulation study.}
	\label{table:simstudy_parameters_appendix}
\end{table}
Now, we present the results for the remaining scenarios $1$ to $4$.
\begin{figure}[H]
	\centering
	\includegraphics[width=0.24\textwidth, trim={0.1cm 0.1cm 0.1cm 0.1cm},clip]{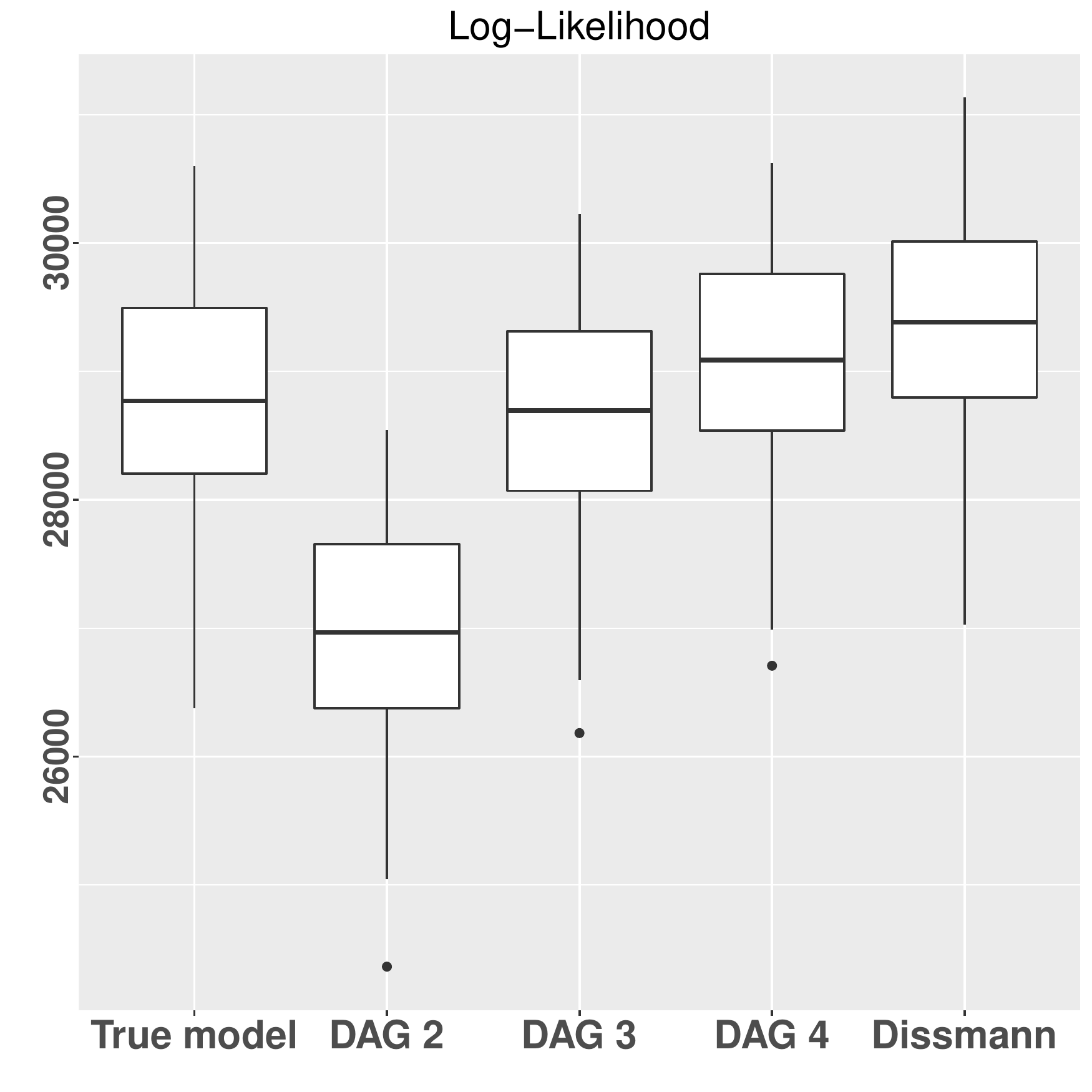}
	\includegraphics[width=0.24\textwidth, trim={0.1cm 0.1cm 0.1cm 0.1cm},clip]{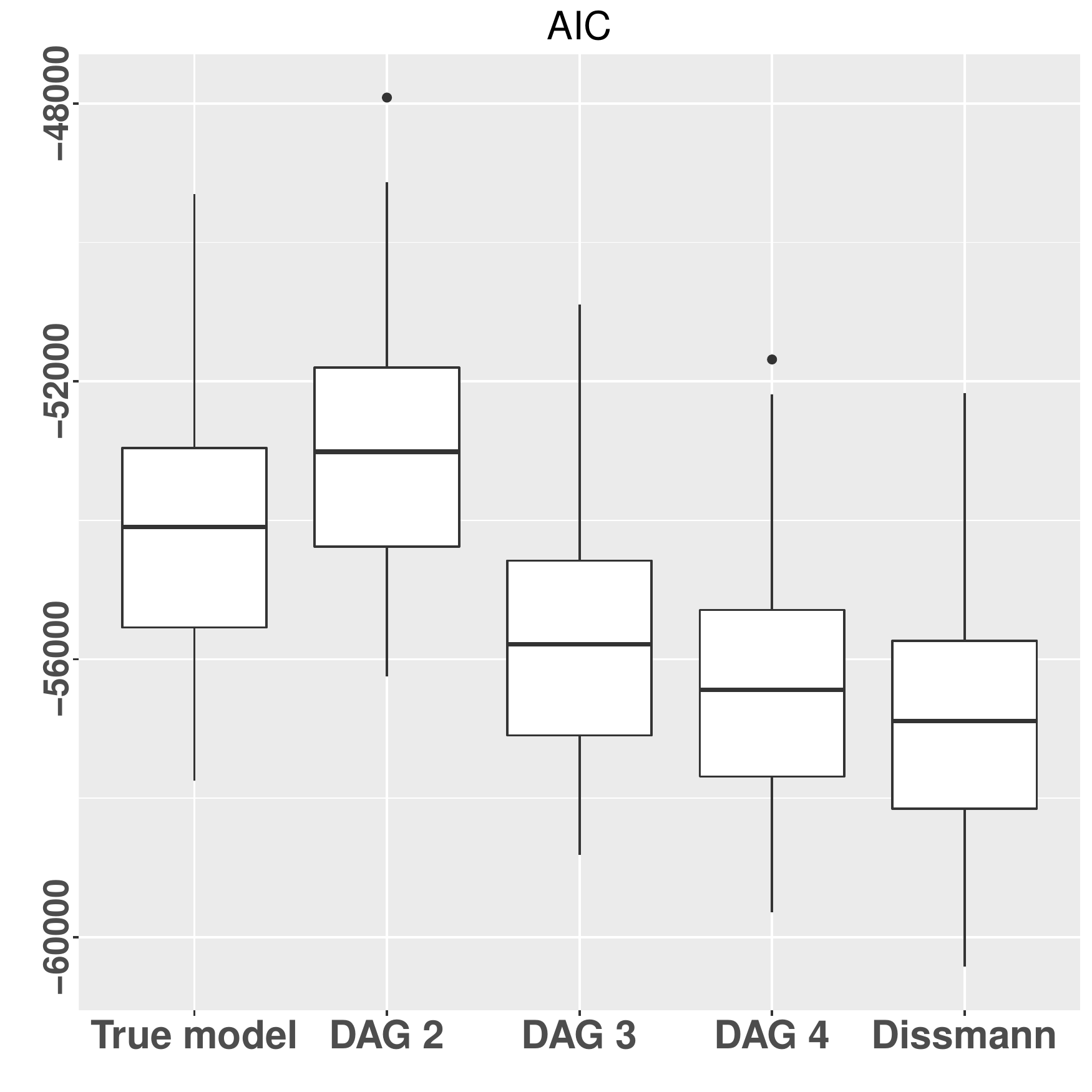}
	\includegraphics[width=0.24\textwidth, trim={0.1cm 0.1cm 0.1cm 0.1cm},clip]{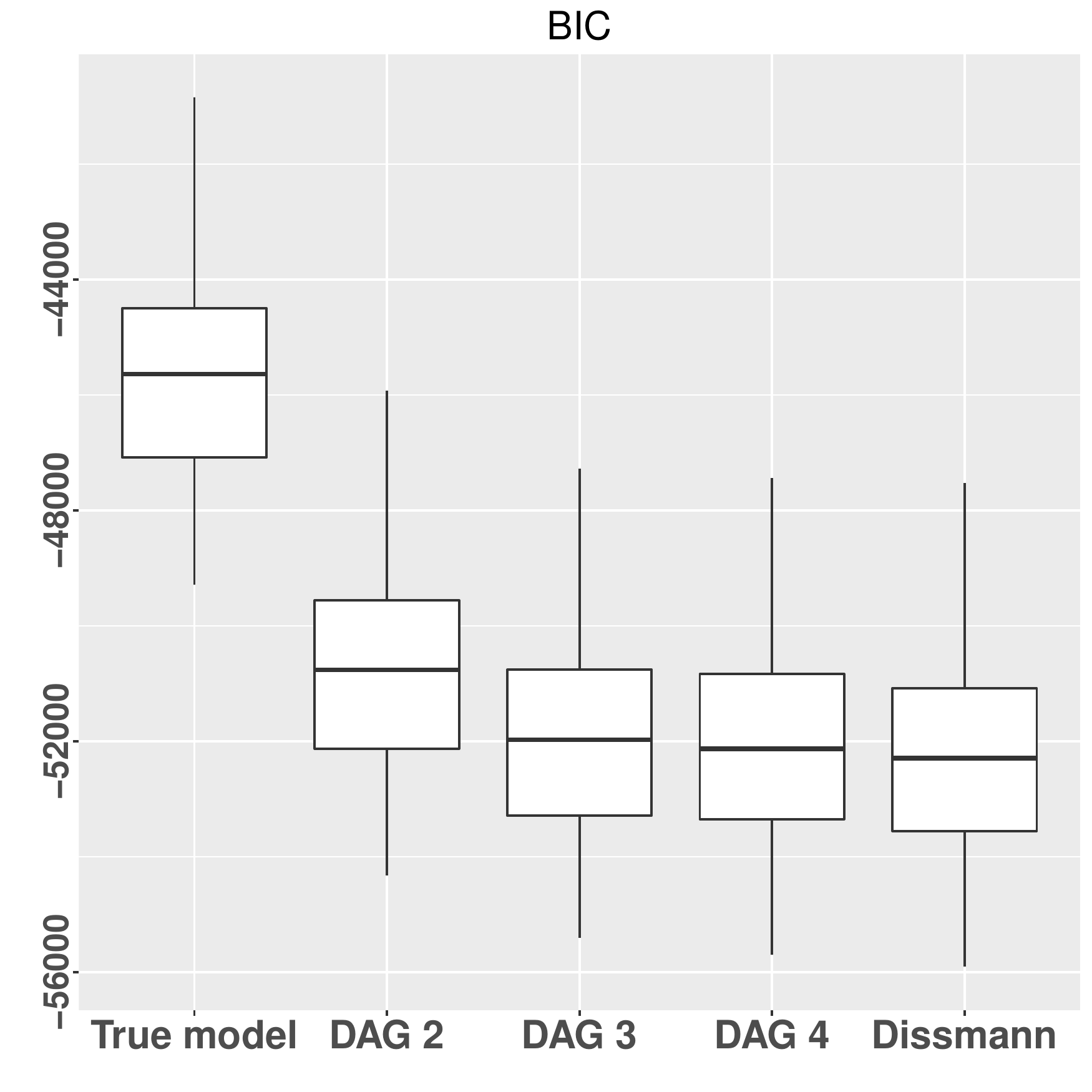}
	\includegraphics[width=0.24\textwidth, trim={0.1cm 0.1cm 0.1cm 0.1cm},clip]{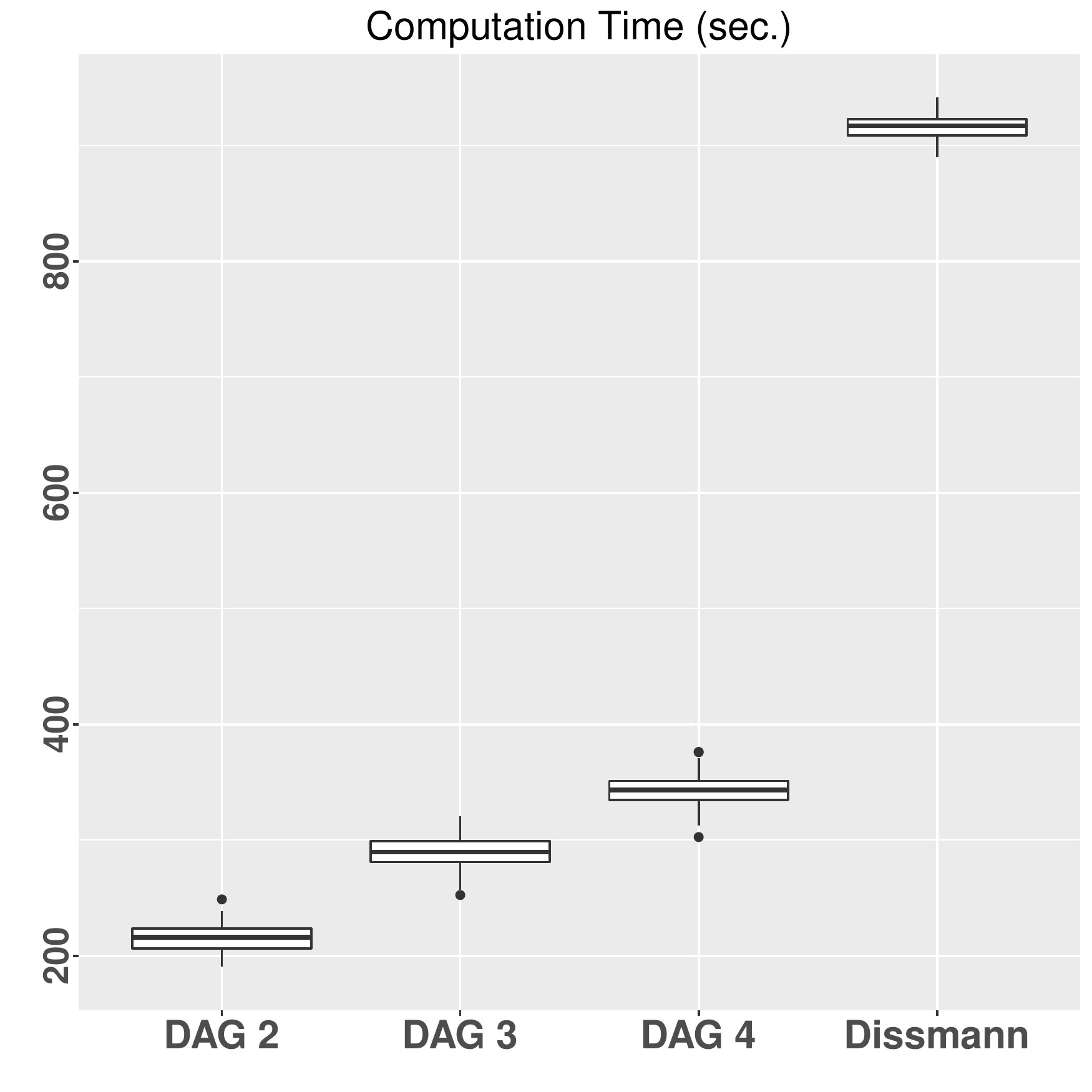}
	\caption{Scenario 1: Comparison of $k$-DAG representations for $k=2,3,4$ with Dissmann algorithm considering log-Likelihood, AIC, BIC and computation time in seconds on $100$ replications (from left to right).}
	\label{fig:simstudy:results1}
\end{figure}
\begin{figure}[H]
	\centering
	\includegraphics[width=0.24\textwidth, trim={0.1cm 0.1cm 0.1cm 0.1cm},clip]{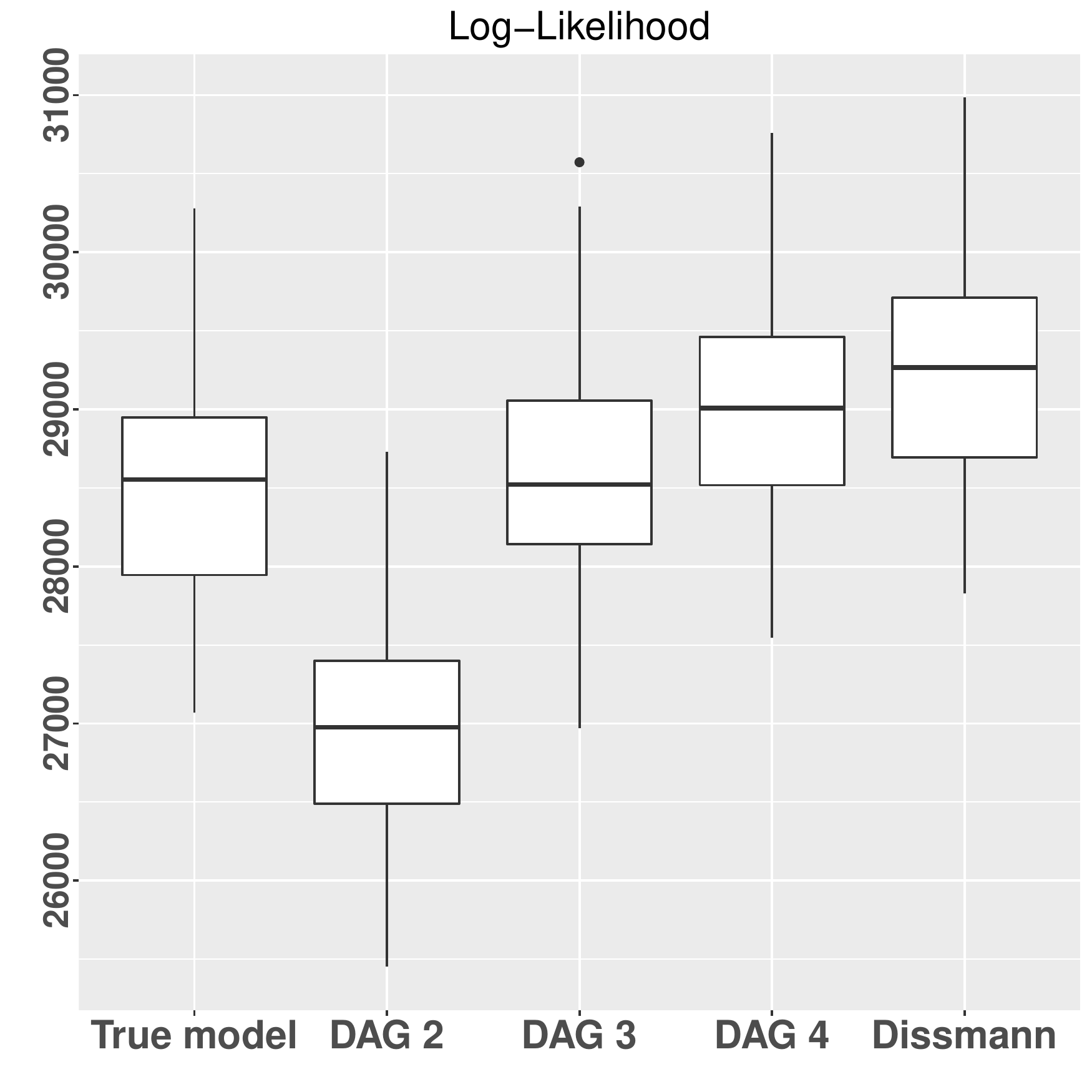}
	\includegraphics[width=0.24\textwidth, trim={0.1cm 0.1cm 0.1cm 0.1cm},clip]{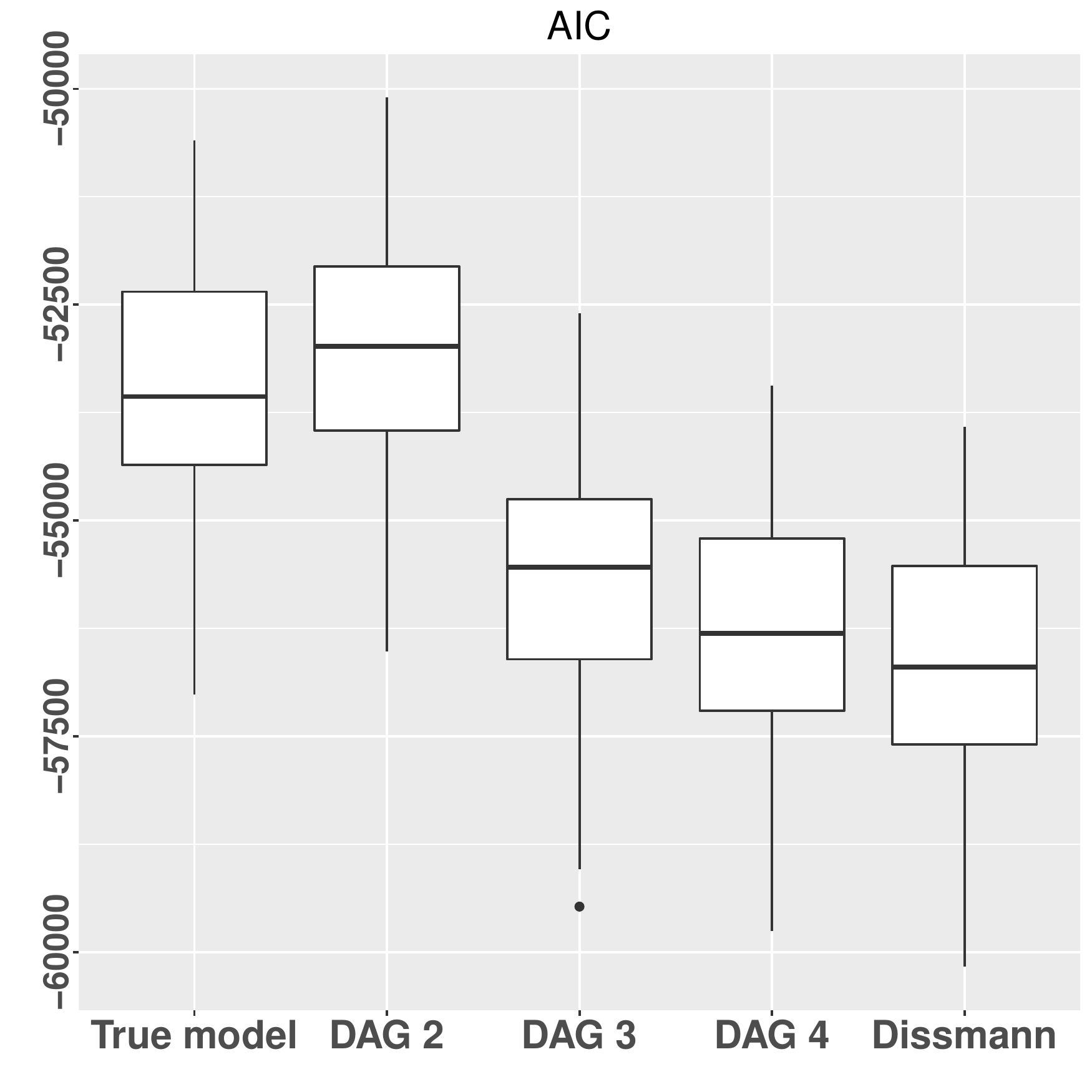}
	\includegraphics[width=0.24\textwidth, trim={0.1cm 0.1cm 0.1cm 0.1cm},clip]{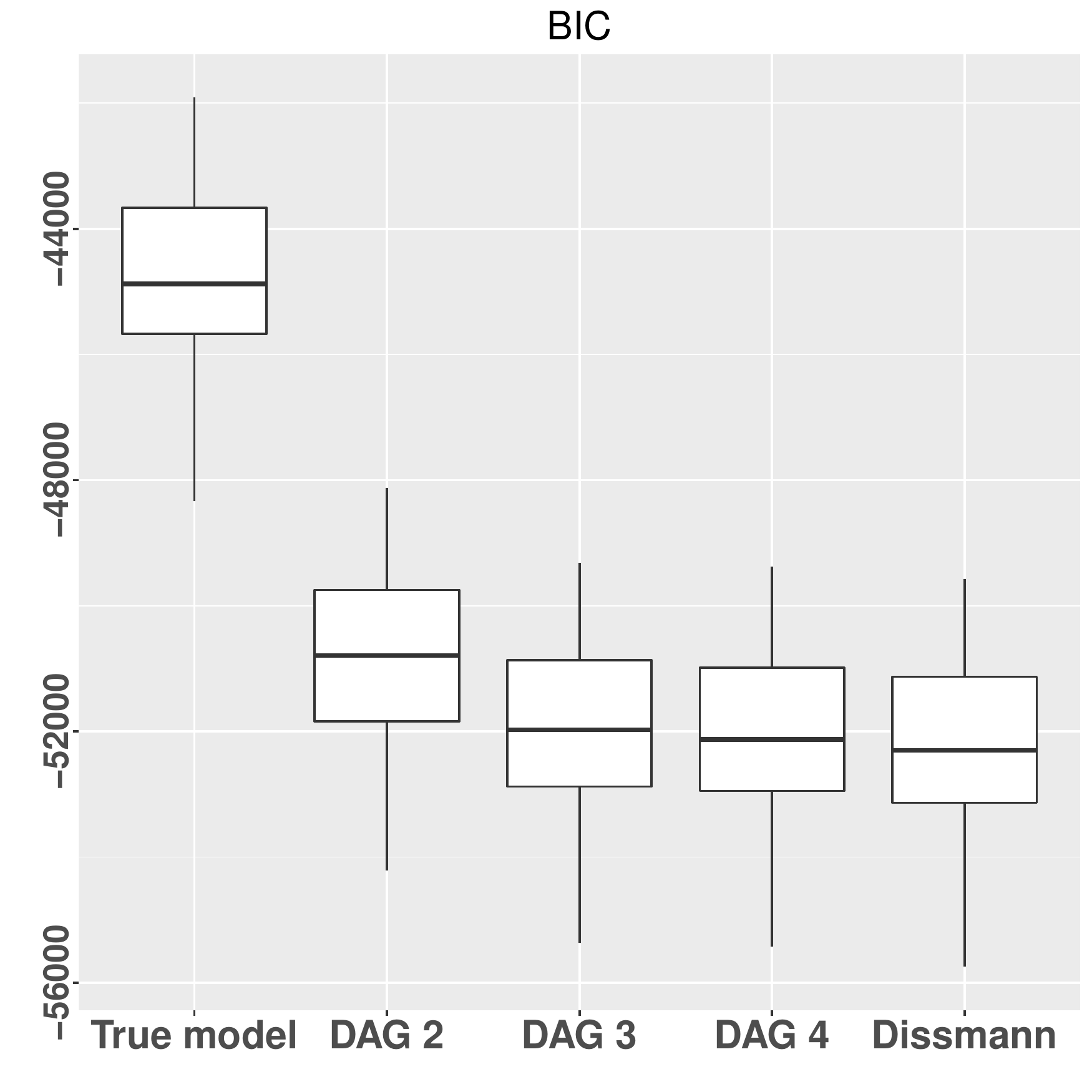}
	\includegraphics[width=0.24\textwidth, trim={0.1cm 0.1cm 0.1cm 0.1cm},clip]{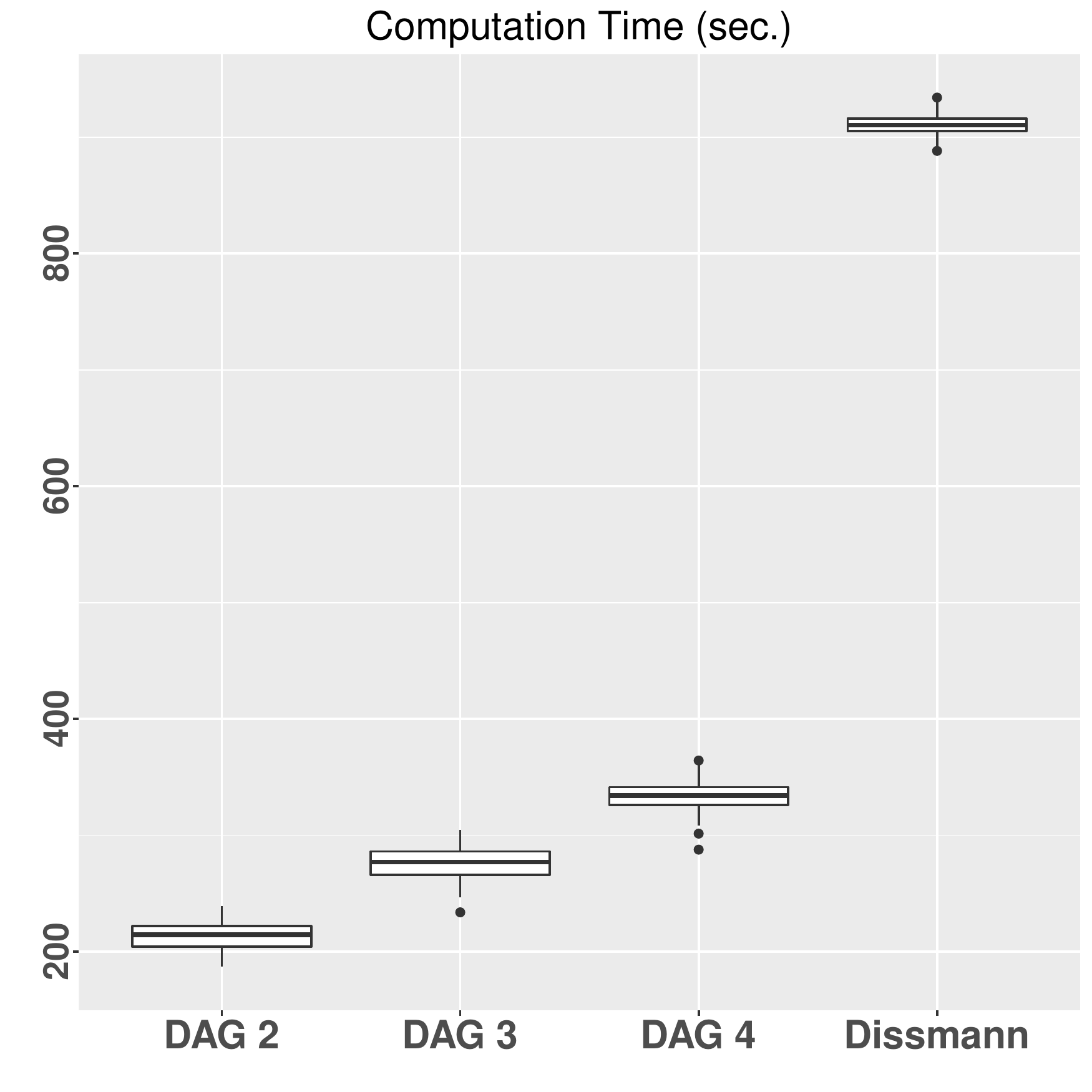}
	\caption{Scenario 2: Comparison of $k$-DAG representations for $k=2,3,4$ with Dissmann algorithm considering log-Likelihood, AIC, BIC and computation time in seconds on $100$ replications (from left to right).}
	\label{fig:simstudy:results2}
\end{figure}
\begin{figure}[H]
	\centering
	\includegraphics[width=0.24\textwidth, trim={0.1cm 0.1cm 0.1cm 0.1cm},clip]{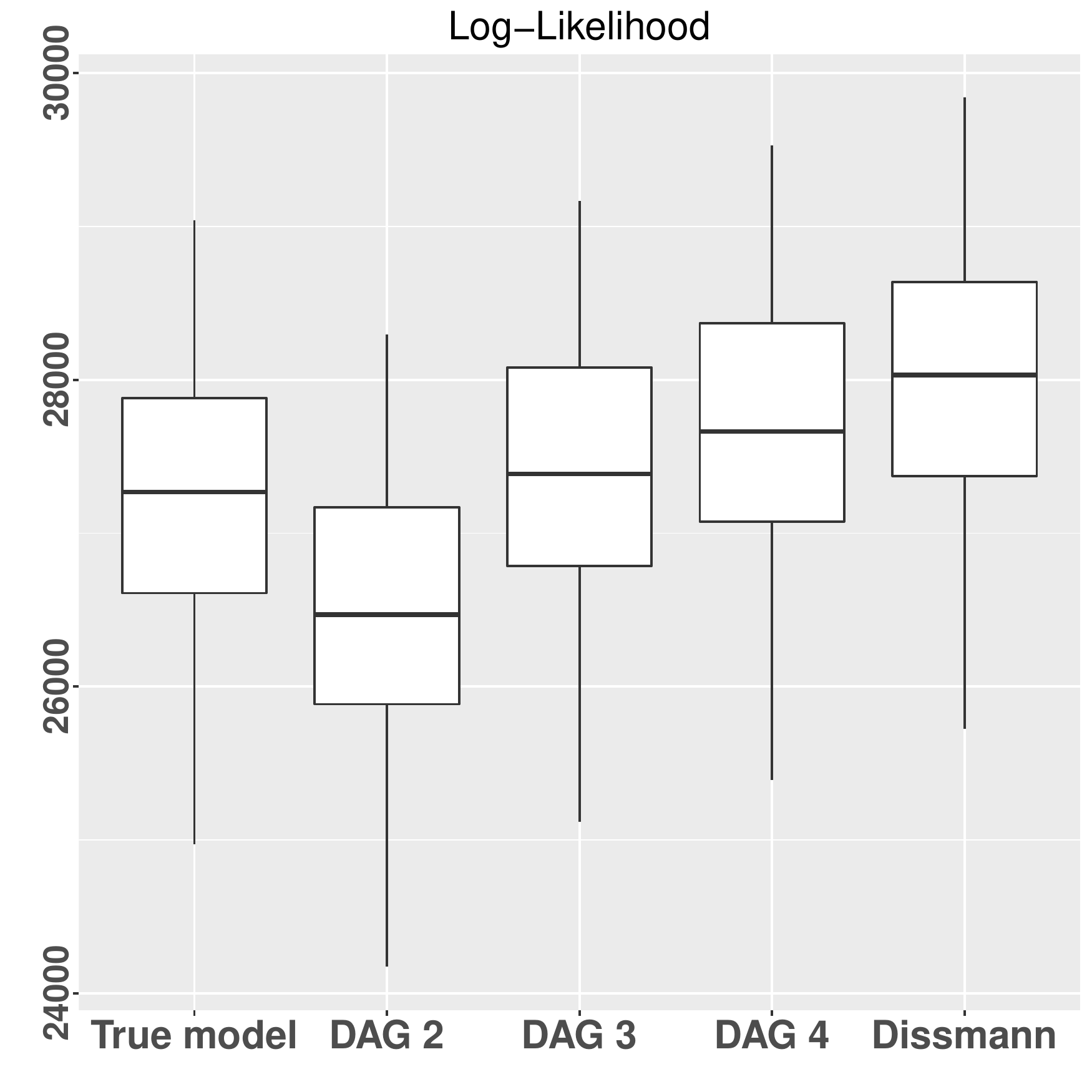}
	\includegraphics[width=0.24\textwidth, trim={0.1cm 0.1cm 0.1cm 0.1cm},clip]{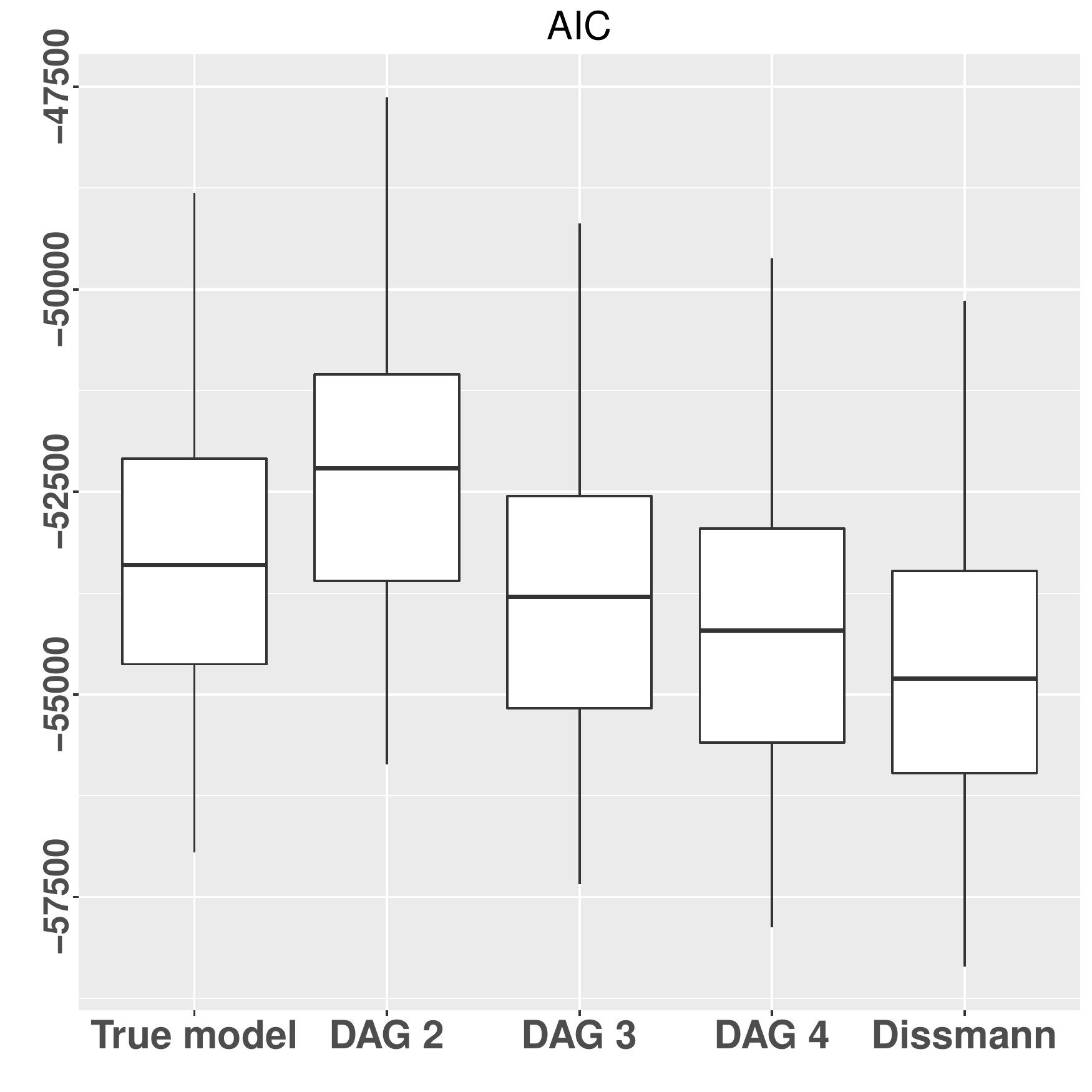}
	\includegraphics[width=0.24\textwidth, trim={0.1cm 0.1cm 0.1cm 0.1cm},clip]{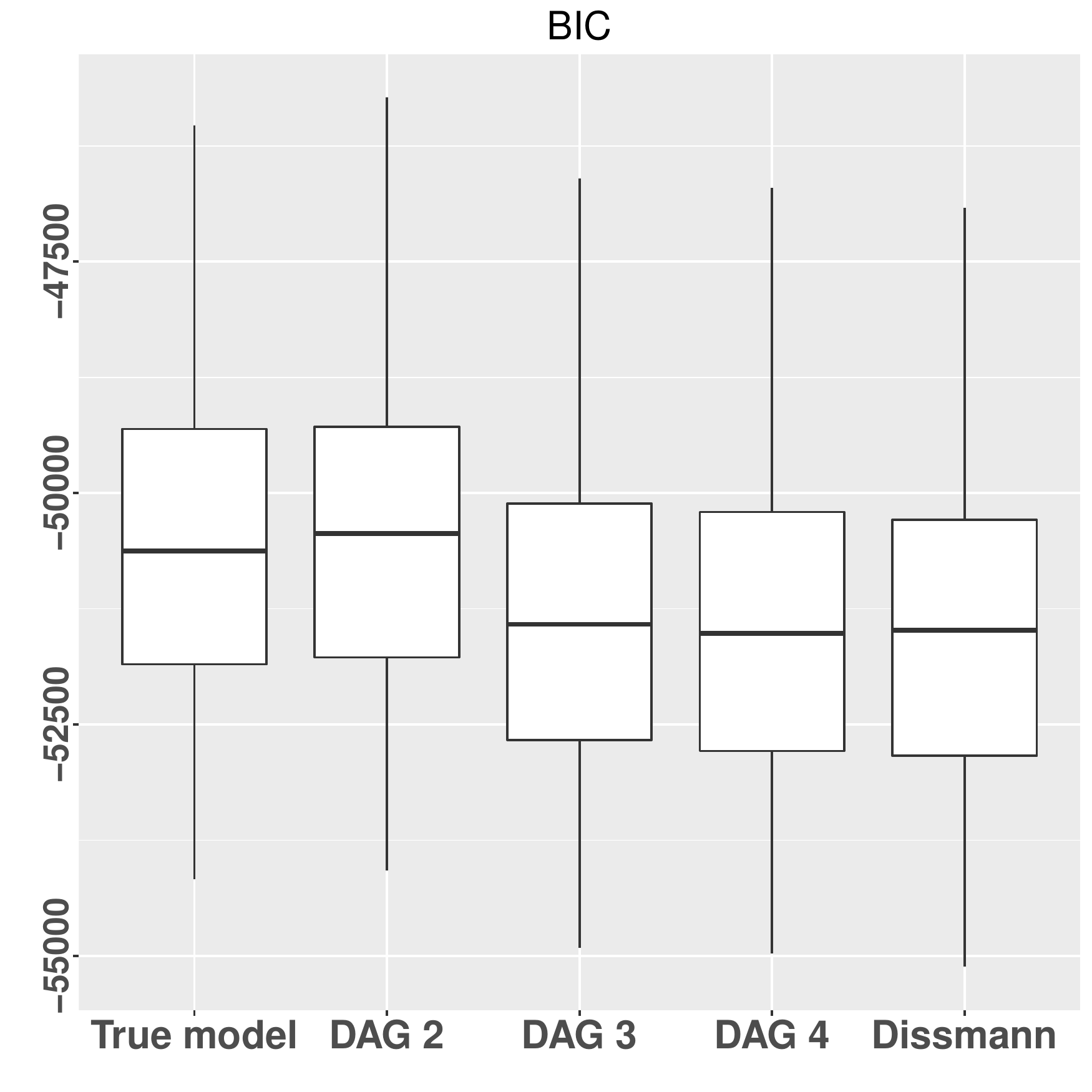}
	\includegraphics[width=0.24\textwidth, trim={0.1cm 0.1cm 0.1cm 0.1cm},clip]{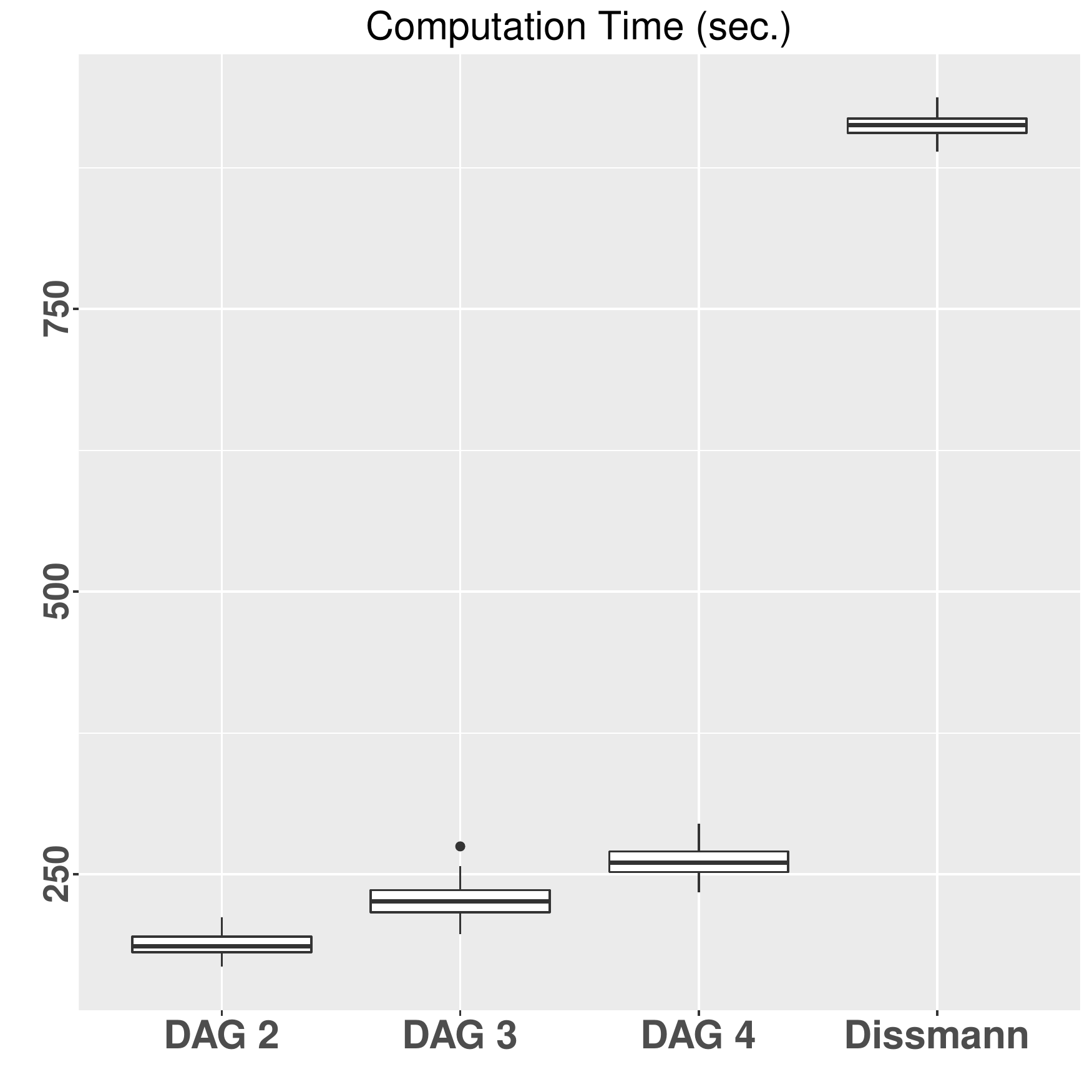}
	\caption{Scenario 3: Comparison of $k$-DAG representations for $k=2,3,4$ with Dissmann algorithm considering log-Likelihood, AIC, BIC and computation time in seconds on $100$ replications (from left to right).}
	\label{fig:simstudy:results3}
\end{figure}
\begin{figure}[H]
	\centering
	\includegraphics[width=0.24\textwidth, trim={0.1cm 0.1cm 0.1cm 0.1cm},clip]{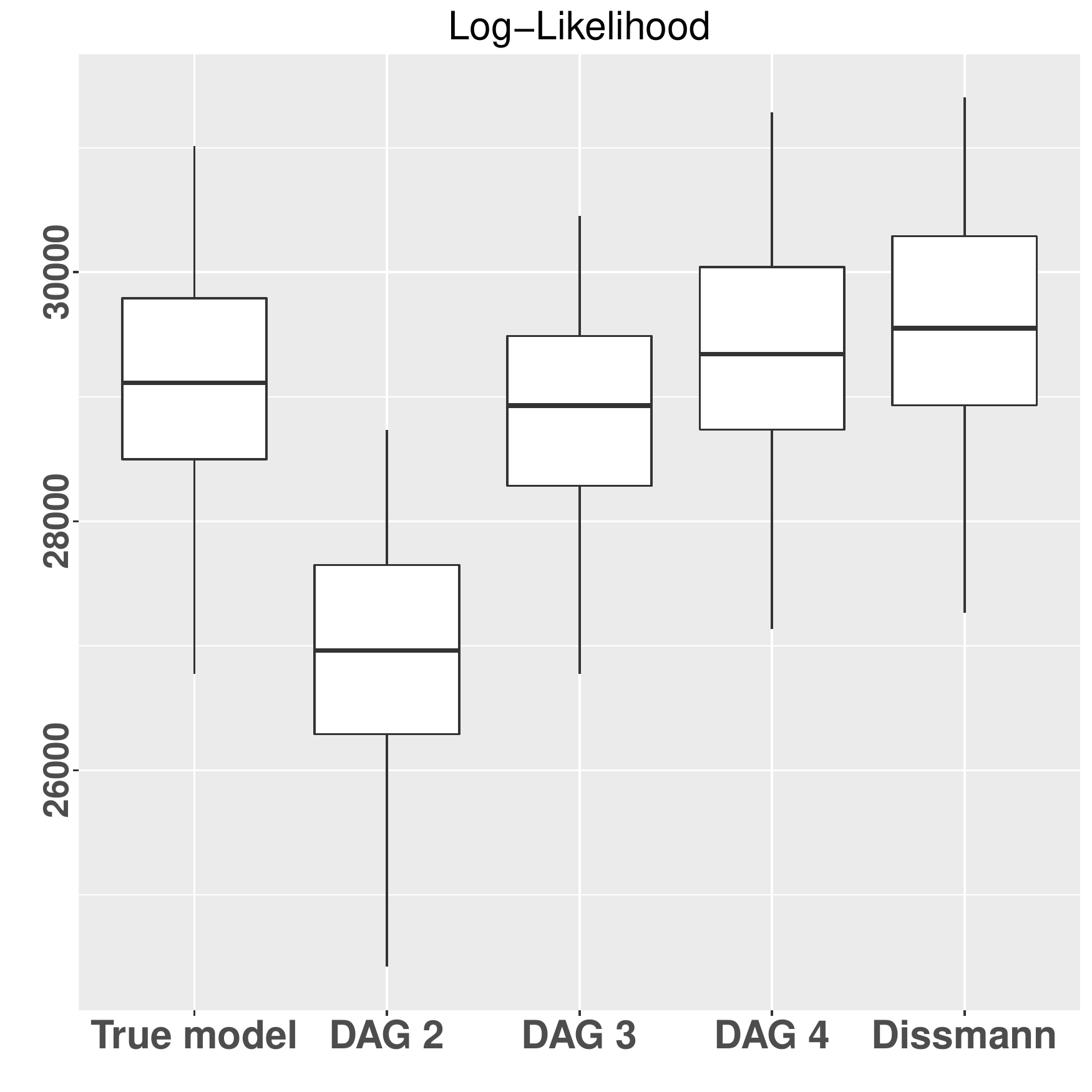}
	\includegraphics[width=0.24\textwidth, trim={0.1cm 0.1cm 0.1cm 0.1cm},clip]{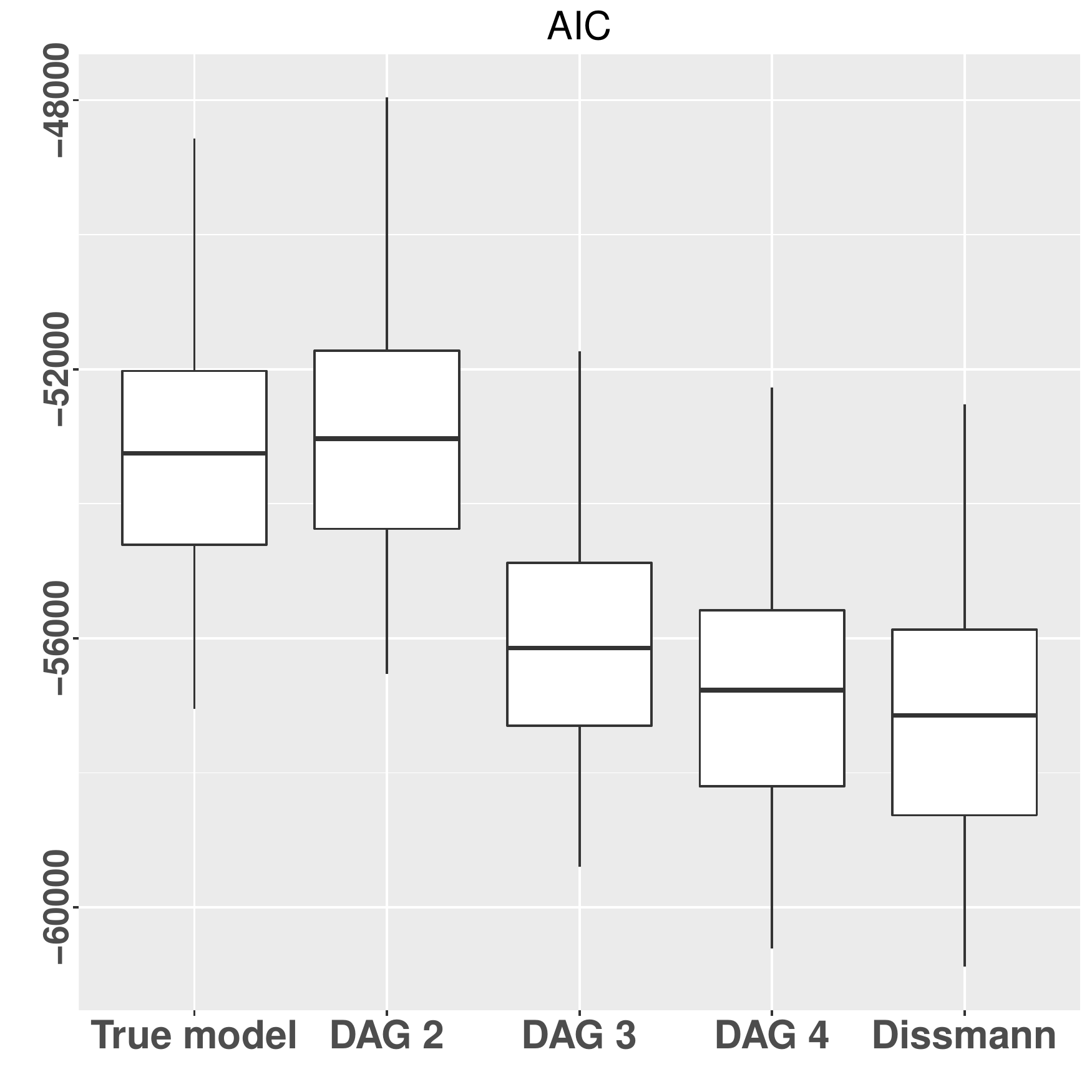}
	\includegraphics[width=0.24\textwidth, trim={0.1cm 0.1cm 0.1cm 0.1cm},clip]{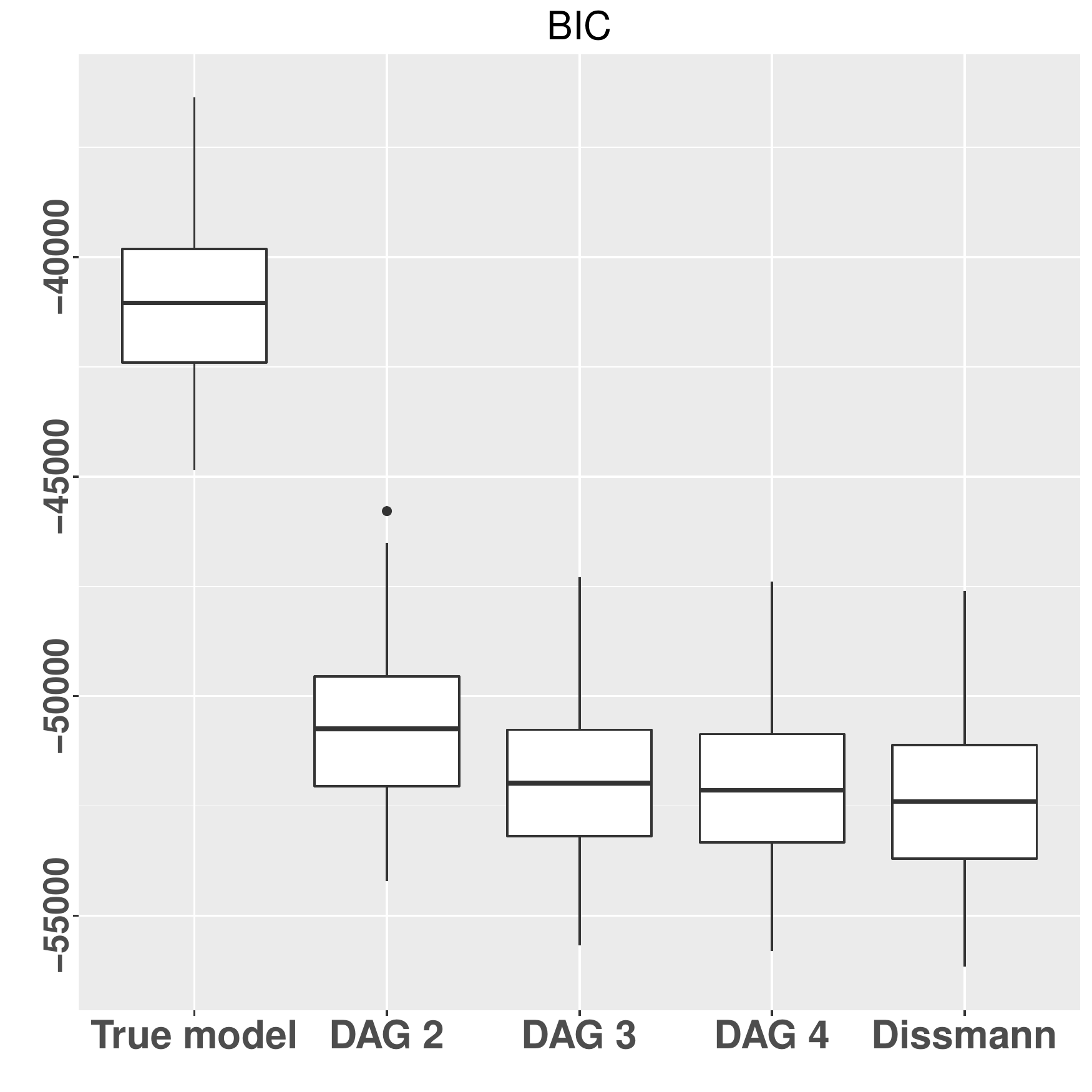}
	\includegraphics[width=0.24\textwidth, trim={0.1cm 0.1cm 0.1cm 0.1cm},clip]{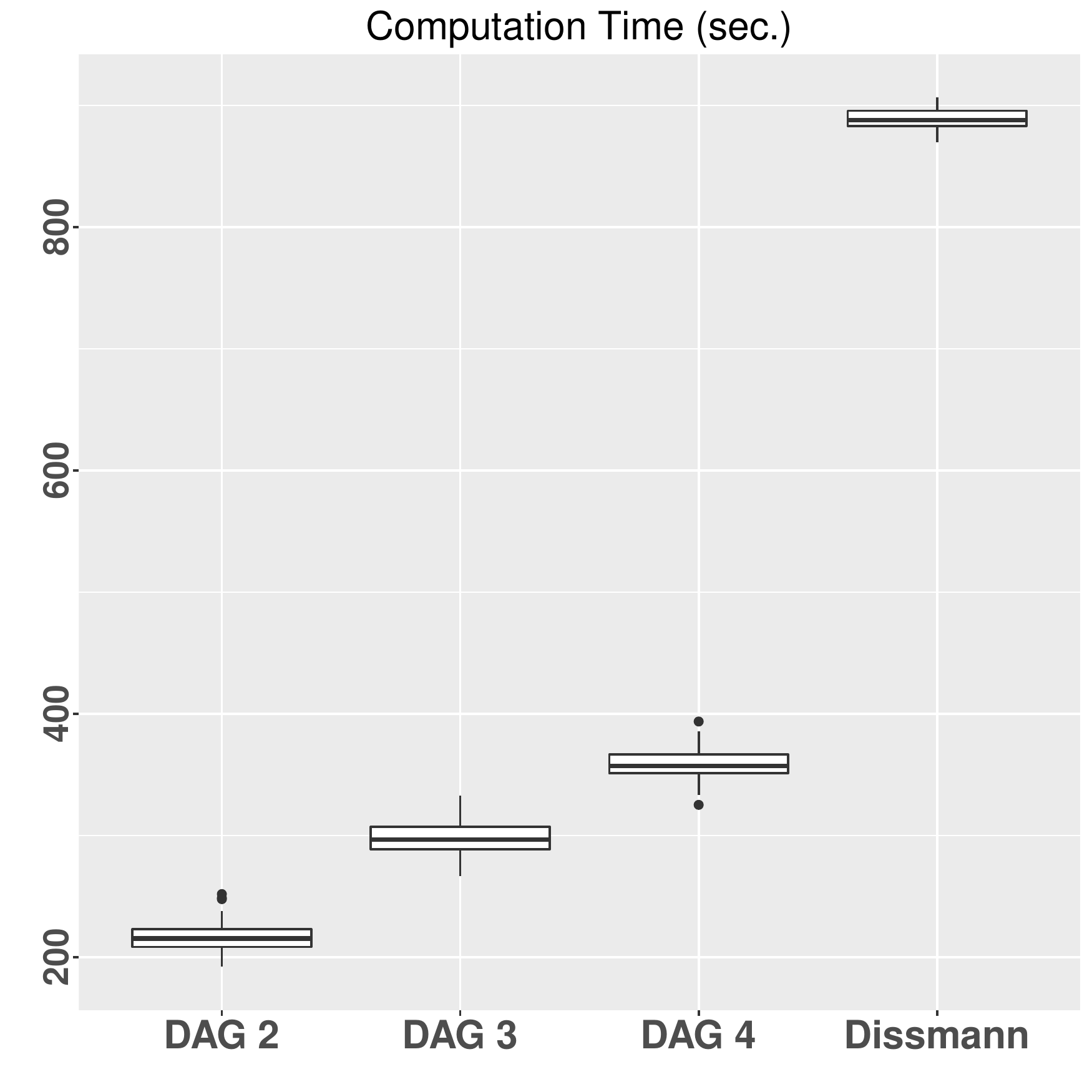}
	\caption{Scenario 4: Comparison of $k$-DAG representations for $k=2,3,4$ with Dissmann algorithm considering log-Likelihood, AIC, BIC and computation time in seconds on $100$ replications (from left to right).}
	\label{fig:simstudy:results4}
\end{figure}

\section{Supplementary material to application}\label{sec:appendix_application}
\subsection{DAGs estimated on Euro Stoxx 50}\label{subsec:dags}
\begin{figure}[H]
	\centering
	\includegraphics[width=0.6\textwidth,trim={0.0cm 1.0cm 0.0cm 1.0cm},clip]{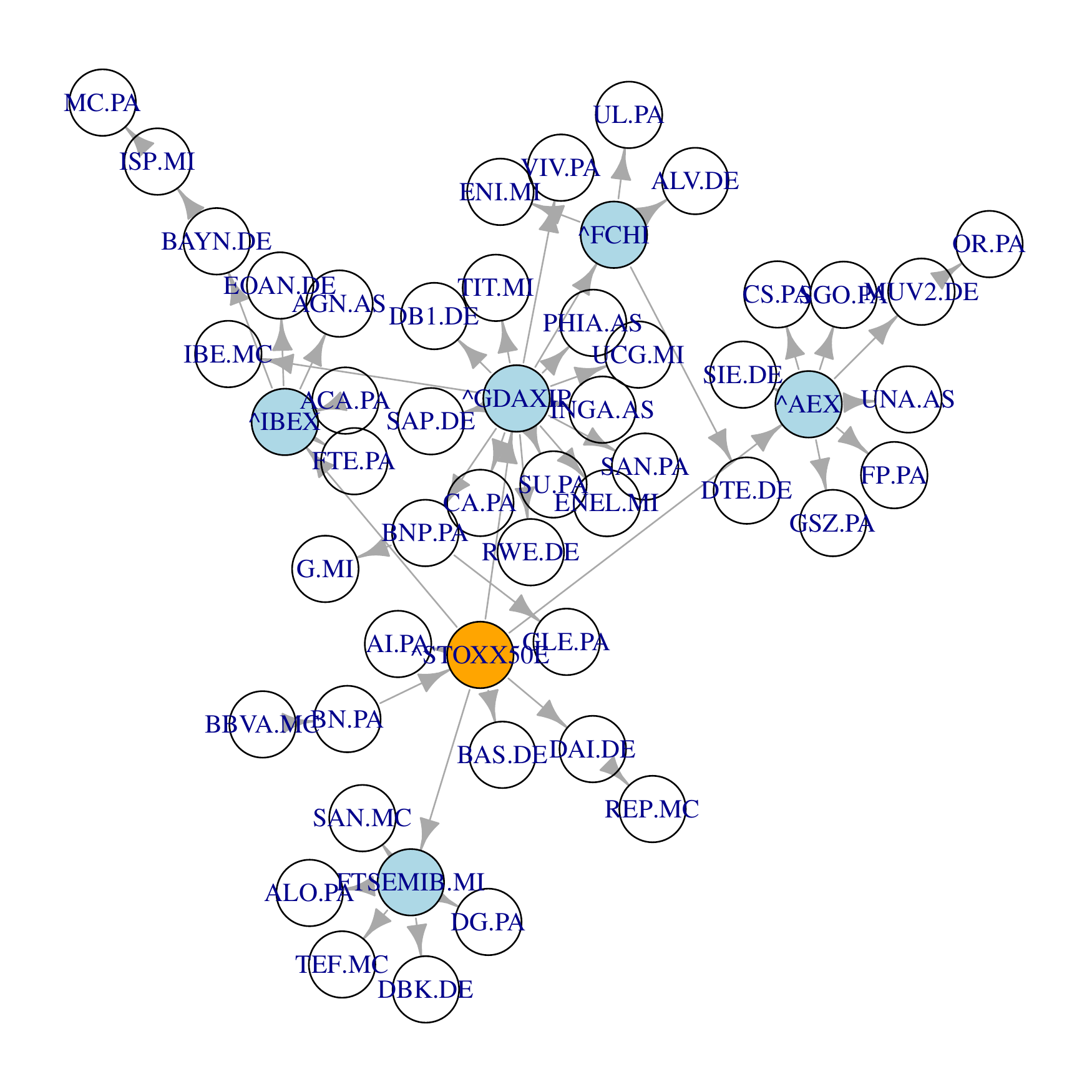}\\
	\includegraphics[width=0.6\textwidth,trim={0.0cm 1.0cm 0.0cm 1.0cm},clip]{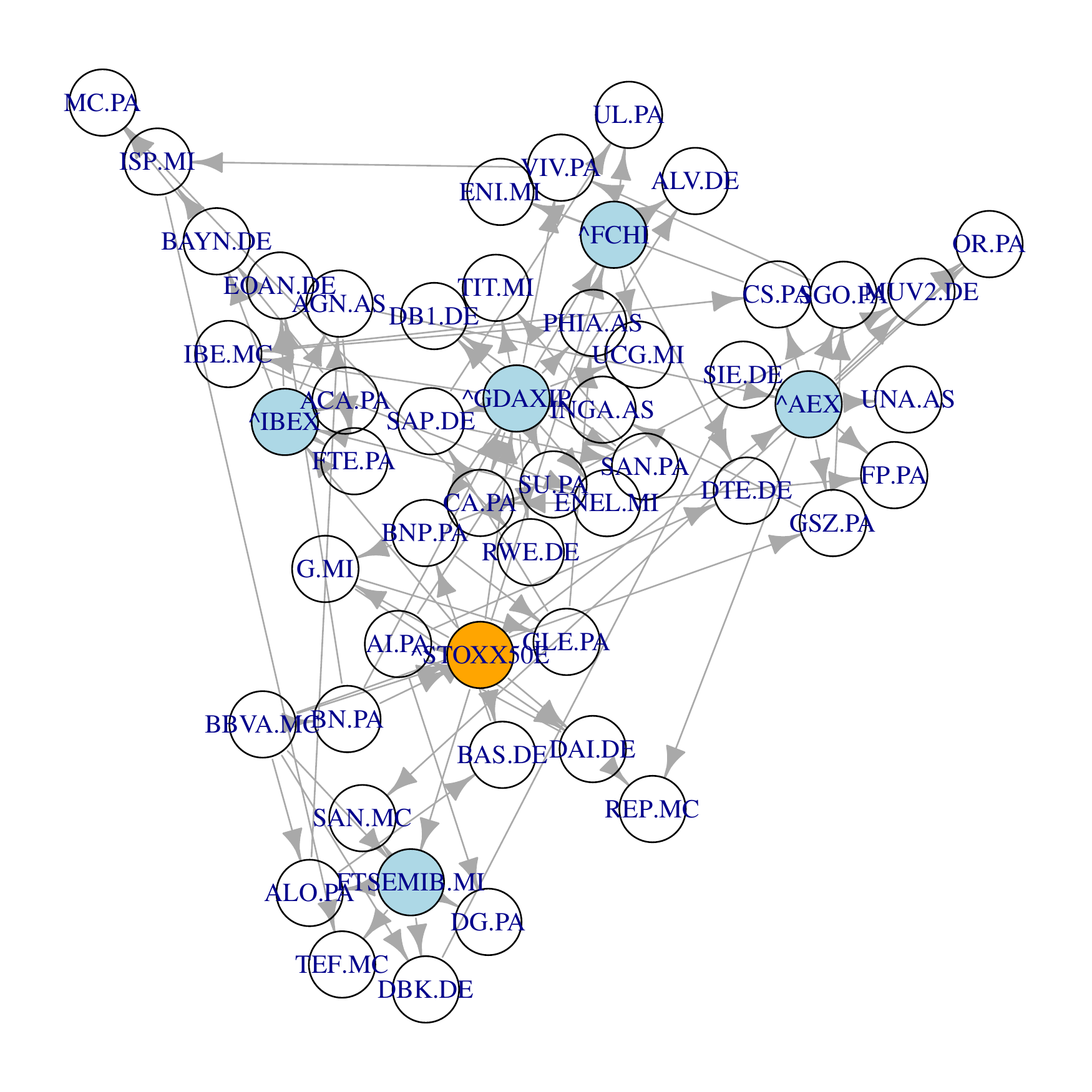}
	\caption{DAGs estimated on Euro Stoxx 50 with at most $k=1,2$ parents (upper, lower).}
	\label{fig:stoxx:dags1}
\end{figure}

\begin{figure}[H]
	\centering
	\includegraphics[width=0.75\textwidth,trim={0.0cm 1.0cm 0.0cm 1.0cm},clip]{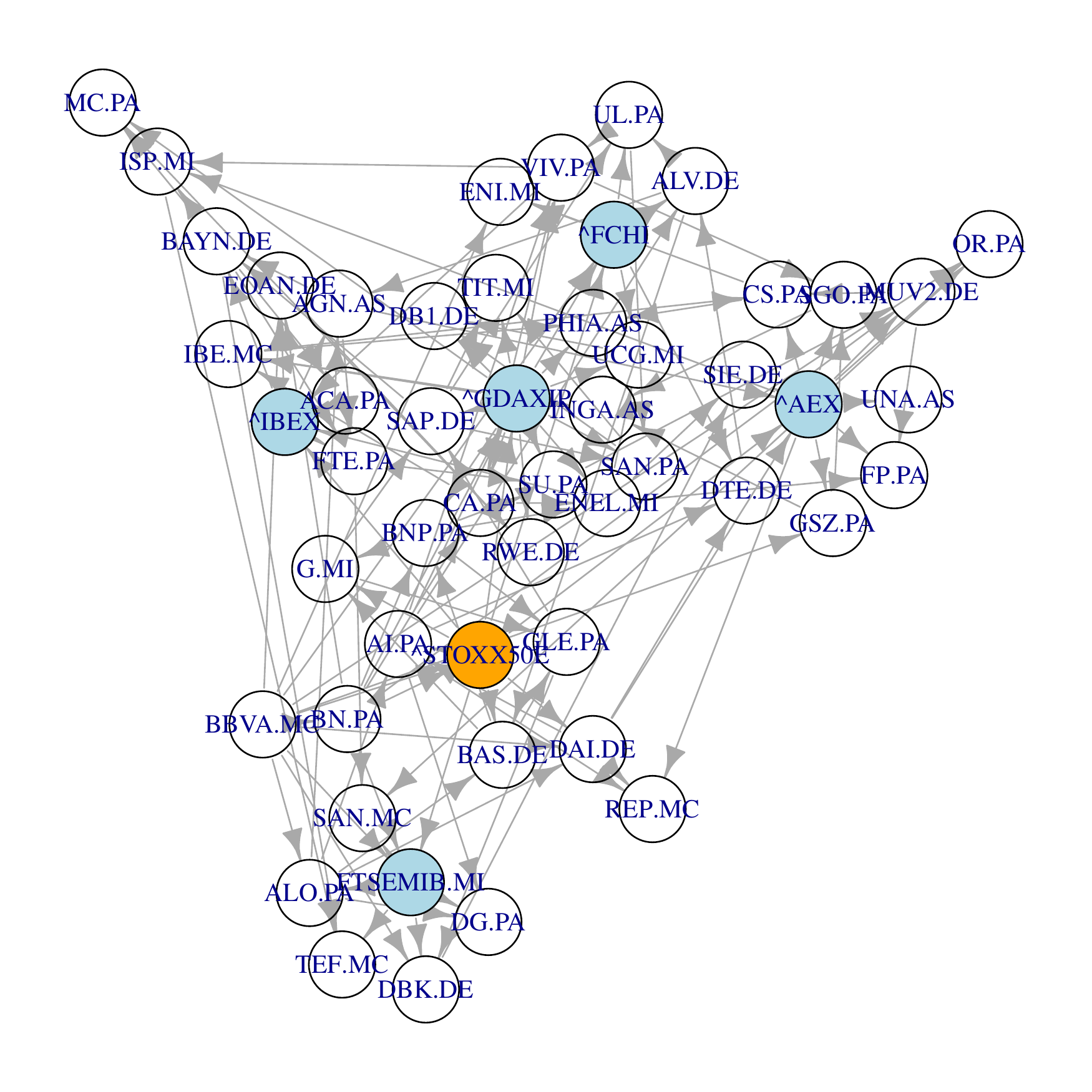}\\
	\includegraphics[width=0.75\textwidth,trim={0.0cm 1.0cm 0.0cm 1.0cm},clip]{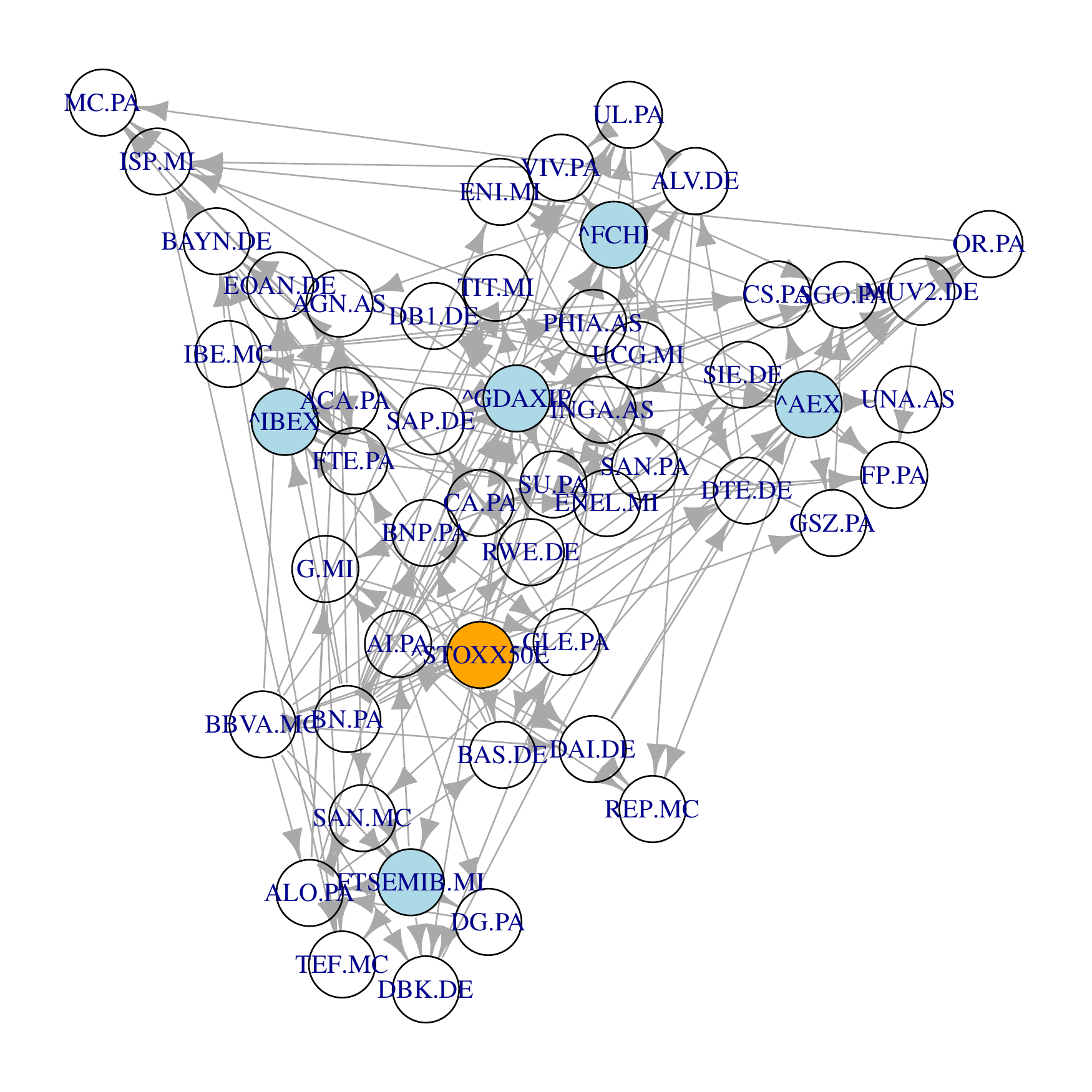}
	\caption{DAGs estimated on Euro Stoxx 50 with at most $k=3,4$ parents (upper, lower).}
	\label{fig:stoxx:dags2}
\end{figure}

\begin{figure}[H]
	\centering
	\includegraphics[width=0.75\textwidth,trim={0.0cm 1.0cm 0.0cm 1.0cm},clip]{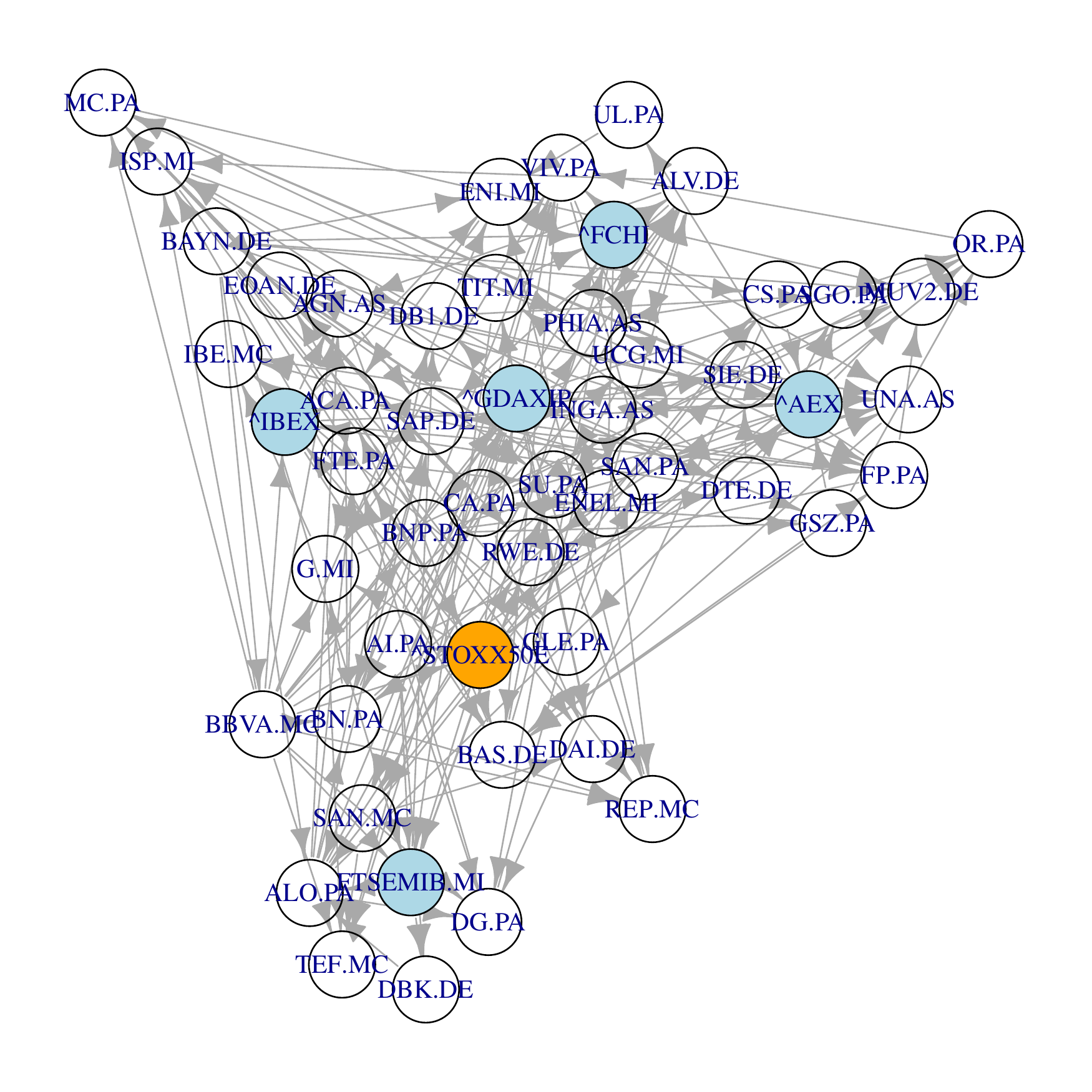}\\
	\includegraphics[width=0.75\textwidth,trim={0.0cm 1.0cm 0.0cm 1.0cm},clip]{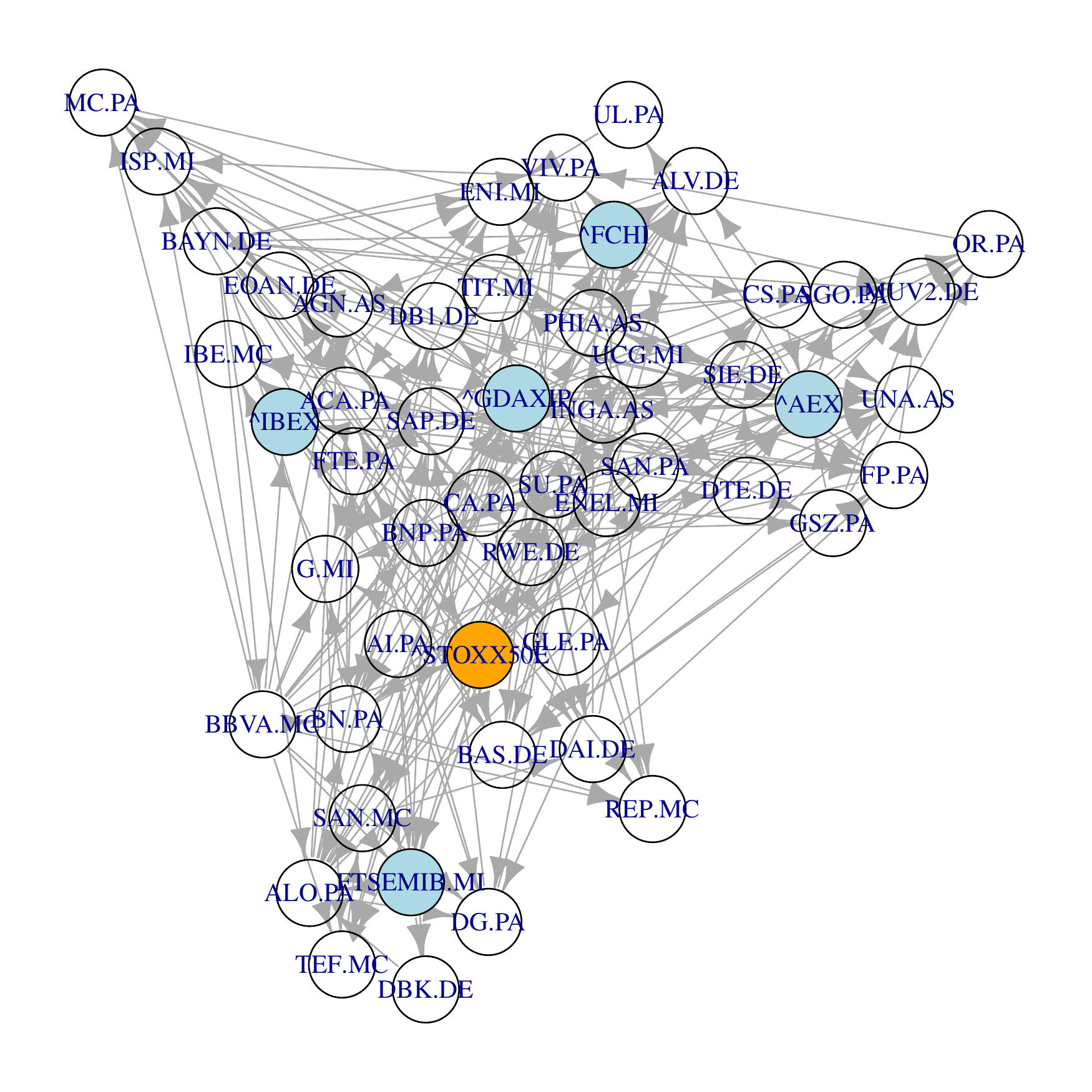}
	\caption{DAGs estimated on Euro Stoxx 50 with at most $k=5,6$ parents (upper, lower).}
	\label{fig:stoxx:dags3}
\end{figure}

\begin{figure}[H]
	\centering
	\includegraphics[width=0.75\textwidth,trim={0.0cm 1.0cm 0.0cm 1.0cm},clip]{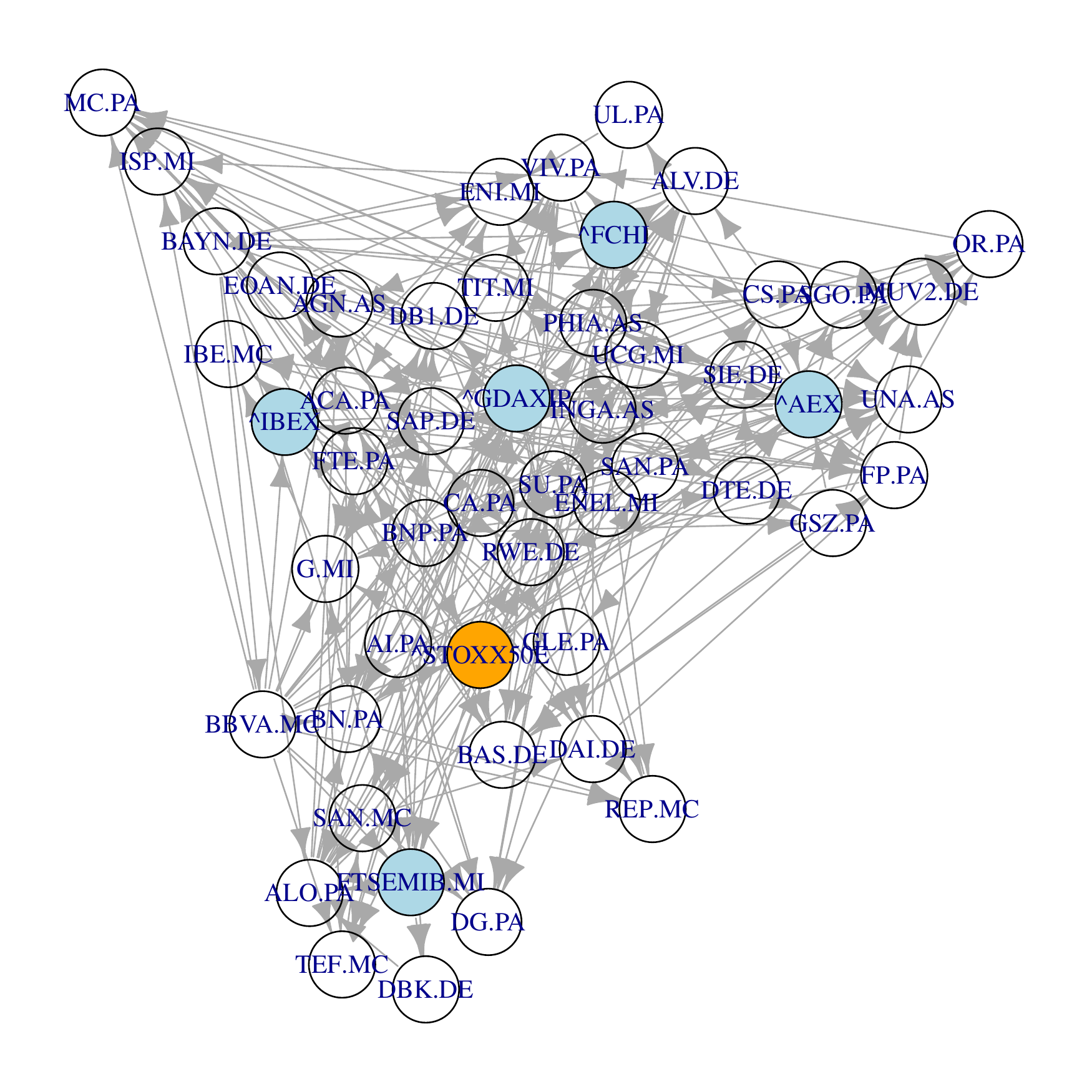}\\
	\includegraphics[width=0.75\textwidth,trim={0.0cm 1.0cm 0.0cm 1.0cm},clip]{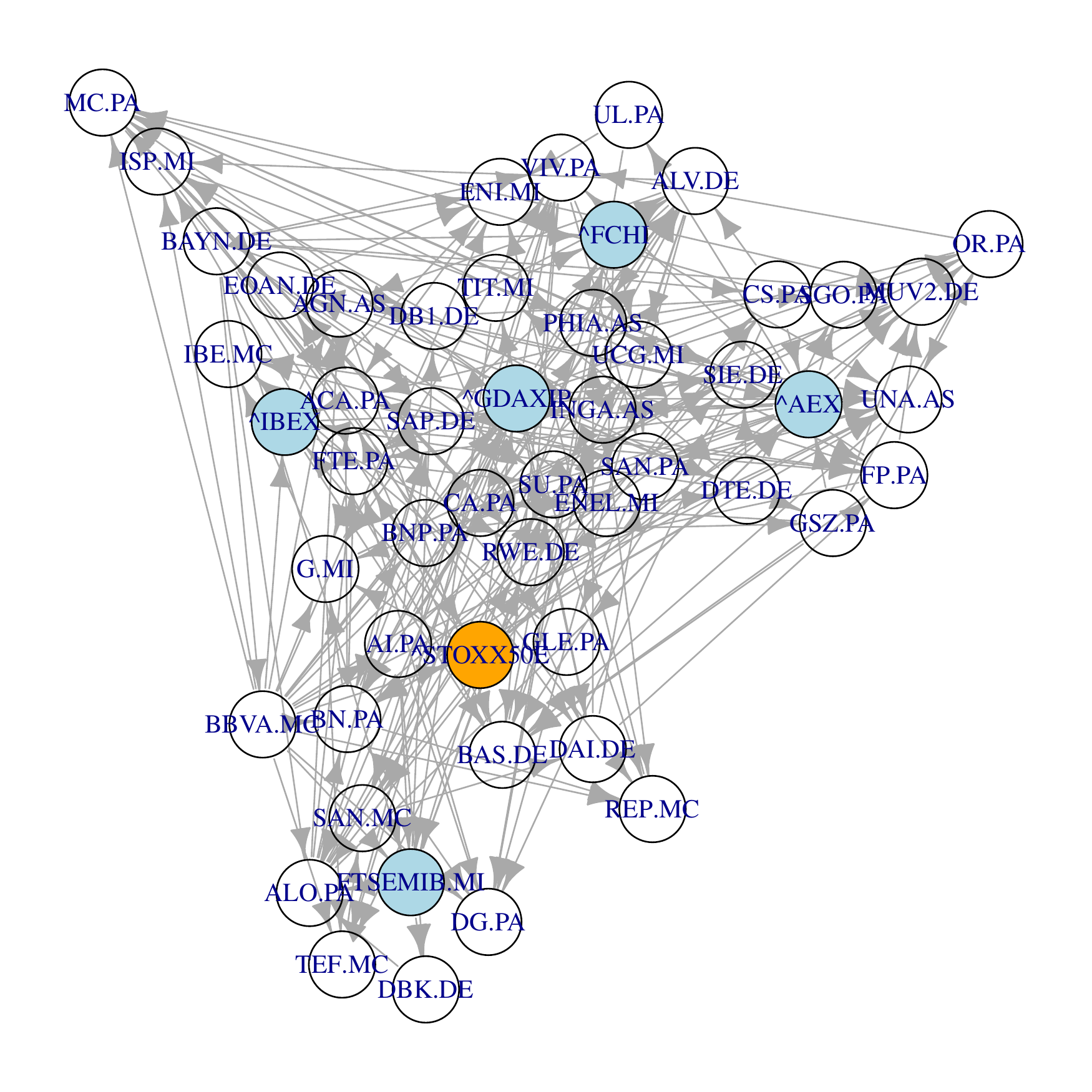}
	\caption{DAGs estimated on Euro Stoxx 50 with at most $k=7,8$ parents.}
	\label{fig:stoxx:dags4}
\end{figure}

\begin{figure}[H]
	\centering
	\includegraphics[width=0.75\textwidth,trim={0.0cm 1.0cm 0.0cm 1.0cm},clip]{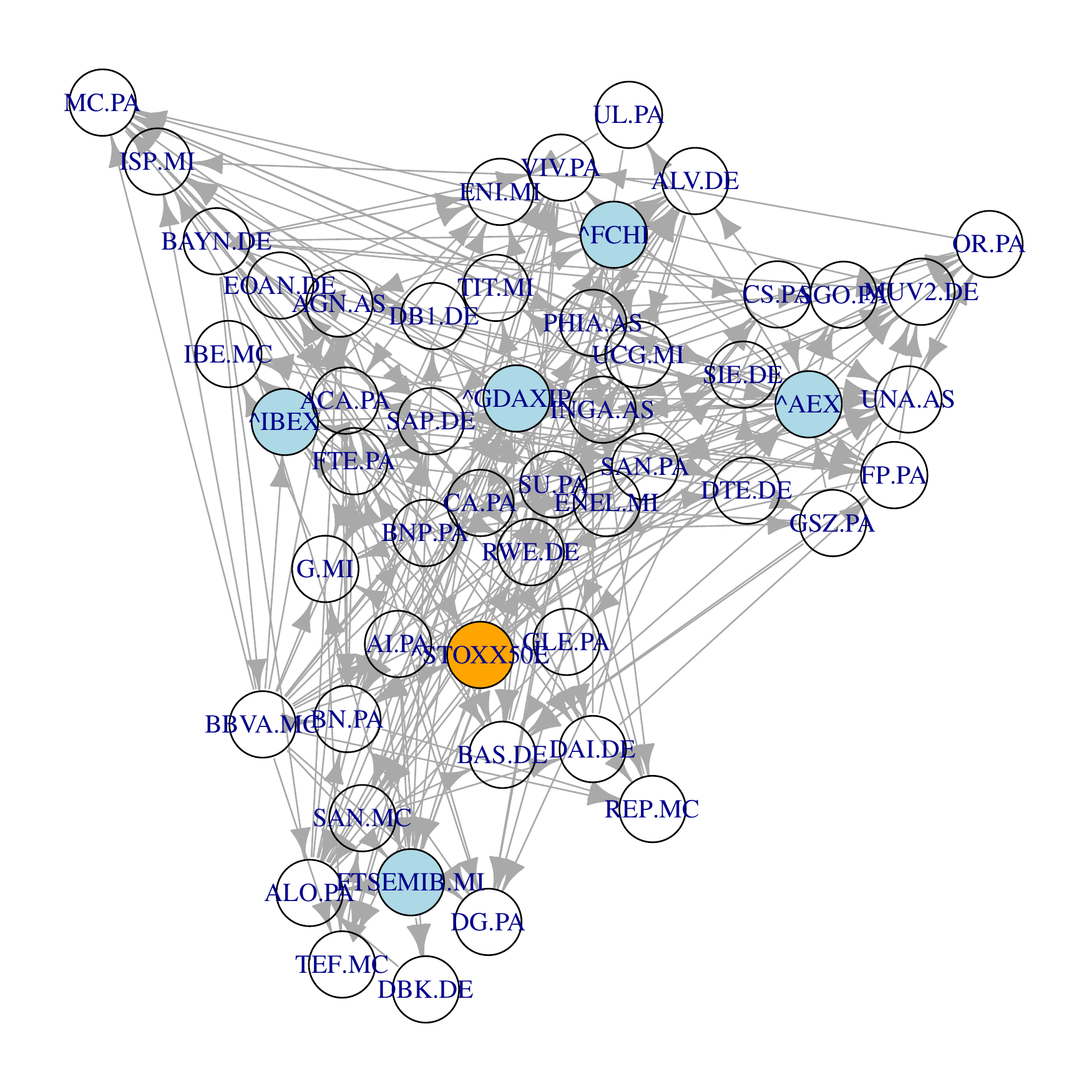}\\
	\includegraphics[width=0.75\textwidth,trim={0.0cm 1.0cm 0.0cm 1.0cm},clip]{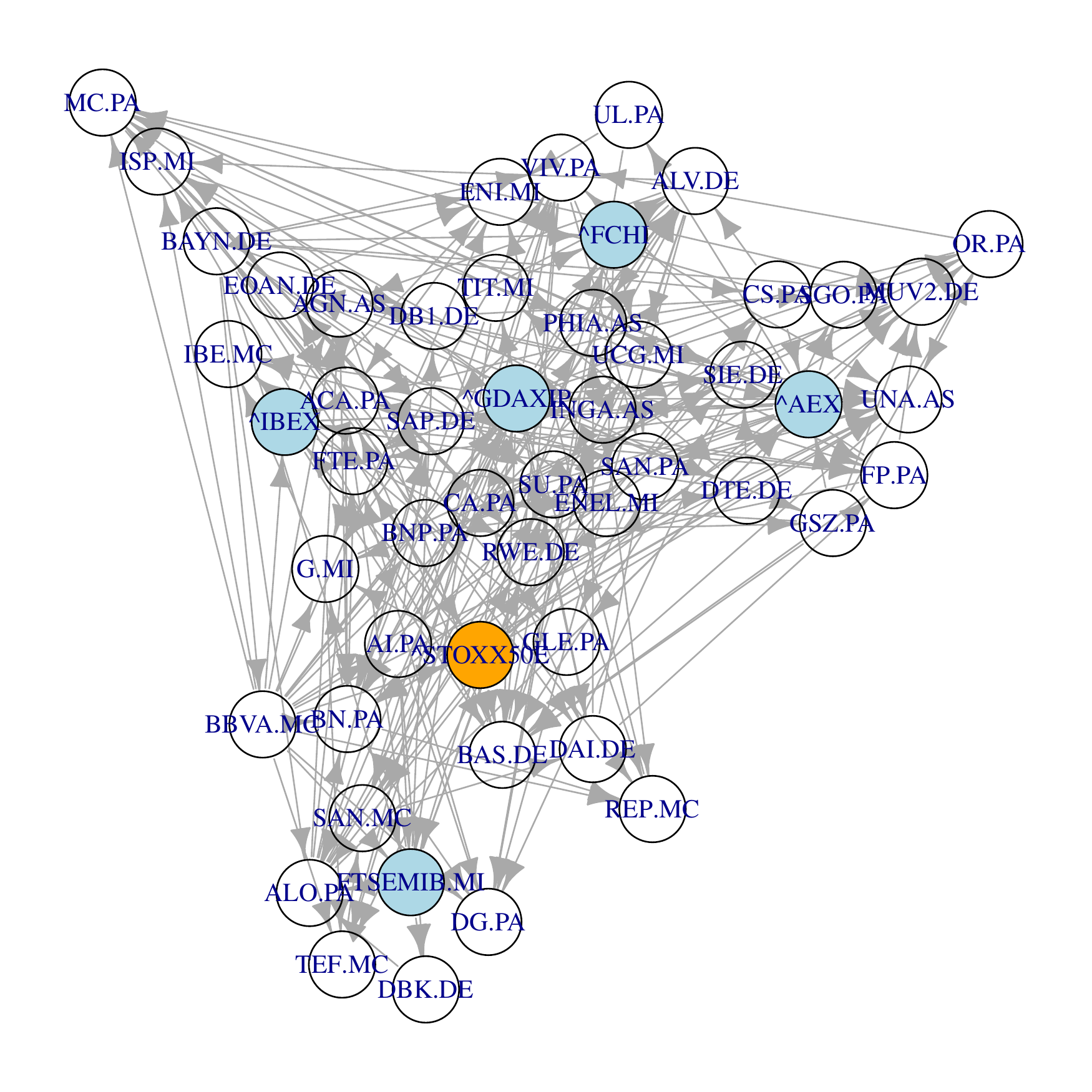}
	\caption{DAGs estimated on Euro Stoxx 50 with at most $k=9,10$ parents.}
	\label{fig:stoxx:dags5}
\end{figure}

\newpage

\begin{figure}[ht]
	\centering
	\includegraphics[width=1\textwidth,trim={0cm 1cm 0cm 1cm},clip]{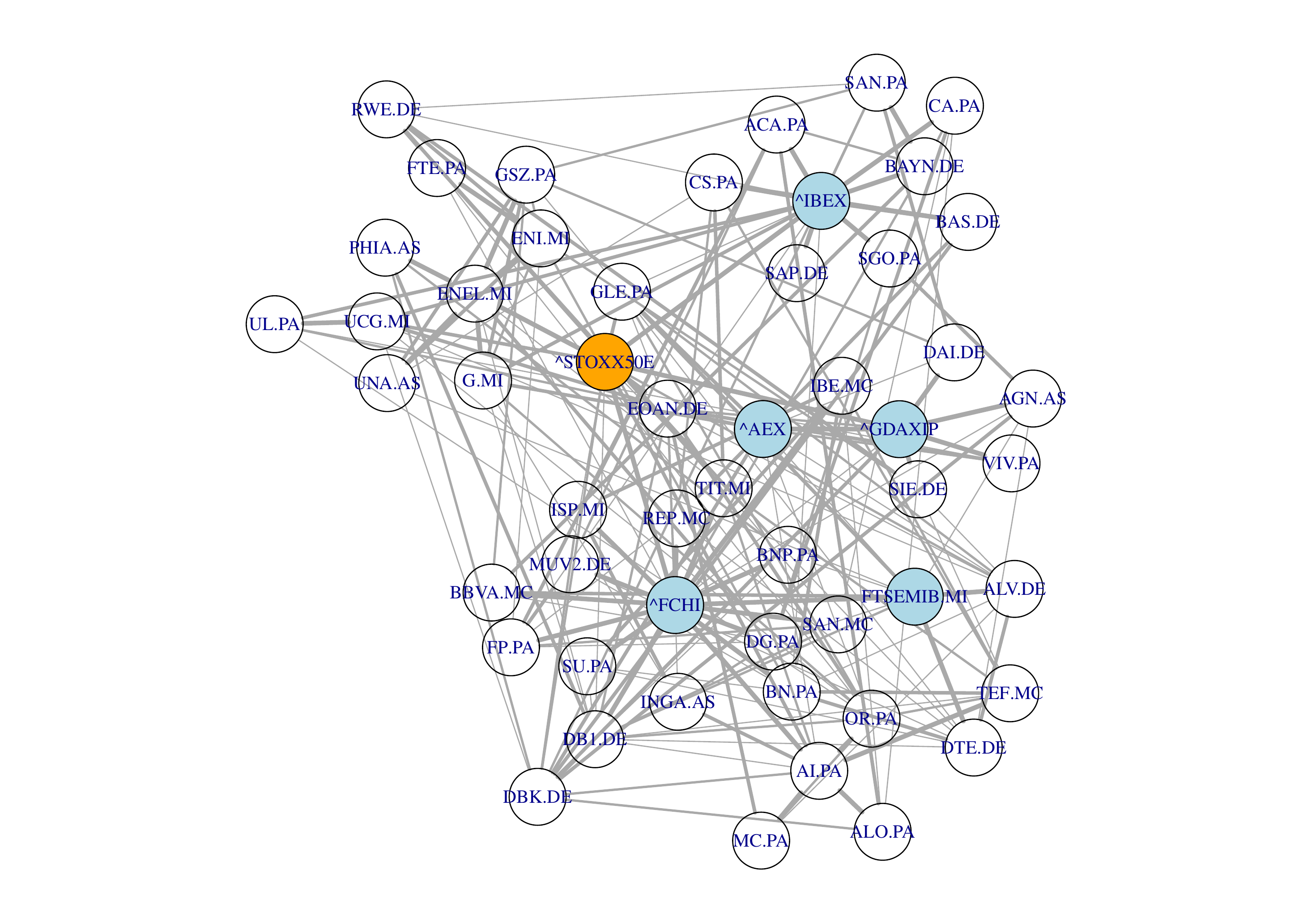}
	\caption{Undirected weighted graph $\HHH$ defined by the union of $\GGG_1^s,\dots,\GGG_4^s$ with weights calculated by \eqref{eq:edgeweight}, see Section \ref{subsec:generaldag2}. The weight of the edges is represented by the line width and illustrates how often these edges occurred in the skeletons $\GGG_1^s,\dots,\GGG_4^s$. We observe strong dependence especially among and between the national stocks indices and the Euro Stoxx 50 index itself. This confirms our expectation that our approach captures the most important relationships in the data. }
	\label{fig:stoxx:hgraph}
\end{figure}

\subsection{Numerical results of fitted models}\label{subsec:appendix_application_tables}
\begin{table}[H]
	\centering
	\begin{tabular}{R{0.7cm}R{1.2cm}R{1.2cm}R{0.8cm}R{1.2cm}R{1.2cm}R{1.2cm}R{1.2cm}R{1.2cm}R{0.8cm}R{0.5cm}R{0.9cm}}
		
		\multicolumn{4}{c}{DAG} & \multicolumn{8}{c}{R-vine representation of DAG}\\
		\hline \hline
		Max. parents & No. par. & log-Lik. & BIC & No. par & No. ni-pc & No. G-pc & No. non-G-pc & log-Lik. & BIC & $k^\prime$ & time (sec.) \\ 
		\hline
		1 & 155 & -47138 & 95344 & 206 & 51 & 0 & 51 & -45880 & 93180 & 1 & 124 \\ 
		2 & 204 & -45365 & 92135 & 405 & 236 & 16 & 220 & -42859 & 88509 & 47 & 197 \\ 
		3 & 250 & -44731 & 91186 & 522 & 401 & 50 & 351 & -41661 & 86919 & 51 & 223 \\ 
		4 & 280 & -44448 & 90826 & 531 & 429 & 59 & 370 & -41492 & 86644 & 47 & 246 \\ 
		5 & 309 & -44224 & 90577 & 536 & 435 & 53 & 382 & -41455 & 86605 & 51 & 255 \\ 
		6 & 330 & -44104 & 90482 & 540 & 438 & 56 & 382 & -41435 & 86592 & 48 & 279 \\ 
		7 & 341 & -44045 & 90440 & 546 & 435 & 52 & 383 & -41418 & 86599 & 47 & 280 \\ 
		8 & 345 & -44026 & 90429 & 542 & 433 & 55 & 378 & -41431 & 86598 & 48 & 276 \\ 
		9 & 349 & -44006 & 90418 & 540 & 427 & 57 & 370 & -41436 & 86594 & 49 & 282 \\ 
		10 & 353 & -43990 & 90412 & 534 & 422 & 58 & 364 & -41437 & 86554 & 47 & 271 \\ 
		\hline
	\end{tabular}
	\caption{Numerical results for DAG and DAG representations. Calculations based on \textit{z-scale}, abbreviations \textit{ni-pc} for \textit{non independence pair copula}, \textit{G-pc} for \textit{Gaussian pair copula}.}
	\label{table:application:dag}
\end{table}

\newpage

\begin{longtable}{R{1.2cm}R{1.2cm}R{1.2cm}R{1.2cm}R{1.2cm}R{1.2cm}R{1.2cm}R{1.2cm}}
	trunc. level & No. par & No. ni-pc & No. G-pc & No. non-G-pc & log-Lik. & BIC & time (sec.) \\ 
	\hline \hline
	1 & 206 & 51 & 0 & 51 & -45808 & 93035 & 633 \\ 
	2 & 261 & 99 & 2 & 97 & -44445 & 90688 & 641 \\ 
	3 & 296 & 126 & 5 & 121 & -44120 & 90281 & 674 \\ 
	4 & 320 & 152 & 8 & 144 & -43899 & 90003 & 728 \\ 
	5 & 331 & 170 & 13 & 157 & -43750 & 89781 & 726 \\ 
	6 & 340 & 182 & 14 & 168 & -43628 & 89599 & 741 \\ 
	7 & 358 & 197 & 14 & 183 & -43422 & 89312 & 715 \\ 
	8 & 377 & 217 & 17 & 200 & -43194 & 88986 & 722 \\ 
	9 & 389 & 231 & 18 & 213 & -43101 & 88884 & 723 \\ 
	10 & 402 & 245 & 19 & 226 & -43037 & 88844 & 729 \\ 
	11 & 412 & 259 & 20 & 239 & -42950 & 88741 & 724 \\ 
	12 & 419 & 269 & 23 & 246 & -42882 & 88653 & 739 \\ 
	13 & 425 & 279 & 25 & 254 & -42781 & 88491 & 724 \\ 
	14 & 432 & 290 & 25 & 265 & -42654 & 88286 & 726 \\ 
	15 & 439 & 299 & 26 & 273 & -42566 & 88157 & 727 \\ 
	16 & 451 & 313 & 28 & 285 & -42420 & 87949 & 732 \\ 
	17 & 464 & 321 & 29 & 292 & -42303 & 87805 & 732 \\ 
	18 & 481 & 335 & 30 & 305 & -42170 & 87656 & 731 \\ 
	19 & 491 & 344 & 31 & 313 & -42126 & 87636 & 723 \\ 
	20 & 500 & 353 & 33 & 320 & -42023 & 87491 & 725 \\ 
	21 & 507 & 361 & 33 & 328 & -41969 & 87433 & 728 \\ 
	22 & 508 & 365 & 34 & 331 & -41933 & 87367 & 726 \\ 
	23 & 509 & 371 & 37 & 334 & -41907 & 87323 & 716 \\ 
	24 & 510 & 376 & 39 & 337 & -41886 & 87288 & 737 \\ 
	25 & 517 & 382 & 42 & 340 & -41839 & 87242 & 713 \\ 
	26 & 520 & 387 & 44 & 343 & -41808 & 87200 & 719 \\ 
	27 & 525 & 392 & 47 & 345 & -41730 & 87078 & 711 \\ 
	28 & 526 & 394 & 47 & 347 & -41723 & 87071 & 728 \\ 
	29 & 526 & 396 & 47 & 349 & -41710 & 87046 & 714 \\ 
	30 & 528 & 398 & 47 & 351 & -41695 & 87029 & 726 \\ 
	31 & 531 & 401 & 48 & 353 & -41678 & 87017 & 730 \\ 
	32 & 533 & 404 & 48 & 356 & -41668 & 87009 & 726 \\ 
	33 & 535 & 406 & 48 & 358 & -41657 & 87001 & 738 \\ 
	34 & 539 & 409 & 48 & 361 & -41647 & 87008 & 737 \\ 
	35 & 540 & 410 & 49 & 361 & -41643 & 87009 & 736 \\ 
	36 & 541 & 411 & 50 & 361 & -41638 & 87006 & 743 \\ 
	37 & 544 & 413 & 51 & 362 & -41556 & 86861 & 738 \\ 
	38 & 549 & 417 & 51 & 366 & -41506 & 86797 & 748 \\ 
	39 & 552 & 421 & 51 & 370 & -41464 & 86732 & 740 \\ 
	40 & 552 & 422 & 51 & 371 & -41432 & 86669 & 742 \\ 
	41 & 552 & 424 & 51 & 373 & -41414 & 86633 & 742 \\ 
	42 & 553 & 426 & 51 & 375 & -41390 & 86591 & 757 \\ 
	43 & 555 & 429 & 52 & 377 & -41378 & 86581 & 749 \\ 
	44 & 557 & 431 & 52 & 379 & -41368 & 86576 & 757 \\ 
	45 & 557 & 431 & 52 & 379 & -41368 & 86576 & 755 \\ 
	46 & 557 & 431 & 52 & 379 & -41368 & 86576 & 758 \\ 
	47 & 558 & 432 & 52 & 380 & -41366 & 86578 & 758 \\ 
	48 & 558 & 432 & 52 & 380 & -41366 & 86578 & 759 \\ 
	49 & 558 & 432 & 52 & 380 & -41366 & 86578 & 749 \\ 
	50 & 561 & 434 & 52 & 382 & -41346 & 86559 & 761 \\ 
	51 & 561 & 434 & 52 & 382 & -41346 & 86559 & 762 \\ 
	\hline
	\caption{Numerical results for Dissmann algorithm. Calculations based on \textit{z-scale}, abbreviations \textit{ni-pc} for \textit{non independence pair copula}, \textit{G-pc} for \textit{Gaussian pair copula}.}
	\label{table:application:dissmann}
\end{longtable}

\subsection{Distribution of non-independence copulas in the Euro Stoxx 50}\label{sec:appendix_heatmap}
To visualize the actual truncation levels of the R-vines based on a DAG with at most $k=2$ parents, we consider the distribution of independence pair copulas. Thus, we plot a $52 \times 52$ matrix indicating which pair copulas are the independence copula in the R-vine representation of the DAG $\GGG_2$, see the lower triangular region of Figure \ref{fig:stoxx:heatmap}, created with the R-package \textit{gplots}, see \citet{gplots}. The upper triangular region encodes which pair copulas are set to the independence copula when we use an additional level $\alpha = 0.05$ independence test. We see the sparsity patterns of the corresponding R-vine models and note that each independence pair copula in the lower triangular is also in the upper triangular, where the upper triangular may also have additional independence pair copulas. It also indicates that an independence test based on the d-separation is not sufficient when dealing with non-Gaussian dependency patterns, as a huge number of pair copulas with small Kendall's are not associated with the independence copula upfront.
\begin{figure}[H]
	\centering
	\includegraphics[width=0.99\textwidth,trim={0.0cm 0.0cm 0.0cm 0.0cm},clip]{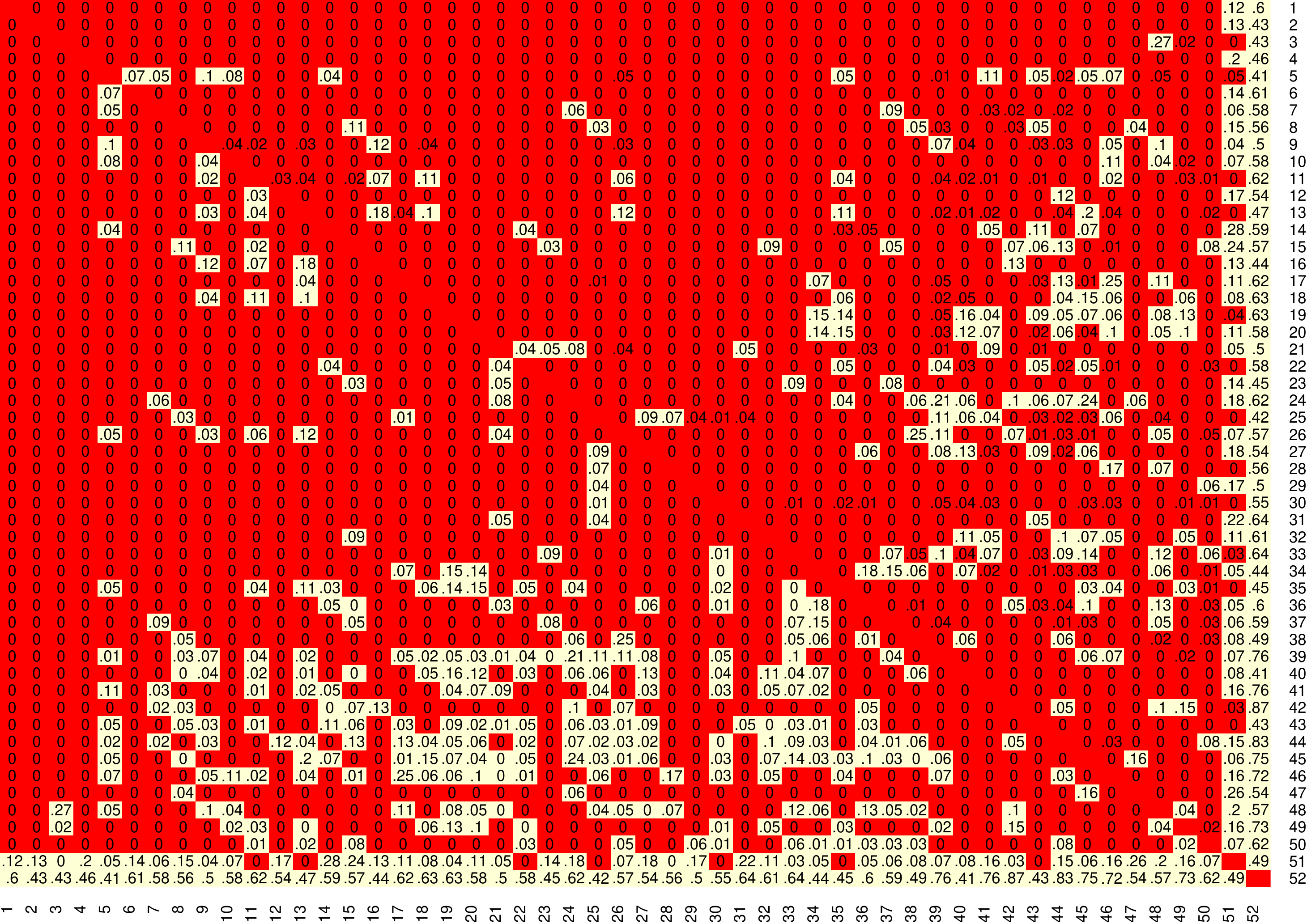}
	\caption{Distribution of independence pair copulas in the R-vine generated by algorithm \texttt{RepresentDAGRVine} for at most $k=2$ parents and absolute values of Kendall's $\tau$ of the corresponding pair copulas. Bright white colour indicates a non independence copula whereas dark red colour indicates an independence copula. The lower triangular describes the R-vine model without additional $\alpha=0.05$ independence test. The last row of the matrix represents the first R-vine tree, the second last the second R-vine tree and so on. The upper triangular represents the same information in transposed form for the R-vine model using an additional $\alpha=0.05$ independence test, i.\,e.\ the last column represents the first R-vine tree, the second last column the second R-vine tree and so on. Thus, the models can be compared along the main diagonal.}
	\label{fig:stoxx:heatmap}
\end{figure}

\newpage

\section{Algorithms}\label{sec:appendix_algorithms}
\subsection{RepresentMarkovTreeRVine}\label{subsec:appendix_representmarkovtreervine}
\IncMargin{0em}
\begin{algorithm}[h]
	\SetAlgoLined
	\SetKwInOut{Input}{input}
	\SetKwInOut{Output}{output}
	\SetKwFunction{Create}{create}
	\SetKwFunction{Calculate}{calculate}
	\SetKwFunction{Define}{define}
	\SetKwFunction{Order}{order}
	\SetKwFunction{Sample}{sample}
	\SetKwFunction{Return}{return}
	\SetKwFunction{List}{list}
	\SetKwFunction{Complete}{complete}
	\SetKwFunction{Set}{set}
	\SetKwFunction{Add}{add}
	\SetKwFunction{Exit}{exit}
	\SetKwFunction{Donothing}{do nothing}
	\Input{DAG $\GGG=\left(V=\left(v_1,\ldots,v_d\right),E\right)$ with topological ordering $v_i >_\GGG v_j$, truncation level $k=\max_{v \in V}\left|\parents\left(v\right)\right|=1$.}
	\Output{R-vine tree sequence $T_1,\ldots,T_{d-1}$ given by a R-vine matrix $M$ and an independence matrix $F \in \left\{0,1\right\}^{d \times d}$, indicating which pair copula families can be set to the independence copula.}
	\BlankLine
	\Set $M = \diag\left(d,\ldots,1\right)$\;
	\Set $F = \left(0\right)^{d \times d}$\;
	\For{$i = d-1$ \KwTo $1$}{
		\Set $M_{d,i}=\parents\left(M_{i,i}\right)$\;
		\Set $F_{d,i} = 1$\;
	}
	\Complete $M$ according to the \textit{proximity condition}\;
	\Return $M,F$\;
	\caption{\texttt{RepresentMarkovTreeRVine}: Construction of a R-vine tree matrix $M$ and independence matrix $F$ obtained from a DAG $\GGG = \left(V,E\right)$ with at most one parent.}
	\label{algorithm:markovtreervineselect}
\end{algorithm}

\subsection{RepresentDAGRVine - Structure estimation}\label{subsec:appendix_representdagrvine_1}
\IncMargin{0em}
\begin{algorithm}[H]
	\SetAlgoLined
	\SetKwInOut{Input}{input}
	\SetKwInOut{Output}{output}
	\SetKwFunction{Create}{create}
	\SetKwFunction{Assign}{assign}
	\SetKwFunction{Calculate}{calculate}
	\SetKwFunction{Define}{define}
	\SetKwFunction{Order}{order}
	\SetKwFunction{Sample}{sample}
	\SetKwFunction{Return}{return}
	\SetKwFunction{Weight}{weight}
	\SetKwFunction{List}{list}
	\SetKwFunction{Complete}{complete}
	\SetKwFunction{Estimate}{estimate}
	\SetKwFunction{Set}{set}
	\SetKwFunction{Add}{add}
	\SetKwFunction{Delete}{delete}
	\SetKwFunction{Skip}{skip}
	\SetKwFunction{Exit}{exit}
	\SetKwFunction{Donothing}{do nothing}
	\Input{DAGs $\GGG_i, i = 1,\dots,k$ with at most $i$ parents, weighting function $g\left(i\right)$.}
	\Output{R-vine matrix $M$ and independence matrix $F$ indicating which pair copulas are the independence copula, truncation level $k'$.}
	\BlankLine
	\Calculate skeletons $\GGG_i^s$ of DAGs $\GGG_i$ for $i=1,\dots,k$\;
	\Create $\HHH=\left(V,E^{\HHH}_1\right) \coloneqq \bigcup_{i=1}^k~\GGG_i^s$\;
	\Set weights $\mu_1\left(v,w\right) = \sum_{i=1}^{k}~g\left(i\right)\mathds{1}_{\left(v,w\right) \in E_i^s}\left(v,w\right)$ for each edge $\left(v,w\right) \in E^{\HHH}_1$\;
	\Calculate maximum spanning tree $T_1=\left(V,E_1^T\right)$ on $\HHH$\;
	\For{$i = 2$ \KwTo $d-1$}{
		\Create full undirected graph $\HHH_i=\left(V_i = E_{i-1},E_i^{\HHH}\right)$\;
		\Delete edges not allowed by the proximity condition\;
		\For{$e \in E_i^{\HHH}$}{
			\eIf{$\mu_1\left(j\left(e\right),\ell\left(e\right)\right) \neq 0$}{
				\Assign DAG weights $ \mu_i\left(e\right)=\mu_1\left(j\left(e\right),\ell\left(e\right)\right)$\;}{
				\eIf{$\condindep{j\left(e\right)}{\ell\left(e\right)}{D\left(e\right)}$ according to d-separation in $\GGG_k$}{
					\Assign independence weight $ \mu_i\left(e\right)=\mu_0$\;}{}}
		}
		\Calculate maximum spanning tree $T_i=\left(V_i,E_i^T\right)$ on $\HHH_i$\;}
	\Create R-vine matrix $M$ from R-vine trees $T_1,\ldots,T_{d-1}$\;
	\caption{\texttt{RepresentDAGRVine}: Calculation of an R-vine tree matrix $M$ obtained from DAGs $\GGG_1,\ldots,\GGG_k$.}
	\label{algorithm:dagvineselect1}
\end{algorithm}

\newpage

\subsection{RepresentDAGRVine - Inference of independence copulas}\label{subsec:appendix_representdagrvine_2}
\IncMargin{0em}
\begin{algorithm}[H]
	\SetAlgoLined
	\SetKwInOut{Input}{input}
	\SetKwInOut{Output}{output}
	\SetKwFunction{Create}{create}
	\SetKwFunction{Assign}{assign}
	\SetKwFunction{Calculate}{calculate}
	\SetKwFunction{Define}{define}
	\SetKwFunction{Order}{order}
	\SetKwFunction{Sample}{sample}
	\SetKwFunction{Return}{return}
	\SetKwFunction{Weight}{weight}
	\SetKwFunction{List}{list}
	\SetKwFunction{Complete}{complete}
	\SetKwFunction{Estimate}{estimate}
	\SetKwFunction{Set}{set}
	\SetKwFunction{Add}{add}
	\SetKwFunction{Delete}{delete}
	\SetKwFunction{Skip}{skip}
	\SetKwFunction{Exit}{exit}
	\SetKwFunction{Donothing}{do nothing}
	\Input{R-vine matrix $M$ and DAG $\GGG_k$.}
	\Output{Independence matrix $F$ indicating which pair copulas are the independence copula and truncation level $k'$.}
	\BlankLine
	\Set $F = \left(0\right)^{d \times d}$\;
	\For{$i = d$ \KwTo $2$}{
		\For{$j = i-1$ \KwTo $1$}{
			\eIf{$\condindep{M_{i,j}}{M_{j,j}}{M_{i+1,j},\ldots,M_{d,j}}$ according to d-separation in $\GGG_k$}{
				\Set $F_{i,j} = 0$\;}{
				\Set $F_{i,j} = 1$\;}
		}
	}
	\Set $k^\prime$ such that in the R-vine trees $T_{k^\prime+1},\ldots,T_{d-1}$ only the independence copula occurs\;
	\caption{\texttt{RepresentDAGRVine}: Calculation of an independence matrix $F$ and truncation level $k^\prime$ for an R-vine matrix obtained from Algorithm \ref{algorithm:dagvineselect2}.}
	\label{algorithm:dagvineselect2}
\end{algorithm}

\end{document}